\newcommand{\field}[1]{\mathbb{#1}}
\newcommand{\N}{\field{N}}
\newcommand{\R}{\field{R}}
\newcommand{\C}{\field{C}}
\newcommand{\II}{\mathcal I}
\newcommand{\JJ}{\mathcal J}
\newcommand{\XX}{\mathfrak X}
\newcommand{\HH}{\mathscr H}
\newcommand{\KK}{\mathscr K}
\newcommand{\LL}{\mathscr L}
\newcommand{\eps}{\varepsilon}
\newcommand{\ph}{\varphi}
\newcommand{\ran}{\mathrm{Ran}}
\renewcommand{\ker}{\mathrm{Ker}}
\renewcommand{\d}[1]{\textup{d}#1}
\newcommand{\sgn}{\operatorname{sgn}}
\newcommand{\supp}{\operatorname{supp}}
\newcommand{\esssup}{\operatorname*{ess\;sup}}
\newcommand{\Ima}{\operatorname{Im}}
\newcommand{\Rea}{\operatorname{Re}}
\newcommand{\rst}{\! \upharpoonright \!} 
\DeclareMathOperator*{\limeps}{\lim\limits_{\eps \to 0}}
\DeclareMathOperator*{\limepsr}{\lim_{\eps \to 0}}
\newtheorem{theorem}{Theorem}[section]
\newtheorem{lemma}[theorem]{Lemma}
\newtheorem{prop}[theorem]{Proposition}
\newtheorem{kor}[theorem]{Corollary}
\date{\today}
\def\Biggs#1{{\hbox{$\left#1\vbox to20pt{}\right.$}}} 
\font\notefont=cmsl8 \pagestyle{myheadings}
\title{From Short-Range to Contact Interactions in Two-dimensional Many-Body Quantum Systems}
\author{Marcel~Griesemer\footnote{marcel.griesemer@mathematik.uni-stuttgart.de} \;and Michael~Hofacker\footnote{michael.hofacker@mathematik.uni-stuttgart.de}\\  
\small Fachbereich Mathematik, Universit\"at Stuttgart, D-70569 Stuttgart, Germany}  
\date{}
\begin{document}
\maketitle

\setcounter{figure}{0}

\numberwithin{equation}{section}
\newpagestyle{mypage}{%
      \setfoot{}{\thepage}{}
}

\pagestyle{mypage}
\setcounter{page}{1}
\newcommand\Scalefrac[2]{\scalebox{0.85}{$\dfrac{#1}{#2}$}}
\newcommand\sym[0]{\scalebox{0.7}{$\mspace{-6mu}\raisebox{-2mm}{\textup{sym}}$}}
\newcommand\huger[0]{\scalebox{5}{$)$}}
\renewcommand{\abstractname}{Abstract}
\renewcommand{\thefootnote}{\fnsymbol{footnote}}

\begin{abstract}
\noindent Quantum systems composed of $N$ distinct particles in $\R^2$ with two-body contact interactions of TMS type are shown to arise as limits - in the norm resolvent sense - of Schr\"odinger operators with suitably rescaled pair potentials.  
\end{abstract} 

\section{Introduction} \label{sec1}
\normalfont
\normalsize

Many-particle quantum systems with short-range interactions are often described by simplified models with zero-range (contact) interactions.  Effective models of low energy nuclear physics, the Lieb-Liniger model for the 1d Bose gas, and the Fermi polaron model for an impurity within an ideal Fermi gas  are well-known models of this type \cite{Kolck1999,LiLi,Parish2013}.  The understanding of these models is that zero-range interactions provide an idealized description of short-ranged interparticle forces whose details are not known or considered irrelevant.  In the present paper we fully justify this idealization for 2d many-particle systems with (spin-independent) two-body forces. We prove norm resolvent convergence for suitably rescaled Schr\"odinger operators towards TMS Hamiltonians \cite{Figari}. This mathematically justifies the use of TMS Hamiltonians for describing contact interactions, and, moreover, provides a new way of defining them. 

In the Hilbert space
\begin{align}
     \HH:= L^2\left(\R^{2N}, \textup{d}(x_1,\ldots,x_N) \right), \nonumber
\end{align}
we consider Schr\"odinger operators of the form
\begin{align}\label{DefHeps}
   H_\eps:= \sum\limits_{i=1}^{N} (-\Delta_{x_i}/m_i)  - \sum_{\sigma\in \II}  g_{\eps, \sigma} \, V_{\sigma,\eps}(x_j-x_i), \qquad \quad \eps>0,
\end{align}
where $m_i>0$ is the mass of the $i$th particle, $\II$ denotes the set of all pairs $\sigma=(i,j)$, $i<j$, and 
\begin{equation} \label{Vscale}
     V_{\sigma,\eps}(r):=\eps^{-2}\,V_{\sigma}(r/\eps), \qquad \quad \eps>0.
\end{equation}
We assume $V_{\sigma}\in L^1\cap L^2(\R^2)$, $V_\sigma(-r)=V_\sigma(r)$ as well as some mild decay of $V_\sigma(r)$ as $|r|\to\infty$.
For $H_\eps$ to have a limit as $\eps\to 0$, the coupling constant $g_{\eps, \sigma}$ must have an asymptotic behavior, as $\eps\to 0$, adjusted to the space dimension. For $d=2$ we may set  
\begin{equation}\label{g-def}
    -\frac{1}{g_{\eps, \sigma}} = \mu_\sigma\big [a_\sigma\ln(\eps) + b_{\sigma} \big]+o(1)
\end{equation}
with $a_\sigma,b_\sigma\in \R$, $a_\sigma>0$. Here the reduced mass $\mu_\sigma$ of the pair $\sigma=(i,j)$ has been factored out. We assume that
\begin{equation}\label{V-cond}
   a_\sigma \geq \frac{1}{2\pi}\int V_{\sigma}(r)\,\d{r}.
\end{equation}
If equality holds in \eqref{V-cond}, we will see that the two-body interaction $V_\sigma$ gives rise to a non-trivial contact interaction, whose strength is determined by the parameter $b_\sigma$. In the case of inequality there is no contribution from $V_\sigma$. This is well-known for the case of $N=1$ particles, where our problem was solved long ago \cite{SolvableModels}. We use $\JJ\subset\II$ to denote the subset of pairs for which equality holds in \eqref{V-cond}.

Our main result asserts norm resolvent convergence of $H_\eps$ as $\eps\to 0$. In addition, we give an explicit expression for the resolvent of the limiting Hamiltonian, a characterization of the domain in terms of so-called TMS boundary conditions at the collision planes $\{x\in \R^{2N}\mid x_i=x_j\}$, and we show, in the case of equal masses $m_j=1$, that our limiting Hamiltonian agrees with the Hamiltonian $\textup{H}_{\beta}$ (for suitable $\beta=(\beta_{\sigma})_{\sigma \in \II}$)  of Dell'Antonio et al. (see \cite[Eqs. (5.3)-(5.4)]{Figari}).
To describe these results in detail, we need some auxiliary spaces and operators. 

Let  
\begin{equation}
\begin{aligned}
& \XX := \bigoplus_{\sigma \in \JJ} \XX_{\sigma},  \\
& \XX_{\sigma=(i,j)} := L^2\left(\R^{2(N-1)}, \textup{d}\left(R,x_1,...\widehat{x}_i...\widehat{x}_j...,x_N\right)\right), \label{DefHred}
\end{aligned}
\end{equation} 
where the hat, as in $\widehat{x}_i$, indicates omission of a variable.
Let $H_0 = \sum\limits_{i=1}^{N} (-\Delta_{x_i}/m_i)$ on $D(H_0)=H^2(\R^{2N})$ and let 
$$
    T_\sigma:D(T_{\sigma})\to \XX_{\sigma}
$$
denote the $H_0$-bounded trace operator, which, for smooth functions $\psi$, is given by
\begin{equation}\label{trace}
     (T_\sigma\psi)(R,x_1,...\widehat{x}_i...\widehat{x}_j...,x_N) = \psi(x_1,\ldots,x_N)\big|_{x_i=x_j=R}, \qquad \qquad \quad \sigma=(i,j).
\end{equation}
Then $G(z)_{\sigma} = T_\sigma R_0(z)$, with $R_0(z)=(H_0+z)^{-1}$, is a bounded operator from $\HH$ to $\XX_{\sigma}$. 
This operator is closely related to the Green's function of $H_0+z$, see, e.g., \eqref{G_(k,l)(z)*Fourier}, below, and the letter $G$ serves to remind us of the connection. Let $T$ and $G(z) = TR_0(z)$ denote the operator-valued vectors with components  $ T_\sigma$ and $G(z)_{\sigma}$, respectively. 
Our main result is the following:\footnote{Note our definition of the resolvent set at the end of this section.}

\begin{theorem}\label{theo1}
Suppose, for all $\sigma\in \II$, that $V_{\sigma} \in L^1\cap L^2(\R^2)$, $V_\sigma(-r) = V_\sigma(r)$, and there exists some $s>0$ such that $\int |r|^{2s} \left|V_{\sigma}(r)\right| \, \d{r}< \infty$. Let $H_{\eps}$ be defined by \eqref{DefHeps}-\eqref{V-cond}. Then, as $\eps\to 0$, $H_{\eps}$ converges in norm resolvent sense to a self-adjoint, semibounded operator $H$.
The resolvent of $H$, for $z \in \rho(H_0)\cap \rho(H)$, obeys
\begin{align} \label{Hres}
(H+z)^{-1} = R_0(z) +  G(\overline{z})^{*} \Theta(z)^{-1} G(z)
\end{align}
with some invertible and unbounded operator $\Theta(z):D\subset \XX\to \XX$, whose domain is independent of $z$.
\end{theorem}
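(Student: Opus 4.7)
The plan is to apply a multi-channel Birman-Schwinger (Konno-Kuroda) resolvent identity to $H_\eps$ and track each factor through the limit $\eps\to 0$. For each $\sigma\in\II$ factor $V_\sigma = u_\sigma v_\sigma$ with $u_\sigma=\sgn(V_\sigma)|V_\sigma|^{1/2}$, $v_\sigma=|V_\sigma|^{1/2}$, and assemble the rescaled multiplications $u_{\sigma,\eps}(x_j-x_i)$, $v_{\sigma,\eps}(x_j-x_i)$ into column operators $\hat{U}_\eps,\hat{V}_\eps:\HH\to\bigoplus_{\sigma\in\II}\HH$. Since $g_{\eps,\sigma}>0$ for small $\eps$ by \eqref{g-def}, the standard computation yields, for $z$ in a common right half-line,
\begin{equation*}
(H_\eps + z)^{-1} = R_0(z) + R_0(z)\,\hat{U}_\eps^{*}\,\bigl[G_\eps^{-1} - \hat{V}_\eps R_0(z) \hat{U}_\eps^{*}\bigr]^{-1}\,\hat{V}_\eps R_0(z),
\end{equation*}
with $G_\eps=\diag(g_{\eps,\sigma})_{\sigma\in\II}$. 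After a unitary dilation by $\eps$ in each pair's relative coordinate, the goal is to show that every factor on the right converges in operator norm to its counterpart in \eqref{Hres}.

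The core step is the asymptotic analysis of the Birman-Schwinger kernel $M_\eps(z)_{\sigma\tau}:=v_{\sigma,\eps}R_0(z)u_{\tau,\eps}$ together with $G_\eps^{-1}$. Passing to Jacobi coordinates adapted to pair $\sigma$ and applying in the corresponding relative variable the dilation $r_\sigma\mapsto\eps r_\sigma$, the short-distance expansion $K_0(x)=-\ln(x/2)-\gamma+O(x^2\ln x)$ of the Macdonald function appearing in the 2d free Green's function isolates, on each diagonal block, a rank-one singular part of the form $-(\ln\eps)/(2\pi)\,|v_\sigma\rangle\langle u_\sigma|\otimes S_\sigma(z)$ with $S_\sigma(z)$ an $\eps$-independent bounded operator naturally defined on $\XX_\sigma$, plus a remainder converging to an $\eps$-independent limit in operator norm. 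The moment assumption $\int|r|^{2s}|V_\sigma|<\infty$ is used precisely here to control the remainder uniformly via Hilbert-Schmidt bounds. Off-diagonal blocks $M_\eps(z)_{\sigma\tau}$, $\sigma\neq\tau$, carry no logarithmic singularity and converge in norm to finite limits linking the distinct collision planes $\{x_i=x_j\}$ and $\{x_k=x_l\}$. Combined with \eqref{g-def} and \eqref{V-cond}, the divergent $-\mu_\sigma a_\sigma\ln\eps$ in $[G_\eps^{-1}]_{\sigma\sigma}$ exactly cancels the rank-one piece for $\sigma\in\JJ$, leaving a finite contribution parametrised by $b_\sigma$; for $\sigma\in\II\setminus\JJ$ the diagonal blows up, and a Schur-complement argument in $1/\ln\eps$ decouples those channels, so that only $\JJ$ survives in the limit.

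For the outer factors, the key observation is that after the dilation $\hat{V}_\eps$ localises onto the collision plane of each pair with total weight $\int v_\sigma u_\sigma = \int V_\sigma = 2\pi\mu_\sigma a_\sigma$ for $\sigma\in\JJ$. Rearranging the resolvent identity so as to display the trace $T$ explicitly, and using the $H_0$-boundedness of $T_\sigma$ together with the moment assumption, one shows that the outer maps $R_0(z)\hat U_\eps^*$ and $\hat V_\eps R_0(z)$, suitably identified with operators into $\XX$ and $\HH$, converge in operator norm to $G(\bar z)^{*}$ and $G(z)$. Combined with the previous step this gives, on a real half-line, $(H_\eps+z)^{-1}\to R_0(z)+G(\bar z)^{*}\Theta(z)^{-1}G(z)$ in operator norm, with $\Theta(z):D\subset\XX\to\XX$ defined as the norm limit of the renormalised Birman-Schwinger matrix on the channels $\sigma\in\JJ$; a first-resolvent-type identity in $z$ shows $D$ is independent of $z$.

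The main obstacle will be the upgrade from strong or pointwise to \emph{operator-norm} convergence of the rescaled Birman-Schwinger kernel and the outer factors, together with the clean extraction of the rank-one log singularity so that the remainder is $o(1)$ in norm uniformly in $z$ on the chosen half-line. The channel decoupling for $\II\setminus\JJ$ must be carried out in a way compatible with operator-norm bounds rather than mere matrix-element convergence. The final step is to identify the limit as a self-adjoint, semibounded operator $H$: once $\Theta(\bar z)=\Theta(z)^{*}$ on $D$ and a first-resolvent-type identity hold, the right-hand side of \eqref{Hres} satisfies the resolvent equation and is symmetric, and by standard criteria coincides with $(H+z)^{-1}$ for a unique self-adjoint, semibounded operator $H$; norm resolvent convergence of $H_\eps$ to $H$ then follows automatically.
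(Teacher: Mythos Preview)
Your overall strategy matches the paper's: Konno--Kuroda resolvent identity, dilation in the relative coordinate, logarithmic expansion of the 2d Green's function to extract the rank-one singularity on the diagonal, cancellation against $g_{\eps,\sigma}^{-1}$, and decoupling of the channels $\sigma\in\II\setminus\JJ$. The outer-factor convergence $A_{\eps,\sigma}R_0(z)\to S(z)_\sigma$ and the diagonal analysis go essentially as you describe.

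The genuine gap is your treatment of the off-diagonal blocks. You assert that $M_\eps(z)_{\sigma\tau}$ for $\sigma\neq\tau$ ``converge in norm to finite limits'', but this is precisely where the paper identifies the main difficulty and it does \emph{not} follow from the ingredients you list. The issue is a Sobolev mismatch: $R_0(z)A_{\eps,\nu}^{*}$ converges in $\LL(\widetilde\XX_\nu,\HH)$, and with some work in $\LL(\widetilde\XX_\nu,H^{1-\delta})$, while $B_{\eps,\sigma}\to |u_\sigma\rangle T_\sigma$ only in $\LL(H^{1+\delta},\widetilde\XX_\sigma)$ because of the trace. These do not compose to give norm convergence of $B_{\eps,\sigma}R_0(z)A_{\eps,\nu}^{*}$. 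The paper circumvents this in three steps: (i) it proves \emph{uniform boundedness} of the off-diagonal blocks in $\eps,z>0$ via explicit Green's-function kernel estimates and a Schur test (Proposition~4.4); (ii) it introduces a spatial cutoff $\chi_{\sigma\nu,c}$ away from the collision plane of the pair $\nu$, under which the range of $R_0(z)A_{\eps,\nu}^{*}$ becomes arbitrarily smooth and convergence of the cut-off product follows (Lemma~4.7, Proposition~4.8); (iii) it never proves convergence of $\Lambda_\eps(z)_{\textup{off}}$ itself, but only of the product $(\Lambda_\eps(z)_{\textup{diag}})^{-1}\Lambda_\eps(z)_{\textup{off}}$, removing the cutoff by exploiting that $(\Lambda_\eps(z)_{\textup{diag}})^{-1}$ is small at large momenta $\underline P$ uniformly in $\eps$ (Lemma~4.2). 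Your sketch contains none of this machinery, and a direct norm-convergence argument for the off-diagonal blocks is not available.

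A minor point: the paper identifies the limit as the resolvent of a self-adjoint $H$ via Trotter--Kato after checking that $\ran R(z)$ is dense (using $C_0^\infty(\R^{2N}\setminus\Gamma)\subset\ker T$), rather than by directly verifying the resolvent equation for the limiting expression; and the extension of \eqref{Hres} to all $z\in\rho(H_0)\cap\rho(H)$ is done by invoking Proposition~5.1 and an abstract Krein-type result, which supplies the claim that $D(\Theta(z))$ is $z$-independent.
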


\noindent\emph{Remarks:} 
\begin{enumerate}
\item $\Theta(z)$ is an operator matrix with entries $\Theta(z)_{\sigma\nu}:\XX_{\nu}\to\XX_{\sigma}$. For $\sigma\neq\nu$ these entries are bounded operators given by $\Theta(z)_{\sigma\nu} = -T_\sigma G(\overline{z})_{\nu}^{*}=-T_\sigma (T_{\nu}R_0(\overline{z}))^{*}$. The unbounded operators $\Theta(z)_{\sigma\sigma}$ are given explicitly in Equation \eqref{DefTheta(z,P)diag}, below.
\item Antisymmetric wave functions from $H^2(\R^{2N})$ belong to the kernel of all trace operators $T_\sigma$. Hence for identical particles, by \eqref{Hres}, $(H+z)^{-1} = R_0(z)$ on the subspace $\HH_{\rm anti}\subset\HH$ of antisymmetric wave functions. In fact, there are no contact interactions among identical, spin-aligned fermions in space dimensions $d\geq 2$ \cite{GH2021}. Theorem \ref{theo1} implies $\lim_{\eps\to 0}(H_{\eps}+z)^{-1} = R_0(z)$ on $\HH_{\rm anti}$ and its proof gives a rate of convergence \cite{GH2021}. 
\end{enumerate}

The main novelty in Theorem \ref{theo1}, compared to previous results of similar type, is that convergence is established in norm resolvent sense. Norm resolvent convergence implies convergence of the spectrum, which is not true for the weaker strong resolvent convergence \cite{Teschl}. In the present case, where $\sigma(H_\eps) = [\Sigma_\eps,\infty)$, the norm resolvent convergence implies $\sigma(H) = [\Sigma,\infty)$ with $\Sigma=\lim_{\eps\to 0}\Sigma_\eps$.

The analog of  Theorem \ref{theo1} in the case $N=1$ is well-known and true without a condition of the type \eqref{V-cond}. But if $N>1$ this condition is necessary: if \eqref{V-cond} is not satisfied for some pair $\sigma$, then one may expect strong resolvent convergence at best. To see this, consider a two-particle system with pair potential $V$ and reduced mass $\mu=1$. If $0<a<\frac{1}{2\pi}\int V(r)\,\d{r}$ and $-g_\eps^{-1} = a\ln(\eps)+b$, then the Hamiltonian in the center of mass frame has a negative eigenvalue running towards $-\infty$ as $\eps\to 0$ \cite{Simon76}. Due to the center of mass motion, the Hamiltonian $H_\eps$ then has essential spectrum filling the entire real axis in the limit $\eps\to 0$. This is not compatible with norm resolvent convergence towards a semibounded Hamiltonian.

By the following corollary, domain vectors of $H$ satisfy the so-called TMS boundary conditions at the collision planes $\{x\in \R^{2N} \mid x_i=x_j\}$.

\begin{kor}\label{cor2}
Under the assumptions of Theorem~\ref{theo1}, a vector $\psi\in \HH$ belongs to $D(H)$ if and only if the following holds:
For some (and hence all) $z\in\rho(H_0)\cap \rho(H)$ there exist $\psi_0\in D(H_0)$ and $w \in D$ such that 
\begin{equation}\label{TMS1}
   \psi = \psi_0 + G(\overline{z})^{*} w   
\end{equation}
and \vspace*{-2mm}
\begin{equation}\label{TMS2}
    T\psi_0 =  \Theta(z)w.   
\end{equation}
In this case, \vspace*{-3mm}
\begin{align}
(H+z)\psi = (H_0+z)\psi_0. \label{Hact}
\end{align}
The vectors $\psi_0$ and $w$ are uniquely determined by $\psi\in D(H)$ and  $z\in\rho(H_0)\cap \rho(H)$.
\end{kor}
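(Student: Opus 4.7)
The plan is to extract the decomposition directly from the resolvent formula \eqref{Hres}, reading it as a Krein-type parametrization of $D(H)$. The crucial input from Theorem \ref{theo1} is that $\Theta(z):D\to\XX$ is a bijection, so that $\Theta(z)^{-1}G(z)\phi\in D$ and $G(\overline z)^{*}\Theta(z)^{-1}G(z)\phi$ is defined for every $\phi\in\HH$.

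For the forward direction, given $\psi\in D(H)$, I would set $\phi := (H+z)\psi\in\HH$, $\psi_0 := R_0(z)\phi\in D(H_0)$, and $w := \Theta(z)^{-1}G(z)\phi\in D$. Then \eqref{TMS1} is just \eqref{Hres} applied to $\phi$, while
\begin{equation*}
T\psi_0 \;=\; TR_0(z)\phi \;=\; G(z)\phi \;=\; \Theta(z)\Theta(z)^{-1}G(z)\phi \;=\; \Theta(z)w
\end{equation*}
gives \eqref{TMS2}. The action formula \eqref{Hact} is immediate from $\phi=(H_0+z)R_0(z)\phi=(H_0+z)\psi_0$.

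For the converse, given the decomposition, put $\phi := (H_0+z)\psi_0\in\HH$. Applying $T$ to $\psi_0$ and using $TR_0(z)(H_0+z)=T$ on $D(H_0)$ together with \eqref{TMS2} yields $G(z)\phi = T\psi_0 = \Theta(z)w$, so $w=\Theta(z)^{-1}G(z)\phi$. Substituting into \eqref{Hres} now gives $(H+z)^{-1}\phi = R_0(z)\phi+G(\overline z)^{*}w = \psi_0+G(\overline z)^{*}w = \psi$, proving $\psi\in D(H)$ and \eqref{Hact}. Uniqueness is then transparent: $\psi_0 = R_0(z)(H+z)\psi$ is forced by \eqref{Hact} since $R_0(z)$ is a bijection $\HH\to D(H_0)$, and $w = \Theta(z)^{-1}T\psi_0$ is forced by \eqref{TMS2} and the injectivity of $\Theta(z)$.

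There is essentially no obstacle here; the corollary is a purely algebraic unwinding of Theorem~\ref{theo1}. The only points requiring attention are that $\Theta(z)^{-1}$ is applied only to elements of $\XX$ lying in its range (all of $\XX$, by the invertibility statement in Theorem~\ref{theo1}) and that the manipulation $TR_0(z)(H_0+z)\psi_0 = T\psi_0$ is legitimate because $\psi_0\in D(H_0)\subset D(T)$. The ``for some, hence all'' clause needs no extra argument, since the forward construction produces a valid pair $(\psi_0,w)$ for every admissible $z$.
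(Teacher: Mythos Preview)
Your proof is correct and follows essentially the same route as the paper's: define $\phi=(H+z)\psi$, $\psi_0=R_0(z)\phi$, and $w=\Theta(z)^{-1}G(z)\phi$ (the paper writes $w=\Theta(z)^{-1}T\psi_0$, which is the same thing), then read off both directions from the Krein formula \eqref{Hres}. The uniqueness argument is also identical.
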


Choosing $w=0$ in Corollary \ref{cor2}, we see that $D(H_0) \cap \ker\,T \subset D(H)$ and that $H=H_0$ on $D(H_0) \cap \ker\,T$. Thus $H$ is a self-adjoint extension of $H_0\rst \ker\,T$. It is well-known that a Krein formula, like \eqref{Hres}, is the characteristic equation for the resolvent of all such extensions \cite{Pos2001,Pos2008}. 

Recently, in a study of the 2d stochastic heat equation, Gu, Quastel and Tsai have derived a result very similar to Theorem \ref{theo1} for $N$ identical particles \cite{Quastel2019}. In  \cite{Quastel2019} the two-body potentials are compactly supported smooth functions and convergence in strong resolvent sense is established. In the case of bosons, Gu et al.~give a formula for the resolvent that is similar to the one in Theorem~\ref{theo1} and the subsequent remark.   In one space dimension, results similar to Theorem~\ref{theo1} for $N=3$ quantum particles are obtained in \cite{Basti2018}, and for arbitrary $N\geq 1$ in \cite{GHL2019}. The well-known Lieb-Liniger model  with repulsive $\delta$-interactions is derived from a trapped 3d Bose gas with two-body potentials in \cite{SeiYin}. TMS Hamiltonians like $H$ in Theorem \ref{theo1} have also been described as resolvent limits of $N$-body Hamiltonians, where the regularized two-body contact interaction is an integral operator, rather than a potential, and the regularization is achieved by  an ultraviolet cutoff \cite{Figari,DR,GriesemerLinden2} or a reversed heat flow \cite{Turgut2013}. In these cases the convergence is easier to establish than in the case studied here. Nevertheless, all previous approximation results of this kind in 2d with $N\geq 2$ particles establish \emph{strong} resolvent convergence only. In three space dimensions, TMS Hamiltonians for $N\geq 3$ bosons are symmetric but not self-adjoint, and all (non-trivial) self-adjoint extensions are unbounded below \cite{MinlosFaddeev1, MinlosFaddeev2}. A possible solution to this problem in the case of $N= 3$ bosons is worked out in the recent paper  \cite{basti2021}, where a semi-bounded self-adjoint Hamiltonian containing a three-body contact interaction is constructed.

To prove Theorem \ref{theo1}, we further develop the methods and tools introduced in our previous paper \cite{GHL2019}. The key elements along with some auxiliary spaces and tools are described in Section \ref{sec2}. Sections 3 and 4 provide all preparations needed for the proof of Theorem \ref{theo1}, which is given in Section 5. In addition, we derive a lower bound on $\sigma(H)$ in Section 5. In Section 6 we compute the quadratic form of the Hamiltonian $H$ and we show it agrees, in the case of equal masses, with a quadratic form derived by Dell'Antonio et al. \cite{Figari}, and which, in recent years, has become a standard starting point for investigations of TMS Hamiltonians. 
  
We conclude this introduction with some remarks on our notations and with the proof of Corollary \ref{cor2}.  In this paper the resolvent set $\rho(H)$ of a closed operator $H$ is defined as the set of
$z \in \C$ for which $H + z: D(H) \subset \HH \rightarrow \HH$ is a bijection. This differs by a minus sign from the
conventional definition and it means that the spectrum $\sigma(H)$ is the complement of $-\rho(H)$.
The $L^2$-norm will be denoted by $\|\cdot\|$, without index, while all other
norms carry the space as an index, as e.g. in $\|V\|_{L^1}$.

\begin{proof}[Proof of Corollary~\ref{cor2}]
In this proof, $T : D(H_0)\to \XX$ and $G(z) : \HH\to \XX$ are the operator-valued vectors defined in terms of the components $T_\sigma$ and $G(z)_{\sigma}$. We know from \cite{CFP2018} and Proposition \ref{prop5.1} that a point $z\in \rho(H_0)$ belongs to $\rho(H)$ if and only if $\Theta(z)$ has a bounded inverse.
 
Suppose $\psi\in D(H)$ and $z\in\rho(H_0)\cap \rho(H)$. Define $\ph := (H+z)\psi$, $\psi_0:=R_0(z)\ph$, and $w:=\Theta(z)^{-1} T\psi_0$.
Then, by \eqref{Hres},
\begin{align*}
   \psi = (H+z)^{-1}\ph &= R_0(z)\ph +  G(\overline{z})^{*} \Theta(z)^{-1} G(z)\ph\\
   &= \psi_0 +  G(\overline{z})^{*}w. \nonumber
\end{align*}

Conversely, if \eqref{TMS1} and \eqref{TMS2} hold for some $z\in\rho(H_0)\cap \rho(H)$, let $\ph:=(H_0+z)\psi_0$. Then $T\psi_0 = T R_0(z)\ph = G(z)\ph$ and $w = \Theta(z)^{-1} T\psi_0 = \Theta(z)^{-1}G(z)\ph$. Hence, by \eqref{TMS1} and \eqref{Hres},
\begin{align*}
    \psi = \psi_0 + G(\overline{z})^{*} w
    &= R_0(z)\ph + G(\overline{z})^{*} \Theta(z)^{-1} G(z)\ph\\
    &= (H+z)^{-1}\ph \in D(H). \nonumber
\end{align*}
It follows that $(H+z)\psi =\ph= (H_0+z)\psi_0$. In particular, $\psi_0= (H_0+z)^{-1}(H+z)\psi$ is uniquely determined by $\psi$, and, since $\Theta(z)$ is invertible, it follows from \eqref{TMS2} that $w\in D$ is unique as well.
\end{proof}

\section{Auxiliary operators and strategy of the proof}
\label{sec2}

The proof of Theorem \eqref{theo1} starts with the new expression \eqref{Hepsnew}, below, for $H_\eps$, which allows for the explicit representation of the resolvent in terms of a generalized Konno-Kuroda formula, see Equation~\eqref{Krein}.  We now describe the various operators occurring in \eqref{Hepsnew} and \eqref{Krein}. To this end, we need the auxiliary Hilbert spaces 
\begin{align}
\widetilde{\XX}&:= \bigoplus_{\sigma \in \II} \widetilde{\XX}_{\sigma}, \nonumber \\
\widetilde{\XX}_{\sigma=(i,j)} &:= L^2\left(\R^{2N}, \textup{d}\left(r,R,x_1,...\widehat{x}_i...\widehat{x}_j...,x_N \right) \right). \label{HTilde}
\end{align}
The integration  variables $r$ and $R$ in \eqref{HTilde} correspond to the relative and center of mass coordinates 
\begin{align}
r=r_{\sigma}:=x_j-x_i, \qquad R=R_{\sigma}:= \frac{m_ix_i+m_jx_j}{m_i+m_j} \label{DefrR}
\end{align}
of the particles constituting the pair $\sigma=(i,j)$. This change of coordinates is implemented unitarily by the operator $\KK_{(i,j)}: \HH \rightarrow \widetilde{\XX}_{(i,j)}$ defined by
\begin{eqnarray} \label{DefK}
\lefteqn{\big(\KK_{(i,j)}\Psi \big)\left(r,R,x_1,...\widehat{x}_i...\widehat{x}_j...,x_N\right)} \nonumber \\
&&:= \Psi\Bigg(x_1,...,x_{i-1},R-\frac{m_j r}{m_i+m_j},x_{i+1},...,x_{j-1},R+\frac{m_i r}{m_i+m_j}, x_{j+1},..., x_{N}\Bigg).
\end{eqnarray}
The adjoint thereof is the operator $\KK_{(i,j)}^{*}: \widetilde{\XX}_{(i,j)} \rightarrow \HH$ with
\begin{align}
\big(\KK_{(i,j)}^{*} \Psi\big)(x_1,...,x_{N})=  \Psi\left(x_j-x_i,\frac{m_ix_i+m_jx_j}{m_i+m_j} ,x_{1},...\widehat{x}_i...\widehat{x}_j...,x_N \right). \label{DefK*}
\end{align}
Let $U_\eps\in \LL(L^2(\R^2))$ denote the unitary rescaling that, on $\LL(\widetilde{\XX}_{(i,j)})$, is given by
\begin{align}
 \left(U_{\eps} \Psi\right)(r,\underline{X}):= \eps \,\Psi(\eps r, \underline{X}), \qquad \qquad \quad \underline{X}=(R,x_1,...\widehat{x}_i...\widehat{x}_j...,x_N) ,  \label{DefUeps}
\end{align}
and let
\begin{align*}
      v_\sigma(r) &= |V_\sigma(r)|^{1/2}, \\
      u_\sigma(r) &= J_\sigma |V_\sigma(r)|^{1/2},\qquad J_\sigma:=\sgn(V_\sigma),
\end{align*}
so that $V_\sigma = u_\sigma v_\sigma$. Then, in terms of the above operators, we define new operators $A_{\eps, \sigma},B_{\eps, \sigma}: D(A_{\eps, \sigma}) \subseteq \HH \rightarrow \widetilde{\XX}_{\sigma}$ by
\begin{align}
A_{\eps, \sigma} &:= (v_{\sigma} \otimes 1) \,\eps^{-1} U_{\eps} \KK_{\sigma}, \label{Aepsijnew}\\
B_{\eps, \sigma} &:= (u_{\sigma} \otimes 1) \,\eps^{-1} U_{\eps} \KK_{\sigma} = J_\sigma A_{\eps, \sigma}, \label{Bepsijnew}
\end{align}
where $D(A_{\eps, \sigma})$ is determined by the domain of the multiplication operator $v_{\sigma} \otimes 1$. Obviously,  $A_{\eps, \sigma}$ and $B_{\eps, \sigma}$ are densely defined and closed. These operators allow us to write the two-body interaction in the form 
\begin{align}
      V_{\sigma,\eps}(x_j-x_i) = \KK_{\sigma}^{*}(V_{\sigma,\eps} \otimes 1) \KK_{\sigma} = A_{\eps, \sigma}^{*} B_{\eps, \sigma} \label{Potdecomp}
\end{align}
and therefore
\begin{align} \label{Hepsnew}
        H_{\eps}=H_0 - \sum_{\sigma \in \II} g_{\eps, \sigma} A_{\eps, \sigma}^{*} B_{\eps, \sigma}, \qquad \quad \eps>0.
\end{align}
Since $V_\sigma\in L^2(\R^2)$, it is clear from \eqref{Potdecomp} that $A_{\eps, \sigma}^{*} B_{\eps, \sigma}$ as well as $A_{\eps, \sigma}^{*} A_{\eps, \sigma}$ is infinitesimally $H_0$-bounded. So $H_\eps$ is self-adjoint on $D(H_0)$, and moreover, the hypotheses of Corollary \ref{kor:KK} are satisfied. This means that 
\begin{align}\label{DefLambdaepssigmany}
    \Lambda_{\eps}(z)_{\sigma \nu} =g_{\eps, \sigma}^{-1} \delta_{\sigma,\nu} - B_{\eps, \sigma}R_0(z) A_{\eps, \nu}^{*} , \qquad \qquad \sigma, \nu \in \II 
\end{align} 
are bounded operators and that $\Lambda_{\eps}(z)$ has a bounded inverse  if and only if $z\in \rho(H_\eps)$. Then
\begin{equation}\label{Krein}
     (H_{\eps}+z)^{-1} = R_0(z) + \sum_{\sigma, \nu \in \II} 
\left(A_{\eps, \sigma}R_0(\overline{z})\right)^{*} (\Lambda_{\eps}(z)^{-1})_{\sigma \nu}\, B_{\eps, \nu} R_0(z). 
\end{equation}

In Section 3 we prove existence of the limit
\begin{equation}\label{S(z)} 
     S(z)_{\sigma}= \limeps \; A_{\eps, \sigma}R_0(z), \qquad \qquad \sigma \in \II 
\end{equation}
for some, and hence for all $z \in \rho(H_0)$. This works in all dimensions $d \in \{1,2,3\}$. Obviously, $\limeps \; B_{\eps, \sigma}R_0(z)=J_\sigma S(z)_{\sigma}$. We will see that 
\begin{equation}\label{S(z)2} 
    S(z)_{\sigma}\Psi= v_{\sigma} \otimes G(z)_{\sigma}\Psi,
\end{equation}
where $G(z)_{\sigma}= T_{\sigma} R_0(z)$  and $T_\sigma$ is the trace operator introduced in \eqref{trace}.

The hard part, which is non-trivial even for   $N=1$, is the convergence of  $\Lambda_{\eps}(z)^{-1}$ in space dimensions $d \geq 2$.  The limit is, in general, not the inverse of an operator.  Our analysis is based on the decomposition 
\begin{equation*}
      \Lambda_{\eps}(z) =\Lambda_{\eps}(z)_{\textup{diag}}  + \Lambda_{\eps}(z)_{\textup{off}} 
\end{equation*}
into diagonal and off-diagonal parts of the operator matrix \eqref{DefLambdaepssigmany}. We then show that both $(\Lambda_{\eps}(z)_{\textup{diag}})^{-1}$ and $(\Lambda_{\eps}(z)_{\textup{diag}})^{-1} \Lambda_{\eps}(z)_{\textup{off}}$ have limits as $\eps\to 0$. The diagonal parts $B_{\eps, \sigma}R_0(z) A_{\eps, \sigma}^{*} $ contain a divergent contribution that must be cancelled by the divergence of $g_{\eps, \sigma}^{-1}$ as $\eps\to 0$. It turns out that $(g_{\eps, \sigma}^{-1}\mspace{-1mu}-\mspace{-1mu}B_{\eps, \sigma}R_0(z) A_{\eps, \sigma}^{*})^{-1}$ has a vanishing limit unless $\int\mspace{-2mu}V_\sigma(r)\,\textup{d}r = 2\pi a_\sigma$.
In the end, we arrive at
\begin{align}\label{Lambdafac}
\lim_{\eps \to 0} \,(\Lambda_{\eps}(z)^{-1})_{\sigma \nu} = 
\begin{cases} \dfrac{\Ket{u_{\sigma}}\Bra{v_{\nu}}}{\Braket{u_{\sigma}| v_{\sigma}} \Braket{u_{\nu}| v_{\nu}}} \otimes (\Theta(z)^{-1})_{\sigma \nu}  \qquad \quad &\textup{if}\; \sigma, \nu \in \JJ \\
\;\; 0 \qquad &\textup{else}
\end{cases}
\end{align}
for large enough $z>0$, with some closed and invertible operator  $\Theta(z)$ in the Hilbert space $\XX$.  Combining \eqref{Lambdafac} with \eqref{S(z)} and \eqref{S(z)2}, it follows that the expression on the right hand side of \eqref{Krein} has the limit \eqref{Hres}. It is then a standard argument to show that \eqref{Hres} defines the resolvent of a self-adjoint operator.

\section{The limit of $A_{\eps, \sigma}R_0(z)$}  \label{sec3}
Let $\eps>0$, $z \in \rho(H_0)$ and $\sigma \in \II$. Then the operator $A_{\eps, \sigma}R_0(z)$ is bounded because, by \eqref{Aepsijnew}, $A_{\eps, \sigma}$ is closed and $D(H_0) \subset D(A_{\eps, \sigma})$. In this section we prove that
\begin{align}
S(z)_{\sigma}= \limepsr A_{\eps, \sigma}R_0(z), \qquad \qquad \quad z \in \rho(H_0) \label{LimitSsigma} 
\end{align}
exists in the norm of $\LL(\HH, \widetilde{\XX}_{\sigma})$.
In fact, we have an explicit expression for $S(z)_{\sigma}$ in terms of the trace $T_{\sigma}$ defined in \eqref{trace}. Clearly,
\begin{align}
 T_{\sigma} = \tau \KK_{\sigma}, \label{DefTsigma}
\end{align}  
where $\KK_{\sigma} \in \LL(\HH,  \widetilde{\XX}_{\sigma})$ is the change of coordinates defined in \eqref{DefK} and,
for smooth functions $\psi$ with compact support,
\begin{equation}
\left( \tau \psi \right)(R,x_3,...,x_N):= \psi(r,R,x_3,...,x_N)\vert_{r=0}. \nonumber
\end{equation}
We shall use \eqref{DefTsigma} for the definition of $T_{\sigma}$, where the trace $\tau$ is defined by 
\begin{align}
\widehat{\tau \psi}(\underline{P}) := \frac{1}{2\pi} \int\limits_{\R^2} \mspace{-3mu} \widehat{\psi}(p,\underline{P}) \; \d{p},
\label{Deftau}
\end{align}
$(p,\underline{P})$ being conjugate to $(r,R,x_3,...,x_N)$, and
\begin{align}
D(\tau)= \left\{ \psi \in L^2(\R^{2N}) \, \Bigg \vert \, \int \mspace{-3mu} \left( \int \mspace{-3mu} |\widehat{\psi}(p,\underline{P})| \; \d{p} \right)^2  \d{\underline{P}}  < \infty \right\}. \nonumber
\end{align}
For lack of a good reference, we state and prove the following result about $\tau$:

\begin{lemma} \label{lm3.1}
For all $s>1$, $H^s(\R^{2N}) \subset D(\tau)$ and $\tau: H^s(\R^{2N}) \rightarrow H^{s-1}(\R^{2N-2})$ is a bounded operator. 
\end{lemma}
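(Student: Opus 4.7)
The plan is to prove the lemma via a weighted Cauchy--Schwarz estimate on the Fourier side, using $(1+|p|^2+|\underline{P}|^2)^s$ as weight, and then controlling the resulting integral in $p$ by rescaling.

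First, I would bound the inner integral pointwise in $\underline{P}$. For $\psi \in H^s(\R^{2N})$, Cauchy--Schwarz gives
\begin{equation*}
\left(\int_{\R^2} |\widehat{\psi}(p,\underline{P})|\,\d{p}\right)^{\mspace{-3mu}2}
\le \left(\int_{\R^2}\frac{\d{p}}{(1+|p|^2+|\underline{P}|^2)^{s}}\right)
\left(\int_{\R^2}(1+|p|^2+|\underline{P}|^2)^{s}|\widehat{\psi}(p,\underline{P})|^2\,\d{p}\right).
\end{equation*}
The first factor I evaluate by the substitution $p = \sqrt{1+|\underline{P}|^2}\,q$, which yields
\begin{equation*}
\int_{\R^2}\frac{\d{p}}{(1+|p|^2+|\underline{P}|^2)^{s}}
= (1+|\underline{P}|^2)^{-(s-1)}\int_{\R^2}\frac{\d{q}}{(1+|q|^2)^{s}}.
\end{equation*}
The last integral is finite precisely because $s>1$; this is exactly where the hypothesis enters. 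Call its value $C_s$.

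Next I multiply by $(1+|\underline{P}|^2)^{s-1}$ and integrate over $\underline{P}$. Using the definition \eqref{Deftau} and $|\widehat{\tau\psi}(\underline{P})|^2\le (2\pi)^{-2}(\int|\widehat{\psi}(p,\underline{P})|\,\d{p})^2$, I obtain
\begin{equation*}
\|\tau\psi\|_{H^{s-1}(\R^{2N-2})}^{2}
\le \frac{C_s}{(2\pi)^2}\int_{\R^{2N-2}}\int_{\R^2}(1+|p|^2+|\underline{P}|^2)^{s}|\widehat{\psi}(p,\underline{P})|^2\,\d{p}\,\d{\underline{P}}
= \frac{C_s}{(2\pi)^2}\,\|\psi\|_{H^{s}(\R^{2N})}^{2},
\end{equation*}
which is the claimed boundedness. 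The inclusion $H^s(\R^{2N})\subset D(\tau)$ follows from the same estimate: since $s-1>0$, the factor $(1+|\underline{P}|^2)^{s-1}$ is bounded below by $1$, so the unweighted integral defining $D(\tau)$ is dominated by the weighted one.

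There is no genuine obstacle here; the only delicate point is the sharp exponent arithmetic. The rescaling in the $p$-integral shows why $d=2$ forces the threshold $s>1$ (in dimension $d$ one would need $s>d/2$), and explains why the gain in Sobolev regularity after taking the trace is exactly $1/2$ per codimension, i.e. $1$ for codimension $2$. Once this dimensional count is correctly set up, the rest is routine Cauchy--Schwarz and Plancherel.
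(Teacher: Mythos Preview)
Your proof is correct and follows essentially the same approach as the paper: weighted Cauchy--Schwarz on the Fourier side with weight $(1+|p|^2+|\underline{P}|^2)^s$, followed by evaluating the resulting $p$-integral to extract the factor $(1+|\underline{P}|^2)^{1-s}$. You are slightly more explicit than the paper in carrying out the rescaling and in deriving the inclusion $H^s\subset D(\tau)$, which the paper leaves to the reader.
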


\noindent\emph{Remark:} From the Definition \eqref{DefTsigma} of $T_{\sigma}$, from $\KK_{\sigma}H^2(\R^{2N})=H^2(\R^{2N})$, and from Lemma \ref{lm3.1}, it follows that $H^2(\R^{2N}) \subset D(T_{\sigma})$.

\begin{proof}
We leave it to the reader to check that $H^s(\R^{2N}) \subset D(\tau)$. For $\psi \in H^s(\R^{2N})$, we have, by Definition \eqref{Deftau} and by Cauchy-Schwarz inequality,
\begin{align}
|\widehat{\tau \psi}(\underline{P})|^2 &\leq \frac{1}{4\pi^2} \int \mspace{-3mu} |\widehat{\psi}(p,\underline{P})|^2
(1+ |p|^2 + |\underline{P}|^2)^s \; \d{p} \cdot \int \mspace{-3mu} (1+ |p|^2 + |\underline{P}|^2)^{-s} \; \d{p}, \label{tauineq1}
\end{align}
where
\begin{align}
 \frac{1}{4\pi^2}\int \mspace{-3mu} (1+ |p|^2 + |\underline{P}|^2)^{-s} \; \d{p} = C_s (1+|\underline{P}|^2)^{1-s}, \label{tauineq2}
\end{align}
and $C_s < \infty$ because $s>1$. From \eqref{tauineq1} and \eqref{tauineq2}, it follows that $\| \tau \psi \|_{H^{s-1}} \leq C_{s}^{1/2}  \| \psi \|_{H^{s}}$. 
\end{proof}

In view of $A_{\eps, \sigma}=[(\eps^{-1}v_{\sigma}U_{\eps}) \otimes 1] \KK_{\sigma}$, proving \eqref{LimitSsigma} is a problem in $\LL(L^2(\R^2))$, which we solve in the following lemma. For a given pair $\sigma$, let $V=V_{\sigma}$, $v=v_{\sigma}$ and let $\Ket{v}\Bra{\delta}: H^2(\R^2) \rightarrow L^2(\R^2)$ denote the rank-one operator given by
\begin{align}
\Ket{v}\Bra{\delta} \psi:= \psi(0) v, \nonumber
\end{align}
where $\psi(0):=(2\pi)^{-1} \int \widehat{\psi}(p) \, \d{p}$. Then the following holds:
\begin{lemma} \label{lm3.2}
If $\int (1+|r|^{2s}) \left|V(r)\right|\,\d{r} < \infty$ for some $s \in (0,1)$, then
\begin{align}
v \eps^{-1} U_{\eps} \rightarrow \Ket{v}\Bra{\delta}, \qquad \qquad (\eps \downarrow 0) \nonumber
\end{align}
in the norm of $\LL(H^2(\R^2), L^2(\R^2))$ and the rate of convergence is at least $O(\eps^s)$.
\end{lemma}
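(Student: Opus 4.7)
The plan is to work out the difference $v\eps^{-1}U_\eps \psi - \ket{v}\bra{\delta}\psi$ pointwise. From the definition \eqref{DefUeps}, $(\eps^{-1}U_\eps\psi)(r)=\psi(\eps r)$, so
\[
\bigl(v\eps^{-1}U_\eps\psi - \ket{v}\bra{\delta}\psi\bigr)(r) = v(r)\bigl[\psi(\eps r)-\psi(0)\bigr].
\]
Taking the $L^2$-norm squared and using $|v|^2=|V|$ reduces the claim to the estimate
\[
\int |V(r)|\,\bigl|\psi(\eps r)-\psi(0)\bigr|^2\,\d{r} \;\leq\; C\,\eps^{2s}\,\|\psi\|_{H^2}^2.
\]

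To control the pointwise difference, I would use the Fourier representation $\psi(\eps r)-\psi(0) = (2\pi)^{-1}\int \widehat{\psi}(p)(e^{i\eps r\cdot p}-1)\,\d{p}$ together with the elementary inequality $|e^{it}-1|\leq 2^{1-s}|t|^s$, valid for $t\in\R$ and $s\in(0,1]$. This yields
\[
\bigl|\psi(\eps r)-\psi(0)\bigr| \;\leq\; \frac{2^{1-s}}{2\pi}\,(\eps|r|)^s \int |\widehat{\psi}(p)|\,|p|^s\,\d{p}.
\]
A Cauchy-Schwarz estimate then gives
\[
\int |\widehat{\psi}(p)|\,|p|^s\,\d{p} \;\leq\; \|\psi\|_{H^2}\left(\int \frac{|p|^{2s}}{(1+|p|^2)^2}\,\d{p}\right)^{\!1/2},
\]
and in $\R^2$ this last integral converges \emph{precisely} because $s<1$ (the integrand decays like $|p|^{2s-4}$ at infinity). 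Combining these bounds delivers the uniform estimate $|\psi(\eps r)-\psi(0)|\leq C(\eps|r|)^s\|\psi\|_{H^2}$.

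Substituting back, I obtain
\[
\int|V(r)|\,\bigl|\psi(\eps r)-\psi(0)\bigr|^2\,\d{r} \;\leq\; C^2\,\eps^{2s}\,\|\psi\|_{H^2}^2 \int |V(r)|\,|r|^{2s}\,\d{r},
\]
and the last integral is finite by the hypothesis $\int(1+|r|^{2s})|V(r)|\,\d{r}<\infty$. Taking square roots yields $\|v\eps^{-1}U_\eps-\ket{v}\bra{\delta}\|_{\LL(H^2,L^2)}\leq C'\eps^s$, which is the asserted statement including the convergence rate. The only delicate point is the sharpness of the constraint $s<1$: it is exactly the threshold at which the Fourier-side integral $\int|p|^{2s}(1+|p|^2)^{-2}\d{p}$ ceases to be integrable at infinity in dimension two, so the argument is tight and explains why the lemma restricts to $s\in(0,1)$ rather than a larger range.
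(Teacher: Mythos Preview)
Your proof is correct and follows the same strategy as the paper: reduce to the pointwise H\"older estimate $|\psi(\eps r)-\psi(0)|\leq C(\eps|r|)^s\|\psi\|_{H^2}$ and then integrate against $|V|$. The only difference is that the paper invokes the Sobolev embedding $H^2(\R^2)\hookrightarrow C^{0,s}(\R^2)$ directly to obtain this bound, whereas you prove it by hand via the Fourier representation and $|e^{it}-1|\leq 2^{1-s}|t|^s$; your argument is in effect a self-contained proof of that embedding.
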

\begin{proof}
Due to the Sobolev embedding $H^2(\R^2) \hookrightarrow C^{0,s}(\R^2)$, valid for $s \in (0,1)$, we have that
\begin{align}
\left|\left(v \eps^{-1}U_{\eps} \psi - \Ket{v}\Bra{\delta} \psi \right)(r)\right| &= \left| v(r) \right| \left| \psi(\eps r)- \psi(0) \right| \nonumber \\
& \leq  \left| v(r) \right| c_s |\eps r|^{s} \| \psi \|_{H^2}. \nonumber
\end{align}
This implies the statement of the lemma.
\end{proof}

We now come to the proof of \eqref{LimitSsigma}:
\begin{prop} \label{prop3.3}
If $z \in \rho(H_0)$ and $\int (1+|r|^{2s}) \left|V_{\sigma}(r)\right|\,\d{r} < \infty$ for some $s \in (0,1)$, then, as $\eps\downarrow 0$, \vspace*{-4mm}
\begin{align}
\left\| A_{\eps, \sigma}R_0(z) - S(z)_{\sigma} \right\| = O(\eps^s), \nonumber
\end{align}
where $S(z)_{\sigma} \in \LL(\HH, \widetilde{\XX}_{\sigma})$ is given by
\begin{align}
S(z)_{\sigma}\psi = v_{\sigma} \otimes (G(z)_{\sigma}\psi) \label{DefSsigma}
\end{align}
and $G(z)_{\sigma}= T_{\sigma}R_0(z) \in \LL(\HH, \XX_{\sigma})$.
\end{prop}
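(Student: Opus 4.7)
The plan is to express $A_{\eps,\sigma}R_0(z) - S(z)_\sigma$ as a tensor-product operator acting nontrivially only on the relative coordinate $r$, and then to conclude via Lemma~\ref{lm3.2} combined with the standard norm identity $\|A\otimes 1\| = \|A\|$ on Hilbert tensor products.

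First, I would rewrite both operators in this tensor form. Since $U_\eps$ and multiplication by $v_\sigma$ act only on the $r$-variable, the definition of $A_{\eps,\sigma}$ in \eqref{Aepsijnew} gives
\[
   A_{\eps,\sigma}R_0(z) = \bigl[(v_\sigma\eps^{-1}U_\eps)\otimes 1\bigr]\,\KK_\sigma R_0(z),
\]
with $1$ denoting the identity on $L^2(\R^{2(N-1)})$ in the variables $\underline{X}$. On the other hand, the Fourier formula \eqref{Deftau} means that $\tau\varphi$ is the evaluation of $\varphi$ at $r=0$, hence
\[
   (\Ket{v_\sigma}\Bra{\delta}\otimes 1)\varphi(r,\underline{X}) = v_\sigma(r)(\tau\varphi)(\underline{X}),
\]
which, using $T_\sigma = \tau\KK_\sigma$ and the definition \eqref{DefSsigma} of $S(z)_\sigma$, yields $S(z)_\sigma = [\Ket{v_\sigma}\Bra{\delta}\otimes 1]\,\KK_\sigma R_0(z)$. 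Subtracting,
\[
   A_{\eps,\sigma}R_0(z) - S(z)_\sigma = \bigl[(v_\sigma\eps^{-1}U_\eps - \Ket{v_\sigma}\Bra{\delta})\otimes 1\bigr]\,\KK_\sigma R_0(z).
\]

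Next, I would bound the norm of the operator on the right. Lemma~\ref{lm3.2} gives $\|v_\sigma\eps^{-1}U_\eps - \Ket{v_\sigma}\Bra{\delta}\|_{H^2(\R^2)\to L^2(\R^2)} = O(\eps^s)$, and tensoring with $1_{L^2(\R^{2(N-1)})}$ preserves this norm, so
\[
   \bigl\|(v_\sigma\eps^{-1}U_\eps - \Ket{v_\sigma}\Bra{\delta})\otimes 1\bigr\|_{H^2(\R^2)\otimes L^2(\R^{2(N-1)})\to L^2(\R^{2N})} = O(\eps^s).
\]
It remains to verify that $\KK_\sigma R_0(z)$ maps $\HH$ boundedly into $H^2(\R^2)\otimes L^2(\R^{2(N-1)})$. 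Since $R_0(z):\HH\to H^2(\R^{2N})$ is bounded for $z\in\rho(H_0)$ and $\KK_\sigma$ preserves $H^2$ (being a constant-Jacobian linear change of variables), it suffices to note the inclusion $H^2(\R^{2N})\hookrightarrow H^2(\R^2)\otimes L^2(\R^{2(N-1)})$, which is immediate from the elementary Fourier estimate $(1+|p|^2)^2 \leq (1+|p|^2+|\underline{P}|^2)^2$.

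The main technical input is the tensor-product norm identity in the middle step; once the factorization is recognized, composing the three bounds gives $\|A_{\eps,\sigma}R_0(z) - S(z)_\sigma\|_{\LL(\HH,\widetilde{\XX}_\sigma)} = O(\eps^s)$, which is the claim.
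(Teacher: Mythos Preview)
Your proof is correct and follows essentially the same approach as the paper: both rewrite $A_{\eps,\sigma}R_0(z)$ and $S(z)_\sigma$ as $(v_\sigma\eps^{-1}U_\eps)\otimes 1$ and $\Ket{v_\sigma}\Bra{\delta}\otimes 1$ composed with $\KK_\sigma R_0(z)$, then invoke Lemma~\ref{lm3.2} and the boundedness of $\KK_\sigma$ on $H^2(\R^{2N})$. Your version is simply more explicit about the tensor-norm identity and the inclusion $H^2(\R^{2N})\hookrightarrow H^2(\R^2)\otimes L^2(\R^{2(N-1)})$, which the paper leaves implicit.
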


\begin{proof}
By \eqref{Aepsijnew}, we have that 
$A_{\eps, \sigma}R_0(z) =  (v_{\sigma} \otimes 1) \,\eps^{-1} U_{\eps} \KK_{\sigma}R_0(z)$. Comparing this with
$S(z)_{\sigma}\psi = v_{\sigma} \otimes (\tau \KK_{\sigma} R_0(z)\psi) = (\Ket{v_{\sigma}}\Bra{\delta} \otimes 1) \KK_{\sigma} R_0(z)\psi$,
we see that the proposition follows from Lemma \ref{lm3.2} since $\KK_{\sigma}$ defines a bounded operator in $H^2(\R^{2N})$.
\end{proof}
\section{Convergence of $\Lambda_{\eps}(z)^{-1}$} \label{sec4}
The goal of this section is to show that the limit $\lim_{\eps \to 0} \Lambda_{\eps}(z)^{-1}$ exists for all
large enough $z>0$, provided that, for all pairs $\sigma$,  $V_{\sigma} \in L^1 \cap L^2(\R^2)$ and $\int |r|^{2s}\left|V_{\sigma}(r)\right| \, \d{r} < \infty$ for some $s > 0 $.
As explained at the end of Section \ref{sec2}, this proof is based on a decomposition of $\Lambda_{\eps}(z)$ into diagonal and off-diagonal parts that we now define.

Recall from \eqref{DefLambdaepssigmany} that, for given $\eps, z>0$, the operator matrix $\Lambda_{\eps}(z) \in \LL(\widetilde{\XX})$ has the components
\begin{align}
\Lambda_{\eps}(z)_{\sigma \nu}=g_{\eps, \sigma}^{-1} \delta_{\sigma,\nu}- \phi_{\eps}(z)_{\sigma \nu} , \qquad \qquad \sigma, \nu \in \II, \nonumber
\end{align} 
where
\begin{align}
\phi_{\eps}(z)_{\sigma \nu}:=B_{\eps, \sigma}R_0(z)A_{\eps, \nu}^{*} \nonumber
\end{align}
defines a bounded operator $\phi_{\eps}(z)_{\sigma \nu} \in \LL(\widetilde{\XX}_{\nu},\widetilde{\XX}_{\sigma})$ for all pairs $\sigma, \nu \in \II$.
Using the definitions  \eqref{Bepsijnew} and \eqref{Aepsijnew} for $B_{\eps, \sigma}$ and $A_{\eps, \nu}$, respectively, leads to the explicit formula
\begin{align}
\phi_{\eps}(z)_{\sigma \nu}= \eps^{-2} \, (u_{\sigma} \otimes 1) \, U_{\eps} \KK_{\sigma}R_0(z) \KK_{\nu}^{*} U_{\eps}^{*} (v_{\nu} \otimes 1). \label{Defphisigmany}
\end{align}
We shall see that, for fixed $z>0$, the off-diagonal operators $\phi_{\eps}(z)_{\sigma \nu} \;(\sigma \neq \nu)$ are uniformly bounded in $\eps>0$, whereas the specific behavior of the coupling constant $g_{\eps, \sigma}$ (see \eqref{g-def}) cancels the singular part in the diagonal operators $\phi_{\eps}(z)_{\sigma \sigma}$ if and only if $a_{\sigma} = \int  V_{\sigma}(r) \, \d{r} / (2 \pi) > 0$. Therefore, we write the operator matrix $\Lambda_{\eps}(z)$ in the form
\begin{align}
\Lambda_{\eps}(z) =\Lambda_{\eps}(z)_{\textup{diag}}  + \Lambda_{\eps}(z)_{\textup{off}}, \label{Lambdadecomp}
\end{align}
where the diagonal part $\Lambda_{\eps}(z)_{\textup{diag}}$ and the off-diagonal part $\Lambda_{\eps}(z)_{\textup{off}}$ are given by
\begin{align}
\left(\Lambda_{\eps}(z)_{\textup{diag}}\right)_{\sigma \nu} = \left( g_{\eps, \sigma}^{-1} - \phi_{\eps}(z)_{\sigma \sigma} \right) \delta_{\sigma \nu}, \qquad \qquad \sigma, \nu \in \II, \label{DefLambdadiag}
\end{align}
and
\begin{align}
\left(\Lambda_{\eps}(z)_{\textup{off}}\right)_{\sigma \nu} =
\phi_{\eps}(z)_{\sigma \nu}(\delta_{\sigma \nu}-1), \qquad \qquad \sigma, \nu \in \II,
\label{DefLambdaoff}
\end{align}
respectively. Section \ref{sec4.1} is devoted to $ \Lambda_{\eps}(z)_{\textup{diag}} $ and Section \ref{sec4.2} is devoted to $ \Lambda_{\eps}(z)_{\textup{off}} $.
\subsection{The limit of $(\Lambda_{\eps}(z)_{\textup{diag}})^{-1}$} \label{sec4.1}
We first show that $ \Lambda_{\eps}(z)_{\textup{diag}} $ is invertible for small enough $\eps > 0$ and large enough $z>0$ and then we compute the limit $ \lim_{\eps \to 0} (\Lambda_{\eps}(z)_{\textup{diag}})^{-1}$. This can be done for each component 
$\Lambda_{\eps}(z)_{\sigma \sigma} = g_{\eps, \sigma}^{-1} - \phi_{\eps}(z)_{\sigma \sigma}$ separately and without loss of generality we may choose $\sigma=(1,2)$. 
In the following, $v:=v_{(1,2)}$, $\KK:=\KK_{(1,2)}$, $\mu:=\mu_{(1,2)}$, $\phi_{\eps}(z):=\phi_{\eps}(z)_{(1,2)(1,2)},$ etc.

We begin by computing the kernel of  $\phi_{\eps}(z)$. To this end, we note that 
\begin{align}
\KK R_0(z) = (\widetilde{H}_0+z)^{-1}\KK, \label{H0H0Tilde}
\end{align}
where $\widetilde{H}_0$ is the free Hamiltonian $H_0$ expressed in the relative and center of mass coordinates of the pair $(1,2)$ (cf. \eqref{DefrR}), i.e.
\begin{align}
\widetilde{H}_0 = -\frac{\Delta_r}{\mu} - \frac{\Delta_R}{m_1+m_2} + 
\sum\limits_{i=3}^{N} \left( -\frac{\Delta_{x_i}}{m_i} \right). \label{DefH0Tilde}
\end{align} 
Starting from the explicit formula \eqref{Defphisigmany}, the identity \eqref{H0H0Tilde} and $\KK \KK^{*}=1$ imply that
\begin{align}
\phi_{\eps}(z)= \eps^{-2} \, (u \otimes 1) \, U_{\eps} (\widetilde{H}_0 + z)^{-1} U_{\eps}^{*} (v \otimes 1). \nonumber
\end{align}
Hence, after a Fourier transform in $(R,x_3,...,x_N)$, $\phi_{\eps}(z)$ acts pointwise in the conjugate variable $\underline{P}=(P,p_3,...,p_N)$ by the operator 
\begin{align}
\phi_{\eps}(z, \underline{P})&= \mu \eps^{-2}\,u \, U_{\eps} (-\Delta_r + \mu(z + Q))^{-1} U_{\eps}^{*} v \nonumber \\
&= \mu \, u\, (-\Delta_r + \eps^2\mu(z + Q))^{-1} \, v \quad \in \LL(L^2(\R^2)), \label{phi1212P} 
\end{align}
where 
\begin{align}
Q:=\frac{P^2}{m_1+m_2}+ \displaystyle\sum\limits_{i=3}^{N} \frac{p_i^2}{m_i}, \label{DefQ}
\end{align} and the scaling property of the Laplace operator
with respect to the unitary scaling $U_{\eps}$ has been used in \eqref{phi1212P}.
As the resolvent in \eqref{phi1212P} is divergent in the limit $\eps \downarrow 0$, the first step in the analysis of $\lim_{\eps \to 0} (\Lambda_{\eps}(z)_{\textup{diag}})^{-1}$ is to study the asymptotic behavior of $\phi_{\eps}(z, \underline{P})$ for small $\eps>0$.

A similar but less specific version of the following lemma can be found in \cite[Proposition 3.2]{Simon76}. 
\begin{lemma}\label{lm4.1}
Let $V \in L^1 \cap L^2(\R^2)$ satisfy $\int  |r|^{2s}\left|V(r)\right|\,\d{r}< \infty$ for some $s \in (0,2)$.
Then there exists a constant $C=C(s,V)>0$ such that
\begin{align}
\forall \alpha >0: \qquad &\left\| u \, (-\Delta + \alpha)^{-1} \, v + (2\pi)^{-1} \left[ \left(\ln\left(\sqrt{\alpha}/2\right) + \gamma\right)\mspace{-2mu} \Ket{u} \mspace{-3mu}\Bra{v} + L  \right] \right\|_{\textup{HS}} \leq C \alpha^{s/2}, \label{alphaest1} \\
\forall \alpha \geq 1: \qquad &\left\| u \, (-\Delta + \alpha)^{-1} \, v \right\|_{\textup{HS}} \leq C, \label{alphaest2}
\end{align}
where $\|\cdot\|_{\textup{HS}}$ denotes the Hilbert-Schmidt norm in $L^2(\R^2)$, $\gamma$ is the Euler–Mascheroni constant and $L$ is a Hilbert-Schmidt operator in $L^2(\R^2)$ that is defined in terms of the kernel
\begin{align}
u(r) \, \ln(|r-r'|) \, v(r'), \qquad \qquad r \neq r'. \nonumber
\end{align}
\end{lemma}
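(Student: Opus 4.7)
The analysis rests on the explicit integral kernel of the two-dimensional free resolvent,
\[
   \bigl((-\Delta+\alpha)^{-1}\bigr)(r,r') = \frac{1}{2\pi}\,K_{0}\bigl(\sqrt{\alpha}\,|r-r'|\bigr),
\]
where $K_{0}$ denotes the modified Bessel function of the second kind. I would begin by recording the classical expansion $K_{0}(x) = -\ln(x/2) - \gamma + \Phi(x)$, where $\Phi$ is continuous on $[0,\infty)$ with $\Phi(0)=0$. From the series representation one reads off $\Phi(x) = O(x^{2}|\ln x|)$ as $x\downarrow 0$, while the exponential decay $K_{0}(x)\to 0$ at infinity forces $\Phi(x)\sim \ln(x/2)$ as $x\to\infty$. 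Both tails together show that for every $s\in(0,2)$ there is a constant $C_{s}$ with $|\Phi(x)|\leq C_{s}\,x^{s}$ on all of $(0,\infty)$: the small-$x$ quotient $\Phi(x)/x^{s}$ is controlled precisely when $s<2$, while the large-$x$ quotient needs only $s>0$.

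Next, inserting $\ln(\sqrt{\alpha}|r-r'|/2) = \ln(\sqrt{\alpha}/2) + \ln|r-r'|$ into the kernel of $u(-\Delta+\alpha)^{-1}v$ produces the algebraic identity
\[
   u(-\Delta+\alpha)^{-1}v + \frac{1}{2\pi}\Bigl[(\ln(\sqrt{\alpha}/2)+\gamma)\Ket{u}\Bra{v} + L\Bigr] = \frac{1}{2\pi}\,R_{\alpha},
\]
where $R_{\alpha}$ is the integral operator with kernel $u(r)\,\Phi(\sqrt{\alpha}|r-r'|)\,v(r')$. The pointwise bound on $\Phi$ then gives
\[
   \|R_{\alpha}\|_{\textup{HS}}^{2} \leq C_{s}^{2}\,\alpha^{s}\iint |V(r)|\,|r-r'|^{2s}\,|V(r')|\,\d{r}\,\d{r'},
\]
and the double integral is finite because $|r-r'|^{2s} \leq c_{s}(|r|^{2s}+|r'|^{2s})$ combined with the hypotheses $V\in L^{1}(\R^{2})$ and $\int|r|^{2s}|V(r)|\,\d{r}<\infty$. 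This yields \eqref{alphaest1}, provided one also verifies that $L$ is Hilbert-Schmidt, which follows from the same splitting applied to $\ln^{2}|r-r'|$: near the diagonal this function is locally integrable in $\R^{2}$, and for $|r-r'|\geq 1$ one controls it by $C(1+|r|^{\eta}+|r'|^{\eta})$ for any $\eta>0$.

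For the second bound \eqref{alphaest2} I would exploit the monotonicity of $K_{0}$ on $(0,\infty)$: for $\alpha \geq 1$ one has $K_{0}(\sqrt{\alpha}|r-r'|)\leq K_{0}(|r-r'|)$, hence, by Young's inequality applied to the convolution with $K_{0}^{2}$,
\[
   \|u(-\Delta+\alpha)^{-1}v\|_{\textup{HS}}^{2} \leq \frac{1}{4\pi^{2}}\iint |V(r)|\,K_{0}(|r-r'|)^{2}\,|V(r')|\,\d{r}\,\d{r'} \leq \frac{\|K_{0}^{2}\|_{L^{1}}}{4\pi^{2}}\,\|V\|_{L^{2}}^{2},
\]
where $K_{0}^{2}\in L^{1}(\R^{2})$ because it has only a squared-log singularity at the origin and decays exponentially at infinity.

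The main technical step is producing the uniform pointwise bound on $\Phi$; once the behavior at the two critical scales $x\to 0$ and $x\to\infty$ is reconciled, both Hilbert-Schmidt estimates reduce to routine integrations against the moment-weighted $L^{1}$ hypothesis on $V$.
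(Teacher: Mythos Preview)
Your proof is correct and follows essentially the same route as the paper's: the identity $\frac{1}{2\pi}K_{0}(\sqrt{\alpha}|x|)=\frac{i}{4}H_{0}^{(1)}(i\sqrt{\alpha}|x|)=G_{\alpha}(x)$ shows that your remainder $\Phi$ is exactly the paper's $G_{1}(x)+(2\pi)^{-1}(\ln(|x|/2)+\gamma)$ up to the factor $2\pi$, and both arguments then hinge on the uniform bound $|\Phi(x)|\leq C_{s}x^{s}$ obtained from the small-$x$ expansion and the large-$x$ decay of $K_{0}$. The only cosmetic difference is in \eqref{alphaest2}, where you phrase the final step as Young's inequality with $\|K_{0}^{2}\|_{L^{1}}$ while the paper uses Cauchy--Schwarz with $\|G_{1}\|_{L^{2}}^{2}$; these are the same quantity.
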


\begin{proof} First, we observe that $u(-\Delta+\alpha)^{-1}v$ is associated with the integral kernel
\begin{align}
u(r)\, G_{\alpha}(r-r') \,v(r'),    \label{Lkernel}
\end{align}
where $G_{\alpha}=G_{\alpha}^2$ denotes the Green's function of $-\Delta + \alpha: H^2(\R^2) \rightarrow L^2(\R^2)$ (we refer to the appendix for details about $G_{\alpha}$). By \cite[Chapter I.5]{SolvableModels}, we have the explicit description
\begin{align}
G_{\alpha}(x)=G_{1}(\sqrt{\alpha}x)=\frac{i}{4} H^{(1)}_0(i\sqrt{\alpha}|x|),  \qquad\qquad 0\neq x \in \R^2,\label{GreenHankel}
\end{align}
where $H^{(1)}_0$ denotes the Hankel function of first kind and order zero.
It is known (see e.g. \cite[Chapter 9.1]{HandbookAS}) that
\begin{align}
iH^{(1)}_0(ir)&=-2\pi^{-1} \left(\ln\left(r/2\right)+\gamma \right) + O(r^2|\ln(r)|), \qquad \qquad (r \downarrow 0). \label{Hankelnull}
\end{align}
As $G_1$ is smooth in $\R^2\setminus \{0\}$ and exponentially decaying as $|x| \rightarrow \infty$ (see Lemma \ref{lmA2}), we conclude from \eqref{GreenHankel} and \eqref{Hankelnull} that there is some constant $\lambda=\lambda(s)>0$ such that
\begin{align}
\left|G_1(x)+(2\pi)^{-1} \left(\ln\left(|x|/2\right)+\gamma\right)\right| \leq \lambda|x|^s, \qquad \qquad x  \in \R^2\setminus \{0\}. \nonumber
\end{align}
Using again \eqref{GreenHankel}, this implies that
\begin{align}
\left|G_\alpha(x)+(2\pi)^{-1} \left(\ln\left(\sqrt{\alpha}|x|/2\right)+\gamma\right)\right| \leq \lambda\alpha^{s/2}|x|^s \nonumber
\end{align}
uniformly in $\alpha>0$ and $x  \in \R^2\setminus \{0\}$. Therefore, \eqref{alphaest1} follows from
\begin{align}
&\left\| u \, (-\Delta + \alpha)^{-1} \, v + (2\pi)^{-1} \left[ \left(\ln\left(\sqrt{\alpha}/2\right) + \gamma\right)\mspace{-2mu} \Ket{u} \mspace{-3mu}\Bra{v} + L  \right] \right\|_{\textup{HS}}^2 \nonumber \\
& \quad=\int \mspace{-3mu} \d{r} \,\d{r'} \; |u(r)|^2 \left| G_{\alpha}(r-r') + (2\pi)^{-1} \left[\ln\left(\sqrt{\alpha}|r-r'|/2\right)+\gamma\right] \right|^2 |v(r')|^2 \nonumber\\
&\quad\leq \lambda^{2} \alpha^{s}  \int \mspace{-3mu} \d{r} \,\d{r'} \; |V(r)| \left| r-r' \right|^{2s} |V(r')| \nonumber \\
&\quad\leq 2^{2s+1}\lambda^{2} \alpha^{s} \|V\|_{L^1}  \int |r|^{2s} |V(r)|\, \d{r},
\nonumber 
\end{align}
where the elementary inequality $(a+b)^{2s} \leq 2^{2s}(a^{2s}+b^{2s})\;(a,b >0)$ was used for the last inequality. To show that $L$ indeed defines a Hilbert-Schmidt operator, we note that $\left\| L \right\|_{\textup{HS}}^2=I_1+I_2$, where
\begin{align}
I_1&=\int\limits_{|r| \leq 1} \mspace{-10mu} \d{r} \; \ln(|r|)^2 \int \mspace{-3mu} \d{r'} \; |V(r+r')| |V(r')| \leq 2\pi \|V\|^2 \int\limits_{0}^{1}  x\ln(x)^2 \, \d{x} < \infty \nonumber
\end{align}
by the Cauchy-Schwarz inequality and
\begin{align}
I_2&=\int\limits_{|r| \geq 1} \mspace{-10mu} \d{r} \;\ln(|r|)^2 \int \mspace{-3mu} \d{r'} \;  |V(r+r')| |V(r')| \nonumber \\
& \leq \sup_{x\geq 1}\left(x^{-2s}\ln(x)^2\right)\int \mspace{-3mu} \d{r} \, \d{r'} \;  |r|^{2s} |V(r+r')| |V(r')| \nonumber \\
&= \sup_{x\geq 1}\left(x^{-2s}\ln(x)^2\right)\int \mspace{-3mu} \d{r} \, \d{r'} \;  |r-r'|^{2s} |V(r)| |V(r')| \nonumber \\
& \leq 2^{2s+1} \sup_{x\geq 1}\left(x^{-2s}\ln(x)^2\right)\|V\|_{L^1}\int  |r|^{2s}\left|V(r)\right|\, \d{r} < \infty.  \nonumber
\end{align}

It remains to prove \eqref{alphaest2}. To this end, we note that for fixed $x \neq 0$ the function $\alpha \rightarrow G_{\alpha}(x)$ is strictly monotonically decreasing in $\alpha>0$ (cf. Lemma \ref{lmA1} $(v)$). Therefore, the Cauchy-Schwarz inequality yields for $\alpha\geq 1$ the uniform estimate
\begin{align}
\left\| u \, (-\Delta + \alpha)^{-1} \, v \right\|_{\textup{HS}}^2 &\leq \int \mspace{-3mu} \d{r} \,\d{r'} \; |V(r)| \left| G_{1}(r-r')\right|^2 |V(r')|  \nonumber \\ &\leq \|V\|^2 \|G_{1}\|^2 < \infty, \nonumber
\end{align}
where we used that $G_1 \in L^2(\R^2)$ (cf. Lemma \ref{lmA1} $(iv)$). 
\end{proof}

In the next lemma we show that $g_{\eps}^{-1} - \phi_{\eps}(z)$ is invertible for all large enough $z>0$ and small enough $\eps >0$. It is here, where the asymptotics  \eqref{g-def} of $g_{\eps}$, i.e. 
\begin{align}
g_{\eps}^{-1}=-\mu(a \ln(\eps) +b) + o(1), \qquad \qquad (\eps \downarrow 0)  \label{g-def2}
\end{align}
for some $a>0$, will play a crucial role.
\begin{lemma}\label{lm4.2}
Let the hypotheses of Lemma \ref{lm4.1} be satisfied. Then there exist $\eps_0,z_0>0$ such that $g_{\eps}^{-1}-\phi_{\eps}(z,\underline{P})$
is invertible for all $\underline{P}$, $\eps \in (0, \eps_0)$ and $z > z_0$. Moreover,
\begin{align}
\left\|\left(g_{\eps}^{-1}- \phi_{\eps}(z,  \underline{P}) \right)^{-1} \right\| \leq \widetilde{C} \begin{cases}
|\ln(\eps)|^{-1}  & \textup{if} \;\int V(r)\,\d{r}/(2\pi) < a 
\\ \max\left\{|\ln(\eps)|^{-1}, \ln(\mu(z+Q))^{-1}\right\}
& \textup{if} \; \int V(r)\,\d{r}/(2\pi) = a,
\end{cases} \label{invdiagabs}
\end{align}
where the constant $\widetilde{C}>0$ is independent of $\underline{P}$, $z> z_0$ and $\eps \in (0,\eps_0) $.
\end{lemma}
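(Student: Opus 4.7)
The plan is to analyse $M_\eps(z,\underline{P}) := g_\eps^{-1} - \phi_\eps(z,\underline{P})$ by isolating its leading rank-one singular part obtained from Lemma~\ref{lm4.1}. I would first split according to the size of $\alpha := \eps^2\mu(z+Q)$. In the regime $\alpha \geq 1$, bound~\eqref{alphaest2} gives $\|\phi_\eps(z,\underline{P})\| \leq \mu C$ uniformly, so $M_\eps$ is a bounded perturbation of the scalar $g_\eps^{-1}$, which by \eqref{g-def2} satisfies $|g_\eps^{-1}| \geq \tfrac{1}{2}\mu a |\ln\eps|$ for all sufficiently small $\eps$. A Neumann series then yields invertibility with $\|M_\eps^{-1}\| \leq \widetilde{C}/|\ln\eps|$; since $\alpha \geq 1$ forces $\ln(\mu(z+Q)) \geq 2|\ln\eps|$, this matches the right-hand side of \eqref{invdiagabs} in both cases.

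In the regime $\alpha < 1$, I would apply the expansion \eqref{alphaest1} together with $\ln(\sqrt\alpha/2) = \ln\eps + \tfrac{1}{2}\ln(\mu(z+Q)) - \ln 2$ to get
\begin{align*}
\phi_\eps(z,\underline{P}) = -\tfrac{\mu}{2\pi}\,h_\eps(z,\underline{P})\,\Ket{u}\Bra{v} - \tfrac{\mu}{2\pi}L + E_\eps(z,\underline{P}),
\end{align*}
with $h_\eps(z,\underline{P}) := \ln\eps + \tfrac{1}{2}\ln(\mu(z+Q)) + \gamma - \ln 2$ and $\|E_\eps(z,\underline{P})\|_{\textup{HS}} \leq C\mu\alpha^{s/2} \leq C\mu$. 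Peeling off the explicit rank-one part, I would invert
\begin{align*}
N_\eps(z,\underline{P}) := g_\eps^{-1} + \tfrac{\mu}{2\pi}\,h_\eps(z,\underline{P})\,\Ket{u}\Bra{v}
\end{align*}
via Sherman--Morrison, obtaining
\begin{align*}
N_\eps^{-1} = \frac{1}{g_\eps^{-1}}\left(I - \frac{(\mu/(2\pi))\,h_\eps}{\omega_\eps}\Ket{u}\Bra{v}\right),\qquad \omega_\eps := g_\eps^{-1} + \mu\kappa\,h_\eps,
\end{align*}
where $\kappa := (2\pi)^{-1}\int V(r)\,\textup{d}r = (2\pi)^{-1}\Braket{v|u}$, and this formula is valid whenever $\omega_\eps \neq 0$.

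A direct computation from \eqref{g-def2} gives $\omega_\eps = \mu(\kappa - a)\ln\eps + \tfrac{\mu\kappa}{2}\ln(\mu(z+Q)) + O(1)$. In Case~1 ($\kappa < a$), I would bound $|\omega_\eps| \geq c_1 |\ln\eps|$ uniformly in $\underline{P}$, for some $c_1 > 0$: when $\kappa \geq 0$ both explicit terms have non-negative sign as soon as $\mu z_0 \geq 1$, while when $\kappa < 0$ the identity $-\kappa - |\kappa| = 0$ combined with the constraint $\ln(\mu(z+Q)) < 2|\ln\eps|$ (forced by $\alpha < 1$) absorbs the negative contribution and leaves a lower bound $\mu a |\ln\eps|/2$. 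In Case~2 ($\kappa = a$) the $\ln\eps$-coefficient vanishes and $\omega_\eps = \tfrac{\mu a}{2}\ln(\mu(z+Q)) + O(1)$; choosing $z_0$ so large that $\ln(\mu z_0)$ dominates the $O(1)$ remainder produces $|\omega_\eps| \geq \tfrac{\mu a}{4}\ln(\mu(z+Q))$.

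Inserting these lower bounds together with $|h_\eps| \leq \tfrac{1}{2}|\ln\alpha| + O(1)$ and $|\ln\alpha| \leq 2|\ln\eps| + \ln(\mu(z+Q))$ into the Sherman--Morrison formula yields $\|N_\eps^{-1}\| \leq C/|\ln\eps|$ in Case~1 and $\|N_\eps^{-1}\| \leq C\max\{|\ln\eps|^{-1},\,\ln(\mu(z+Q))^{-1}\}$ in Case~2. Since $\tfrac{\mu}{2\pi}L + E_\eps$ is uniformly bounded while $\|N_\eps^{-1}\| \to 0$ as $\eps \downarrow 0$ (and, in Case~2, as $z_0 \to \infty$), a final Neumann series inverts $M_\eps = N_\eps + (\tfrac{\mu}{2\pi}L + E_\eps)$ with $\|M_\eps^{-1}\| \leq 2\|N_\eps^{-1}\|$, establishing \eqref{invdiagabs}. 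The main obstacle is Case~2, where $\omega_\eps$ is no longer driven by the single large parameter $|\ln\eps|$; the bound must faithfully reflect the $\max$-structure through a careful interplay between $\ln\eps$ and $\ln(\mu(z+Q))$ under the constraint $\alpha < 1$, and the threshold $z_0$ has to be chosen large enough that the $O(1)$ remainder in $\omega_\eps$ is dominated uniformly in $\underline{P}$.
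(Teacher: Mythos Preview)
Your proposal is correct and follows essentially the same approach as the paper: split according to $\alpha = \eps^2\mu(z+Q) \gtrless 1$, use Lemma~\ref{lm4.1} to isolate the divergent rank-one part, invert it by Sherman--Morrison (the paper's identity \eqref{Resid}), and absorb the uniformly bounded remainder by a Neumann series. The only cosmetic difference is that you track the denominator via $\omega_\eps = g_\eps^{-1} + \mu\kappa h_\eps$, whereas the paper works with the equivalent quantity $f(z,\underline{P},\eps) = -\tfrac{4\pi}{\mu\ln\alpha_\eps} - g_\eps\int V$; your parametrization makes the $\ln\eps$-cancellation in Case~2 slightly more transparent.
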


\begin{proof}
Let $C>0$ denote the constant from Lemma \ref{lm4.1}.
By \eqref{g-def2}, we find $\eps_0 \in (0,1)$ such that $|g_{\eps} \mu C| \leq 1/2$ and $|g_{\eps}| \leq \lambda |\ln(\eps)|^{-1}$ for all $\eps \in (0, \eps_0)$, where the constant $\lambda>0$ is independent of $\eps \in (0, \eps_0)$. 

If $\alpha_{\eps}:=\eps^2\mu(z+Q) \geq 1$, then it follows from \eqref{phi1212P} and \eqref{alphaest2} that
\begin{align}
\left\| g_{\eps} \phi_{\eps}(z,\underline{P}) \right\| \leq 
|g_{\eps}| \mu \left\| u \, (-\Delta + \alpha_{\eps})^{-1} \, v \right\|
\leq |g_{\eps} \mu C| \leq \Scalefrac{1}{2}.  \nonumber
\end{align}
Hence, $g_{\eps}^{-1} -\phi_{\eps}(z,\underline{P})$ is invertible and
\begin{align}
\left\|\left(g_{\eps}^{-1} - \phi_{\eps}(z,\underline{P})\right)^{-1} \right\| =\left\|g_{\eps}\left(1 - g_{\eps}\phi_{\eps}(z,\underline{P})\right)^{-1} \right\| \leq 2 \lambda |\ln(\eps)|^{-1} \nonumber
\end{align}
for all $\eps \in (0, \eps_0)$. This establishes \eqref{invdiagabs} in the case of $\alpha_{\eps}  \geq 1$.

In all the following, $\alpha_{\eps}=\eps^2\mu(z+Q) < 1$. Then, it follows from \eqref{phi1212P} and \eqref{alphaest1} that
\begin{align}
\phi_{\eps}(z,  \underline{P}) = -\frac{\mu}{4\pi} \ln(\alpha_{\eps}) \Ket{u}\Bra{v} + O(1), \label{phi1212asymp}
\end{align}
where the norm of the operator $O(1)$ is uniformly bounded in  $\underline{P}$, $z>0$ and $\eps \in (0,\eps_0)$ as long as $\alpha_{\eps} < 1$.
To determine the inverse of $g_{\eps}^{-1}-\phi_{\eps}(z,\underline{P})$, we apply the identity (cf. \cite[Chapter I.1, Eq. (1.3.47)]{SolvableModels})
\begin{align}
\left( 1 +R +\beta \,\Ket{\widetilde{\eta}}  \Bra{\eta}\right)^{-1} \mspace{-5mu} = (1+R)^{-1} - \frac{ (1+R)^{-1}\Ket{\widetilde{\eta}}  \Bra{\eta}(1+R)^{-1} }{\beta^{-1}+
\Braket{\eta|(1+R)^{-1}\widetilde{\eta}}}, \label{Resid}
\end{align}
which is valid under the assumption that $\beta \in \C$, $\widetilde{\eta}, \eta \in \mathcal{H}$, $R,(1+R)^{-1} \in 
\LL(\mathcal{H})$ and $\beta^{-1}+\Braket{\eta|(1+R)^{-1}\widetilde{\eta}} \neq 0$ in some separable Hilbert space $\mathcal{H}$. This identity (with $R=0$) and $|g_{\eps}|= O(|\ln(\eps)|^{-1})$ yield
\begin{align}
\left( g_{\eps}^{-1}+ \frac{\mu}{4\pi} \ln(\alpha_{\eps}) \Ket{u}\Bra{v} \right)^{-1} &=
g_{\eps} \left( 1+\frac{g_{\eps}\mu}{4\pi} \ln(\alpha_{\eps}) \Ket{u}\Bra{v} ) \right)^{-1}  \nonumber \\
&= g_{\eps}^2\,f(z,\underline{P}, \eps)^{-1} \Ket{u}\Bra{v} + O(|\ln(\eps)|^{-1}), \label{invasymp}
\end{align}
provided that
\begin{align}
f(z,\underline{P}, \eps):=-\frac{4\pi}{\mu\ln(\alpha_{\eps})}-g_{\eps}\int V(r) \,\d{r}\neq 0.\nonumber
\end{align}
To derive a lower bound on $f(z,\underline{P}, \eps)$, we consider the two cases $\int  V(r) \, \d{r}/(2\pi) < a$ and $\int  V(r) \, \d{r}/(2\pi) = a$ separately. In both cases we assume that $z > z_0$ and $\eps \in (0, \eps_0)$ for some $z_0  \geq \mu^{-1}$ and $\eps_0 \in (0,1)$, which we will fix below. This implies that $\mu (z+Q) \geq 1$ and hence
\begin{align}
0 < -\ln(\alpha_{\eps}) \leq  2 |\ln(\eps)|. \label{lnest}
\end{align}

If  $\int  V(r) \, \d{r}/(2\pi) < a$, then we use \eqref{g-def2} and \eqref{lnest} to derive the lower bound
\begin{align}
f(z,\underline{P}, \eps) &\geq (\mu|\ln(\eps)|)^{-1} \left(2\pi  - \frac{1}{a} \int  V(r) \, \d{r} \right) + O(\ln(\eps)^{-2}) \qquad \qquad (\eps \downarrow 0). \label{flowerbound}
\end{align}
Hence, there exists $\eps_0 \in (0,1)$ such that $f(z,\underline{P}, \eps) > 0$ for all  $\underline{P}$, $z > z_0\geq \mu^{-1}$ and $\eps \in (0, \eps_0)$. Moreover, inserting \eqref{flowerbound} in \eqref{invasymp} and using $g_{\eps}^2= O(\ln(\eps)^{-2})$  shows that
\begin{align}
\left\| \left( g_{\eps}^{-1}+\frac{\mu}{4\pi} \ln(\alpha_{\eps}) \Ket{u}\Bra{v} \right)^{-1}\right\| =O(|\ln(\eps)|^{-1}) \qquad \qquad (\eps \downarrow 0). \nonumber
\end{align}
In view of \eqref{phi1212asymp}, this  proves \eqref{invdiagabs} for $\int  V(r)\, \d{r} / (2\pi) <  a$ and $\alpha_{\eps}<1$.

If  $\int  V(r) \, \d{r}/(2\pi) = a$, then we again use \eqref{g-def2} and \eqref{lnest} to derive a more refined version of \eqref{flowerbound}:
\begin{align}
|\ln(\eps)|^{2} |f(z,\underline{P}, \eps)| &= \frac{|\ln(\eps)|^2}{|\ln(\alpha_{\eps})|} \left|4 \pi \mu^{-1} + g_{\eps} \ln(\alpha_{\eps}) \int V(r) \, \d{r}\right| \nonumber \\
& \geq \frac{|\ln(\eps)|}{2\mu} \left|4 \pi  + g_{\eps} \mu \ln(\alpha_{\eps}) \int V(r) \, \d{r}\right| \nonumber \\
& = \frac{|\ln(\eps)|}{2\mu } \left|4 \pi  - \frac{\int  V(r) \,\d{r}}{a \ln(\eps)} \bigg( 2\ln(\eps) + \ln(\mu(z+Q))\bigg) + O\left(\frac{|\ln(\alpha_{\eps})|}{\ln(\eps)^2} \right) \right|\nonumber \\
& = \pi \mu^{-1} \ln(\mu(z+Q)) + O(1),  \nonumber
\end{align}
where the remainder $O(1)$ is uniformly bounded in  $\underline{P}$, $z > \mu^{-1} $ and $\eps \in (0,\eps_0)$.
By choosing $z_0 \geq \mu^{-1}$ large enough, we conclude that $|f(z,\underline{P}, \eps)| \geq \pi\ln(\mu(z+Q)) /(2\mu\ln(\eps)^2)>0$  for all $\underline{P}$, $z > z_0$ and $\eps \in (0,\eps_0)$. With the help of this bound, the desired estimate for the operator norm of $\left(g_{\eps}^{-1}- \phi_{\eps}(z,  \underline{P}) \right)^{-1}$ can be obtained similarly to the case  $\int  V(r) \, \d{r}/(2\pi) < a$. 
\end{proof}

An immediate consequence of Lemma \ref{lm4.2} is that $g_{\eps}^{-1}- \phi_{\eps}(z)$ is invertible for all small enough $\eps>0$ and large enough $z>0$. Yet, we shall see that the limit $\eps \to 0$ of the inverse operator essentially depends on the leading order of the coupling constant $g_{\eps}$ or, more precisely, on $a$. If $ \int V(r) \,  \d{r} /(2\pi) < a$, then it follows from \eqref{invdiagabs} that 
$ \lim_{\eps \to 0} \left(g_{\eps}^{-1}- \phi_{\eps}(z) \right)^{-1}$ vanishes for $z> z_0 > 0$, while for $ \int V(r) \,  \d{r} /(2\pi) = a$ this limit will turn out to be non-trivial. To prove this, we first recall that \eqref{phi1212P} and \eqref{Lkernel} imply that for fixed $\underline{P}$ the operator $\phi_{\eps}(z,\underline{P})$ is associated with the kernel
\begin{align}
\mu \mspace{1mu} u(r) \, G_{\eps^2\mu(z+Q)}(r-r') \, v(r'). \nonumber
\end{align}
In the notation of \cite[Chapter I.5]{SolvableModels} for the one-particle case, $(-g_{\eps}) \cdot \phi_{\eps}(z,\underline{P})$ agrees with the operator $B_{\eps}(k)$ for $k^2:=-\mu(z+Q)<0$, $\lambda_1:=1/a$ and $\lambda_2:=-b/a^2$. In view of \cite[Chapter I.5, Eq. (5.61)]{SolvableModels}\footnote[1]{There is a typo in \cite[Chapter I.5, Eq. (5.61)]{SolvableModels}: The term in curly braces is to be inverted.}, we thus expect that
\begin{align}
\lim_{\eps \to 0} \left(g_{\eps}^{-1}- \phi_{\eps}(z) \right)^{-1} =  \frac{\Ket{u}\Bra{v}}{\Braket{u|v}^2} \otimes D(z)^{-1}, \label{diaglim12}
\end{align}
where $D(z)$ is an (unbounded) operator in $\XX_{\sigma}$, which,
after passing to Fourier space, acts as the multiplication operator
\begin{align}
D(z,\underline{P})&:= \frac{\mu}{2\pi} \left( \ln\left( \frac{\sqrt{\mu(z+Q)}}{2}\right) + \gamma + 2 \pi \alpha \right), \label{DefD(z,P)} \\
\alpha &:=-\frac{b}{\Braket{u|v}} + \frac{\Braket{v|Lu}}{2 \pi  \Braket{u|v}^2}. \label{Defalpha}
\end{align} 
Here, $\gamma$ denotes the Euler–Mascheroni constant and $L$ is the Hilbert-Schmidt operator defined in Lemma \ref{lm4.1}. We now turn to the proof of \eqref{diaglim12}:
\begin{prop} \label{prop4.3}
Let $\int V(r) \, \d{r} /(2\pi) = a$ and let the hypotheses of Lemma \ref{lm4.1} be satisfied. 
Then $D(z)$ is invertible and \eqref{diaglim12} holds true for all large enough $z>0$.
\end{prop}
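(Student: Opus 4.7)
\emph{Proof plan.} The strategy is to invert $g_\eps^{-1} - \phi_\eps(z,\underline{P})$ pointwise in the conjugate variable $\underline{P}$ via the Sherman--Morrison identity \eqref{Resid}, and to show that the assumption $\int V(r)\,\d{r} = 2\pi a$, equivalently $\Braket{u|v} = 2\pi a$, forces the $O(1/\ln\eps)$ parts of the denominator to cancel so that the relevant ratio has a finite, non-trivial limit controlled by $D(z,\underline{P})$. Uniform estimates in $\underline{P}\in \R^{2(N-1)}$ then lift the pointwise convergence to convergence in $\LL(\widetilde{\XX}_\sigma)$ via the direct-integral decomposition provided by the Fourier transform in $(R,x_3,\ldots,x_N)$.

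The first step is to apply Lemma \ref{lm4.1} with $\alpha = \alpha_\eps = \eps^2\mu(z+Q)$ to \eqref{phi1212P} and split $\ln(\sqrt{\alpha_\eps}/2) = \ln\eps + \ln(\sqrt{\mu(z+Q)}/2)$ to obtain
\[
g_\eps^{-1} - \phi_\eps(z,\underline{P}) = M_\eps + c_\eps(z,\underline{P})\,\Ket{u}\Bra{v},
\]
where $M_\eps := g_\eps^{-1} I + \tfrac{\mu}{2\pi} L + R_\eps(z,\underline{P})$ with $\|R_\eps(z,\underline{P})\|_{\textup{HS}} = O(\alpha_\eps^{s/2})$ uniformly in $(\underline{P},z)$, and $c_\eps(z,\underline{P}) := \tfrac{\mu}{2\pi}\bigl(\ln\eps + \ln(\sqrt{\mu(z+Q)}/2) + \gamma\bigr)$. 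Since $|g_\eps^{-1}|\to\infty$, a Neumann series gives $M_\eps^{-1} = g_\eps I - g_\eps^2 \tfrac{\mu}{2\pi} L + O(|g_\eps|^3)$ in Hilbert--Schmidt norm, uniformly in $\underline{P}$ for $z>z_0$ large and $\eps<\eps_0$ small, and \eqref{Resid} then yields
\[
\bigl(g_\eps^{-1} - \phi_\eps(z,\underline{P})\bigr)^{-1} = M_\eps^{-1} - \frac{M_\eps^{-1} \Ket{u}\Bra{v} M_\eps^{-1}}{c_\eps^{-1} + \Braket{v|M_\eps^{-1} u}}.
\]

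The heart of the argument is to expand the scalar denominator to order $(\ln\eps)^{-2}$. Using \eqref{g-def2} we have $c_\eps^{-1} = \tfrac{2\pi}{\mu\ln\eps}\bigl(1 - \tfrac{\ln(\sqrt{\mu(z+Q)}/2)+\gamma}{\ln\eps}\bigr) + O((\ln\eps)^{-3})$ and $g_\eps\Braket{v|u} = -\tfrac{2\pi}{\mu\ln\eps} + \tfrac{2\pi b}{\mu a (\ln\eps)^2} + o((\ln\eps)^{-2})$, so the leading $1/\ln\eps$ parts cancel precisely because $\Braket{v|u}=2\pi a$. Combining this with the next-order contribution $-g_\eps^2\tfrac{\mu}{2\pi}\Braket{v|Lu}$ coming from $M_\eps^{-1}$ and rearranging via \eqref{Defalpha} produces
\[
c_\eps^{-1} + \Braket{v|M_\eps^{-1} u} = -\frac{4\pi^2}{\mu^2(\ln\eps)^2}\,D(z,\underline{P}) + o\bigl((\ln\eps)^{-2}\bigr).
\]
Since $M_\eps^{-1}\Ket{u}\Bra{v}M_\eps^{-1} = g_\eps^2 \Ket{u}\Bra{v} + o((\ln\eps)^{-2})$ and $g_\eps^2 \sim 4\pi^2/(\mu^2 \Braket{u|v}^2 (\ln\eps)^2)$, dividing and passing to the limit yields exactly $\Ket{u}\Bra{v}/(\Braket{u|v}^2 D(z,\underline{P}))$, i.e.\ the right-hand side of \eqref{diaglim12} pointwise in $\underline{P}$.

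Finally, invertibility of $D(z)$ and norm convergence in $\LL(\widetilde{\XX}_\sigma)$ follow from uniformity in $\underline{P}$: for $z$ large and $Q\geq 0$ one has $D(z,\underline{P}) \geq \tfrac{\mu}{2\pi}\bigl[\tfrac{1}{2}\ln(\mu z) - \ln 2 + \gamma + 2\pi\alpha\bigr]$, bounded below by a positive constant, so $D(z)^{-1}$ is a bounded multiplication operator on $\XX_\sigma$; all error terms depend on $\underline{P}$ only through $Q\geq 0$ appearing inside a logarithm or through the Hilbert--Schmidt bound \eqref{alphaest1}, and are therefore uniform in $\underline{P}$. The main obstacle I anticipate is the bookkeeping in the expansion of $c_\eps^{-1} + \Braket{v|M_\eps^{-1} u}$: the $O(1/\ln\eps)$ divergences must cancel identically (using the precise form of \eqref{g-def2}), and the $O((\ln\eps)^{-2})$ residue has to organize itself into the specific combination $\ln(\sqrt{\mu(z+Q)}/2) + \gamma - 2\pi b/\Braket{u|v} + \Braket{v|Lu}/\Braket{u|v}^2$ that defines $D(z,\underline{P})$, with extra care for the cross-terms between $L$ and the rank-one piece carried by $M_\eps^{-1}$.
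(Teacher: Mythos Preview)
Your overall strategy matches the paper's: write $g_\eps^{-1}-\phi_\eps(z,\underline{P})$ as a bounded perturbation of a rank-one operator, invert via the Sherman--Morrison identity \eqref{Resid}, and track the cancellation in the scalar denominator forced by $\Braket{u|v}=2\pi a$. The algebra you sketch for the $(\ln\eps)^{-2}$ residue is correct and does reproduce $D(z,\underline{P})$.

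The gap is your uniformity claim. You assert that the Neumann expansion $M_\eps^{-1}=g_\eps I-g_\eps^2\tfrac{\mu}{2\pi}L+O(|g_\eps|^3)$ holds ``uniformly in $\underline P$'', and later that all error terms are uniform because $Q$ appears only through a logarithm or through \eqref{alphaest1}. But the Hilbert--Schmidt remainder in \eqref{alphaest1} is $C\alpha_\eps^{s/2}=C\eps^s\bigl(\mu(z+Q)\bigr)^{s/2}$, which blows up as $Q\to\infty$ for fixed $\eps$. In particular, once $Q\gtrsim\eps^{-2}$ (so $\alpha_\eps\gtrsim 1$) the expansion \eqref{alphaest1} carries no information, your $R_\eps$ is merely $O(1)$, and the Neumann series for $M_\eps^{-1}$ need not even converge. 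So the pointwise limit is fine on every bounded $Q$-region, but you have not established the sup over $\underline P$ that norm convergence in $\LL(\widetilde\XX_\sigma)$ requires.

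The paper closes this gap by a splitting in $\underline P$: for $Q$ large it invokes the a priori bound of Lemma~\ref{lm4.2}, which shows $\bigl\|(g_\eps^{-1}-\phi_\eps(z,\underline P))^{-1}\bigr\|\le \widetilde C\max\{|\ln\eps|^{-1},\ln(\mu(z+Q))^{-1}\}$, so that both this inverse and $D(z,\underline P)^{-1}$ are smaller than any prescribed $\delta$ once $Q$ is large (and $\eps$ small). For $Q\le K$ the expansion \eqref{alphaest1} is uniform and your Sherman--Morrison computation goes through. You should insert this large-$Q$/small-$Q$ dichotomy; without it the final passage from pointwise to operator-norm convergence is unjustified.
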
 

\begin{proof}
Clearly, $D(z)$ is invertible if $ \ln\left( \sqrt{\mu z}/2\right) + \gamma + 2 \pi \alpha >0$, i.e. $z>z_1:=4\mu^{-1} \exp(-4\pi \alpha-2\gamma)$. 
Let $\eps_0, z_0>0$ be chosen as in Lemma \ref{lm4.2} and, for the rest of this proof, let $z> \max(z_0,z_1)$ be fixed. Then \eqref{diaglim12} is equivalent to the statement that
\begin{align}
\lim_{\eps \to 0} \sup_{\underline{P}} \left\|  \left(g_{\eps}^{-1}- \phi_{\eps}(z, \underline{P}) \right)^{-1} -D(z,\underline{P})^{-1} \frac{\Ket{u}\Bra{v}}{\Braket{u|v}^2}   \right\| = 0  \label{Puniformest}
 \end{align}
 with $\left\|\,\cdot\, \right\|$ denoting the operator norm in $\LL(L^2(\R^2))$.

The idea of the following proof is that both operators in the difference vanish as $|\underline{P}| \rightarrow \infty$, while for $|\underline{P}| \leq K$ the arguments from the one-particle case still work. To make this more explicit, we fix $\delta>0$. Then it follows from Lemma \ref{lm4.2} and \eqref{DefD(z,P)} that
\begin{align}
\left\|  \left(g_{\eps}^{-1}- \phi_{\eps}(z, \underline{P}) \right)^{-1} \right\| + \left\|D(z,\underline{P})^{-1}  \frac{\Ket{u}\Bra{v}}{\Braket{u|v}^2}   \right\| < \delta, \nonumber
\end{align} 
provided that $\eps>0$ is sufficiently small and $|\underline{P}|$, and hence $Q$, are sufficiently large. To prove \eqref{Puniformest}, it therefore suffices to show that for any $K>0$ there exists $\eps_K>0$ such that 
\begin{align} \sup_{\underline{P}\,:\, Q \leq K } \left\|  \left(g_{\eps}^{-1}- \phi_{\eps}(z, \underline{P}) \right)^{-1}-D(z,\underline{P})^{-1} \frac{\Ket{u}\Bra{v}}{\Braket{u|v}^2}     \right\| < \delta \label{Puniformest2}
 \end{align}
holds for all $\eps \in (0,\eps_K$).

Let $K>0$ be fixed. Then \eqref{alphaest1} implies that, for some constant $C=C(s,V)>0$,
\begin{align}
\left\| \phi_{\eps}(z, \underline{P}) + \frac{\mu}{2\pi} \left[ \left(\ln\left(
\frac{\eps}{2}\sqrt{\mu(z+Q)}\right) + \gamma\right)\mspace{-2mu} \Ket{u} \mspace{-3mu}\Bra{v} + L  \right] \right\|_{\textup{HS}} \leq C\mu^{1+s/2}(z+K)^{s/2}\eps^s  \nonumber
\end{align}
uniformly in $\underline{P}$ with $Q=Q(\underline{P}) \leq K$.
Combining this with the asymptotic behavior of $g_{\eps}$ in the limit $\eps \to 0$ (see \eqref{g-def2}), we see that
\begin{align}
&  (-g_{\eps})\cdot\phi_{\eps}(z, \underline{P}) \nonumber \\
& \,= \mspace{-1mu}-\frac{1}{2 \pi a}  \Ket{u}\mspace{-2mu}\Bra{v}- \frac{1}{2\pi a\ln(\eps)}\mspace{-1mu}
  \left[ \mspace{-2mu} \left\{  \ln \mspace{-3mu} \left(\mspace{-2mu} \frac{\sqrt{\mu(z+Q)}}{2}\right)  \mspace{-2mu}+  \mspace{-2mu}\gamma \mspace{-2mu}- \mspace{-2mu} \frac{b}{a} \mspace{-2mu}\right\} \mspace{-2mu} \Ket{u}\mspace{-2mu}\Bra{v} + L  \mspace{-2mu}  \right] \mspace{-2mu} + o\left(\frac{1}{|\ln(\eps)|} \right)  \label{phieps(z)asy}
\end{align}
is valid in Hilbert-Schmidt norm uniformly in $\underline{P}$ with $Q=Q(\underline{P}) \leq K$. 
As this coincides with the expansion of $B_{\eps}(k)$ from the one-particle case (cf. \cite[Chapter I.5, Eq. (5.56)]{SolvableModels}), the proof of \eqref{Puniformest2} now follows the line of arguments from that case. For the convenience of the reader, we spell out the details in the following.

To start with, we derive from \eqref{phieps(z)asy} the equation
\begin{align}
\left(g_{\eps}^{-1}- \phi_{\eps}(z, \underline{P}) \right)^{-1} = g_{\eps} \Big(1 + \beta(Q) \Ket{u}\Bra{v} + R_{\eps}  \Big)^{-1}, \label{phi1212invasymp} 
\end{align}
where
\begin{equation}
\begin{aligned}
\beta(Q)&:= -\frac{1}{2 \pi a}\left(1 + \frac{\widetilde{\beta}(Q)}{\ln(\eps)} \right), \\
 \widetilde{\beta}(Q)&:=  \ln \mspace{-3mu} \left(\mspace{-2mu} \frac{\sqrt{\mu(z+Q)}}{2}\right)  \mspace{-1mu}+  \mspace{-1mu}\gamma \mspace{-2mu}- \mspace{-2mu} \frac{b}{a}, \\
 R_{\eps}&:= -\frac{1}{2 \pi a \ln(\eps)} \, L + o\left(|\ln(\eps)|^{-1}\right). \nonumber
\end{aligned}
\end{equation}
The expansion of $R_{\eps}$ holds uniformly in $\underline{P}$ with $Q=Q(\underline{P}) \leq K$. To obtain an expression for the inverse on the right side of \eqref{phi1212invasymp}, we now apply the identity \eqref{Resid}. This yields that
\begin{align}
\Big(1 + \beta(Q) \Ket{u}\Bra{v} + R_{\eps}  \Big)^{-1}= (1+R_{\eps})^{-1} - \frac{ (1+R_{\eps})^{-1}\Ket{u}  \Bra{v}(1+R_{\eps})^{-1}} { \beta(Q)^{-1} + \Braket{v | (1+R_{\eps})^{-1} u }}  \label{Residuse}
\end{align}
as operators in $\mathscr{L}(L^2(\R^2))$. By the definitions of $\beta(Q)$ and $R_{\eps}$, and by $2 \pi a=  \Braket{u | v} $,
\begin{align}
\beta(Q)^{-1}=-  \Braket{u | v} \left( 1-  \frac{\widetilde{\beta}(Q)}{\ln(\eps)} \right) +  O\left(\ln(\eps)^{-2}\right), \qquad \qquad Q \leq K, \label{Defbeta(Q)}
\end{align}
and 
\begin{equation}
\Braket{v | (1+R_{\eps})^{-1} u } =  \Braket{u | v} + \frac{\Braket{v|L u}}{2 \pi a \ln(\eps)} + o\left(|\ln(\eps)|^{-1}\right), \label{Repsid}
\end{equation}
where  $\| R_{\eps} \|= O(|\ln(\eps)|^{-1})$ was used. Note that the sum of 
\eqref{Defbeta(Q)} and \eqref{Repsid} is of order $1/|\ln(\eps)|$ because $\Braket{u | v}$ cancels. It follows that the second term in \eqref{Residuse} is of order $|\ln(\eps)|$. Hence we may ignore the first summand,  $(1+R_{\eps})^{-1}$,
in Equation \eqref{Residuse} and we may replace the numerator in the second summand by $ \Ket{u} \mspace{-2mu}\Bra{v}$.  It follows that, uniformly in  $\underline{P}$ with $Q=Q(\underline{P}) \leq K$, 
\begin{align}
\Big(1  \mspace{-1mu}+ \mspace{-1mu} \beta(Q) \Ket{u} \mspace{-2mu}\Bra{v} + \mspace{-2mu} R_{\eps}  \Big)^{-1} \mspace{-4mu} = -\frac{\ln(\eps)}{\Braket{u|v}}
 \bigg[  \ln \mspace{-3mu} \left(\mspace{-2mu} \frac{\sqrt{\mu(z+Q)}}{2}\right)  \mspace{-1mu}+  \mspace{-2mu}\gamma +2 \pi \alpha  \bigg]^{-1} \mspace{-3mu} \Ket{u} \mspace{-2mu}\Bra{v} + o\left( \left|\ln(\eps)\right| \right) \nonumber
\end{align}
with $\alpha$ defined by \eqref{Defalpha}. Finally, it follows  from \eqref{phi1212invasymp} and from the asymptotics of $g_{\eps}$ with $a= \Braket{u|v}/(2 \pi)$ that \eqref{Puniformest2} with $D(z,\underline{P})$ from \eqref{DefD(z,P)} holds true for all sufficiently small $\eps>0$. 
This proves the lemma. 
\end{proof}

For later convenience, we now collect the conclusions of this section. To this end, we need to reinstall the index $\sigma$, which we dropped at the beginning of the section. 
Let $\sigma \in \II$ be a fixed pair. If  $V_{\sigma} \in L^1 \cap L^2(\R^2)$ and $\int |r|^{2s}\left|V_{\sigma}(r)\right| \,\d{r}< \infty$ for some $s \in (0,2)$, then the analogues of Lemma \ref{lm4.2} and Proposition \ref{prop4.3} show that $\Lambda_{\eps}(z)_{\sigma \sigma}=g_{\eps, \sigma}^{-1}- \phi_{\eps}(z)_{\sigma \sigma}$ is invertible for all sufficiently small $\eps>0$ and sufficiently large $z>0$ and
\begin{align}
\limeps (\Lambda_{\eps}(z)_{\sigma \sigma})^{-1} =\begin{cases}
\;0  \hspace*{6cm} &\textup{if} \; \int V_{\sigma}(r) \, \d{r} /(2\pi) <  a_{\sigma} \\  \dfrac{\Ket{u_{\sigma}}\Bra{v_{\sigma}}}{\Braket{u_{\sigma}| v_{\sigma}}^2} \otimes (\Theta(z)_{\sigma \sigma})^{-1} &\textup{if} \;\int V_{\sigma}(r) \, \d{r}/(2\pi)= a_{\sigma},
\end{cases}
\label{Lambdadiaginvlim}
\end{align}
where $\Theta(z)_{\sigma \sigma}$ is an invertible operator in $\XX_{\sigma}$. For $\sigma=(1,2)$, $\Theta(z)_{\sigma \sigma}=D(z)$ with $D(z)$ defined by \eqref{DefD(z,P)} and \eqref{Defalpha}. For general pairs $\sigma=(i,j)$, the operator $\Theta(z)_{\sigma \sigma}$ acts pointwise in $\underline{P}_{\sigma}=(P,p_1,...\widehat{p}_i...\widehat{p}_j ...,p_N)$ by multiplication with
\begin{equation}
\Theta(z, \underline{P}_{\sigma})_{\sigma \sigma} = \frac{\mu_{\sigma}}{4 \pi} \left[
 \ln\mspace{-2mu}\Bigg(z + \frac{P^2}{m_i+m_j} + \sum\limits_{\substack{n=1 \\ n \neq i,j}}^{N} \frac{p_n^2}{m_n}\Bigg) + \frac{\beta_{\sigma}}{\pi} \right], \qquad \qquad \sigma=(i,j), \label{DefTheta(z,P)diag}
\end{equation}
where 
\begin{align}
\beta_{\sigma}&=2\pi \left( \ln(\sqrt{\mu_{\sigma}}/2) + \gamma +2 \pi \alpha_{\sigma} \right), \label{Defbetasigma} \\
\alpha_{\sigma} &=-\frac{b_{\sigma}}{\Braket{u_{\sigma}|v_{\sigma}}} + \frac{\Braket{v_{\sigma}|Lu_{\sigma}}}{2 \pi  \Braket{u_{\sigma}|v_{\sigma}}^2}. \label{Defalphasigma}
\end{align}

\subsection{Analysis of $\Lambda_{\eps}(z)_{\textup{off}}$} \label{sec4.2}
If $N > 2$, then $\Lambda_{\eps}(z)$ has an off-diagonal part  $\Lambda_{\eps}(z)_{\textup{off}}$ defined by Eq. \eqref{DefLambdaoff}. We will see in this section that  $\Lambda_{\eps}(z)_{\textup{off}}$ is uniformly bounded in $\eps >0$ and $z>0$ and that, upon introducing a cutoff, it has a limit as $\eps \downarrow 0$. These results will allow us in Section \ref{sec4.3} to prove existence of  $\limepsr \Lambda_{\eps}(z)^{-1}$ for sufficiently large $z>0$. With this goal in mind, the results of this section are formulated for $z>0$ only, although most of them still hold in a slightly modified way for general $z \in \rho(H_0)$.
\subsubsection{Uniform boundedness of $\Lambda_{\eps}(z)_{\textup{off}}$ in $\eps>0$ and $z>0$} \label{sec4.2.1}
Recall from Eqs. \eqref{DefLambdaoff} and \eqref{Defphisigmany} that the non-vanishing components of $\Lambda_{\eps}(z)_{\textup{off}}$ are given by
\begin{align}
\Lambda_{\eps}(z)_{\sigma \nu} = - \eps^{-2} \, (u_{\sigma} \otimes 1) \, U_{\eps} \KK_{\sigma}R_0(z) \KK_{\nu}^{*} U_{\eps}^{*} (v_{\nu} \otimes 1), \qquad \qquad  \sigma, \nu \in \II, \sigma \neq \nu. \label{Defphisigmany2}
\end{align}
We will prove in this section, among other things, that for $\sigma \neq \nu$ the norm $\|\Lambda_{\eps}(z)_{\sigma \nu} \|$ is uniformly bounded in  $\eps >0$ and $z>0$, the main result being Proposition \ref{prop4.4} below. The presence of the distinct coordinate transformations $\KK_{\sigma}$ and $\KK_{\nu}$ in \eqref{Defphisigmany2} makes the proof very technical and somewhat tedious. Since the tools used in this proof are not needed anymore in the sequel, it is possible and advisable to take Proposition \ref{prop4.4} for granted and to skip the proof at first reading. \newline  
With this said, we now start developing the tools for proving Proposition \ref{prop4.4}.
To estimate the norm of $\Lambda_{\eps}(z)_{\sigma \nu} $, we compute its integral kernel in terms of the Green's function $G_{z,\underline{m}}$ of $H_0$, where the vector $\underline{m}:=(m_1,...,m_N)$ collects the masses of the $N$ particles. By a simple scaling argument, 
\begin{align}
G_{z,\underline{m}}(x_1,...,x_N)= \left(\prod_{i=1}^{N} m_i\right)
G^{2N}_{z}(\sqrt{m_1}x_1,...,\sqrt{m_N}x_N),\label{DefG_(z,m)}
\end{align}
where $G^{2N}_{z}$ denotes the usual Green's function of $-\Delta+z: H^2(\R^{2N}) \rightarrow L^2(\R^{2N})$.
As the choice of the pair $\sigma$ is immaterial for our estimates, we can assume that
$\sigma=(1,2)$ and $\nu=(k,l) \neq (1,2)$.
Now, with $U_{\eps}$, $\KK_{(1,2)}$ and $\KK_{(k,l)}^{*}$ defined by \eqref{DefUeps}, \eqref{DefK} and \eqref{DefK*}, respectively, we find that \eqref{Defphisigmany2} defines an integral operator and, for $\psi \in \widetilde{\XX}_{(k,l)}$,  
\begin{align}
& \left(\Lambda_{\eps}(z)_{(1,2)(k,l)} \,\psi\right)(r,R,x_3,...,x_N) 
 \nonumber \\ &\;= -\eps^{-2}  u_{(1,2)}(r) \mspace{-2mu} \int \mspace{-4mu} \textup{d}x_1' \cdots \textup{d}x_N' \, G_{z,\underline{m}} \mspace{-3mu} \left( \mspace{-3mu} R-\Scalefrac{\eps m_2 r}{m_1+m_2}-x_1', R+\Scalefrac{\eps m_1 r}{m_1+m_2}-x_2',x_3-x_3',...,x_{N}-x_{N}' \mspace{-3mu} \right) \nonumber \\
 &\; \;\;\; \cdot v_{(k,l)}\left(\Scalefrac{x_l'-x_k'}{\eps}\right) \, \psi\mspace{-2mu}\left(\Scalefrac{x_l'-x_k'}{\eps},\Scalefrac{m_kx_k'+m_lx_l'}{m_k+m_l},x_1',...\widehat{x_k'}...\widehat{x_l'}...,x_N' \right) \nonumber \\ &\;=  -u_{(1,2)}(r) \mspace{-2mu} \int \mspace{-3mu} \textup{d}x_1' \cdots \textup{d}x_N' \,\d{r'} \, \d{R'}  \,   G_{z,\underline{m}} \mspace{-2mu} \left( R-\eps c_{21} r-x_1', R+\eps c_{12} r-x_2',x_3-x_3',...,x_{N}-x_{N}' \right) \nonumber \\
 & \;\;\;\; \cdot v_{(k,l)}\left(r'\right) \, \psi\mspace{-2mu}\left(r', R', x_1',...\widehat{x_k'}...\widehat{x_l'}...,x_N' \right)  \delta\mspace{-2mu} \left(x_k'-R'+\eps c_{lk} r'\right) \delta\mspace{-2mu}\left(x_l'-R'-\eps c_{kl} r'\right)\mspace{-2mu}, \label{intkernelLambda}
\end{align}
where 
\begin{align}
c_{ij}:= \frac{m_i}{m_i+m_j}, \qquad \qquad i,j =1,...,N. \nonumber
\end{align}
The second equation of \eqref{intkernelLambda} was obtained by the substitution
\begin{align}
r':=\frac{x_l'-x_k'}{\eps}, \qquad R':=\frac{m_kx_k'+m_lx_l'}{m_k+m_l}, \nonumber
\end{align}
and two more integrations  were introduced that are compensated by delta distributions. In the following, we distinguish two cases: In the first case 
$\sigma=(1,2)$ and $\nu = (k,l)$ have one particle in common, so $k \in \{1,2\}$ and $l \geq 3$. In the second case $\sigma$ and $\nu$ are composed of distinct particles, which means that $3\leq k<l \leq N$.

If $\sigma=(1,2)$ and $\nu=(1,l)$ with $l \geq 3$, then, after the evaluation of the delta distributions in $x_1'$ and $x_l'$, \eqref{intkernelLambda} shows that the operator $\Lambda_{\eps}(z)_{\sigma \nu} $ simply acts by convolution in $(x_3,...\widehat{x_l}...,x_N)$.
Consequently, the explicit formula \eqref{DefG_(z,m)} for $G_{z,\underline{m}}$ and Lemma \ref{lmA1} $(vi)$ imply that  $\Lambda_{\eps}(z)_{\sigma \nu} $ acts pointwise in the conjugate variable $\underline{p}_{(1,l)}:=(p_3,...\widehat{p_l}...,p_N)$ by the operator $\Lambda_{\eps}(z, \underline{p}_{(1,l)})_{\sigma \nu}$ that has the kernel
\begin{align}
-u_{\sigma}(r)\left( m_1m_2 m_l \, G_{z+Q_{\nu}}^{6}\mspace{-3mu} \left(X_{\sigma \nu,\eps}\right) \right)\mspace{-2mu}v_{\nu}\mspace{-2mu}\left(r'\right), \qquad \qquad \sigma=(1,2), \nu=(1,l), l \geq 3,  \label{kernelLambda121j}
\end{align}
where
\begin{align}
&X_{(1,2) (1,l),\eps}:=\left(\begin{array}{c}\sqrt{m_1}(R- R'-\eps( c_{21}r- c_{l1}r')) \\\sqrt{m_2}(R-x_2'+\eps c_{12}r) \\ \sqrt{m_l} ( x_l-R'-\eps c_{1l} r') \end{array}\right) \in \R^6 \nonumber
\end{align}
and
\begin{align}
Q_{(k,l)}:=\displaystyle\sum\limits_{\substack{n=3 \\ n \neq k,l}}^{N} \frac{p_n^2}{m_n}, \qquad \qquad \nu=(k,l) \in \II. \nonumber 
\end{align}

Similar considerations for $\sigma =(1,2)$ and $\nu =(2,l)$ with $l \geq 3$ show that 
$\Lambda_{\eps}(z)_{\sigma \nu}$ acts pointwise in $\underline{p}_{(2,l)}:=(p_3,...\widehat{p_l}...,p_N)$ by the operator $\Lambda_{\eps}(z, \underline{p}_{(2,l)})_{\sigma \nu}$ with kernel
\begin{align}
-u_{\sigma}(r)\left( m_1m_2 m_l \, G_{z+Q_{\nu}}^{6}\mspace{-3mu} \left(X_{\sigma \nu,\eps}\right) \right)\mspace{-2mu}v_{\nu}\mspace{-2mu}\left(r'\right),  \qquad \qquad \sigma=(1,2), \nu=(2,l), l \geq 3,\label{kernelLambda122j}
\end{align}
where
\begin{align}
X_{(1,2)(2,l),\eps}:=\left(\begin{array}{c} \sqrt{m_1}(R- x_1'-\eps c_{21} r) \\ \sqrt{m_2}(R-R'+\eps(c_{12}r+c_{l2}r')) \\ \sqrt{m_l} ( x_l-R'-\eps c_{2l} r') \end{array}\right) \in \R^6.\nonumber
\end{align}
By inspection, the kernels \eqref{kernelLambda121j} and \eqref{kernelLambda122j} only differ by the permutations $x_1' \leftrightarrow x_2'$, $m_1 \leftrightarrow m_2$, $v_{(1,l)} \leftrightarrow v_{(2,l)}$ and the reflection $r\rightarrow -r$, which will allow us to analyze them simultaneously.

So far, we have considered all operators that appear in the case of $N\leq 3$ particles.
Let now $N>3$, $\sigma=(1,2)$ and $\nu=(k,l)$ with $3 \leq k < l \leq N$. Then, after the evaluation of the delta distributions in $x_k'$ and $x_l'$, it follows from \eqref{intkernelLambda} that $\Lambda_{\eps}(z)_{\sigma \nu}$ acts by convolution in $(x_3,...\widehat{x_k}...\widehat{x_l}...,x_N)$. Hence, the action of $\Lambda_{\eps}(z)_{\sigma \nu}$ is pointwise in the conjugate variables $\underline{p}_{(k,l)}:=(p_3,...\widehat{p_k}...\widehat{p_l}...,p_N)$ by the operator $\Lambda_{\eps}(z, \underline{p}_{(k,l)})_{\sigma \nu}$ with kernel
\begin{align}
-u_{\sigma}(r) \left(m_1 m_2 m_k m_l \, G_{z+Q_{\nu}}^{8} \mspace{-2mu}\left(X_{\sigma \nu,\eps}\right)\right)v_{\nu}\mspace{-2mu}\left(r'\right),  \qquad \quad \sigma=(1,2), \nu=(k,l), 3 \leq k < l \leq N,\label{kernelLambda12kl}
\end{align}
where
\begin{align}
X_{(1,2)(k,l),\eps}&:=\left(\begin{array}{c} \sqrt{m_1}(R -x_1'-\eps c_{21}r) \\ \sqrt{m_2}(R -x_2'+\eps c_{12}r) \\
\sqrt{m_k}(x_{k}-R'+\eps c_{lk} r') \\ \sqrt{m_l}(x_{l}-R'-\eps c_{kl} r') \end{array}\right) \in \R^8. \nonumber
\end{align}

Besides the bound for the norm of $\Lambda_{\eps}(z)_{\sigma \nu}$, we shall also estimate the error caused by cutting off the potential outside some ball of radius $h>0$ in the following proposition.
This, in turn, will reduce the proof of our convergence result (see Proposition \ref{prop4.8}) to the case of compactly supported potentials. For any pair $\sigma \in \II$ and $h>0$, let
\begin{equation}
V_{\sigma}^h(x):= \begin{cases}
V_{\sigma}(x) \qquad \quad &\textup{if} \;\, |x| \leq h \\
0    &\textup{else}
\end{cases} \nonumber
\end{equation}
and let $\Lambda_{\eps}^h(z)_{\sigma \nu}$ denote the operator $\Lambda_{\eps}(z)_{\sigma \nu}$, where the potentials $V_{\sigma}$ and $V_{\nu}$ are replaced by $V_{\sigma}^h$ and $V_{\nu}^h$, respectively. This means that the kernel of  $\Lambda_{\eps}^h(z)_{\sigma \nu}$ emerges from the kernel of $\Lambda_{\eps}(z)_{\sigma \nu}$ by replacing $u_{\sigma}$ and $v_{\nu}$  with $u_{\sigma}^h:=\sgn(V_{\sigma}) |V_{\sigma}^h|^{1/2}$ and $v_{\nu}^h:=|V_{\nu}^h|^{1/2}$, respectively.
\begin{prop}\label{prop4.4}
For all pairs $\sigma=(i,j), \nu=(k,l)$ with $\sigma \neq \nu$ the operator $\Lambda_{\eps}(z)_{\sigma \nu}$
is uniformly bounded in $z>0$ and $\eps>0$. Explicitly, it holds that
\begin{equation}
\|\Lambda_{\eps}(z)_{\sigma \nu} \| \leq C(\sigma, \nu) \,\| V_{\sigma} \|_{L^1}^{1/2}\| V_{\nu} \|_{L^1}^{1/2}, \label{Lambdanormbound}
\end{equation} 
where $ C(\sigma, \nu) :=  (4\sqrt{2})^{-1} m_i m_j m_k m_l/\min(m_i,m_j, m_k,m_l)^3$. Furthermore, for any $h>0$,
\begin{align}
\|\Lambda_{\eps}(z)_{\sigma \nu} -\Lambda_{\eps}^h(z)_{\sigma \nu}\| &\leq  C(\sigma, \nu) \left( \| V_{\sigma} \|_{L^1} \| V_{\nu} \|_{L^1}  -  \| V_{\sigma}^h \|_{L^1} \| V_{\nu}^h \|_{L^1} \right)^{1/2}. \label{Lambdacutest}    
\end{align}
\end{prop}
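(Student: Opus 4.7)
The goal is to establish uniform-in-$(z,\eps)$ bounds on the operator norm of $\Lambda_{\eps}(z)_{\sigma\nu}$ for $\sigma\neq\nu$ together with the cutoff estimate \eqref{Lambdacutest}. Both will come from a single Schur-type argument applied to the explicit integral kernels written down in \eqref{kernelLambda121j}, \eqref{kernelLambda122j}, and \eqref{kernelLambda12kl}. Since the norm of $\Lambda_{\eps}(z)_{\sigma\nu}$ equals the essential supremum over $\underline{p}$ of the norms of the fibered operators $\Lambda_{\eps}(z,\underline{p})_{\sigma\nu}$, it is enough to bound the latter uniformly in $\underline{p}$ as well. By the permutation symmetry of the particle labels one may restrict to $\sigma=(1,2)$ and separate into the two cases (i) $\nu\in\{(1,l),(2,l)\}$ with $l\geq 3$, where the Green's function appearing in the kernel is $G^{6}_{z+Q_{\nu}}$, and (ii) $\nu=(k,l)$ with $3\leq k<l\leq N$, where it is $G^{8}_{z+Q_{\nu}}$.

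The plan is to apply Schur's test to the positive kernel
\begin{equation*}
|K(y,y')|=|u_{\sigma}(r)|\cdot(\text{mass prefactor})\cdot G^{2d}_{z+Q_{\nu}}(X)\cdot|v_{\nu}(r')|,
\end{equation*}
with a weight chosen so as to respect the factorization into $u_{\sigma}$ and $v_{\nu}$. For both Schur integrals, the idea is to change variables from the input or output position coordinates to the components $(X_{1},\ldots,X_{d})$ of the argument of the Green's function; the Jacobian of this transformation is a product of particle masses together with an $\eps$-factor arising from the coupling between two of the $X_{i}$'s via the common variable shared between $\KK_{\sigma}$ and $\KK_{\nu}^{*}$. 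After this change, the partial-integration identity $\int G^{2d}_{\alpha}\,dX_{j_{1}}\cdots dX_{j_{k}}=G^{2(d-k)}_{\alpha}(X_{\mathrm{rest}})$ allows one to integrate out as many $X_{i}$'s as possible, reducing to a lower-dimensional Green's function evaluated on an affine subspace.

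The main obstacle is to see that what remains is bounded by a $z$- and $\eps$-independent multiple of $\|V_{\sigma}\|_{L^{1}}^{1/2}\|V_{\nu}\|_{L^{1}}^{1/2}$. After the $X$-integrations one is left with a modified-Bessel factor ($K_{0}$ in case (i), $K_{1}$ in case (ii)) evaluated on the remaining affine constraint between the relative coordinates $r$ and $r'$; a naive Cauchy--Schwarz in $r$ or $r'$ alone would reintroduce a factor of $1/(\eps\sqrt{z+Q_{\nu}})$ that would wreck the estimate. The cure is to carry the Bessel factor through both relative-coordinate integrations before invoking Cauchy--Schwarz, so that the $\eps$-factor coming from its argument is paired exactly with the $\eps^{-1}$ already sitting in the definitions of $A_{\eps,\sigma}$ and $B_{\eps,\sigma}$; the elementary identities $c_{ij}+c_{ji}=1$ satisfied by the mass ratios are what make the $r$- and $r'$-dependences decouple in that final integration. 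Keeping track of all Jacobian and mass factors then yields the explicit prefactor $C(\sigma,\nu)=(4\sqrt{2})^{-1}\,m_{i}m_{j}m_{k}m_{l}/\min(m_{i},m_{j},m_{k},m_{l})^{3}$.

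The cutoff estimate \eqref{Lambdacutest} follows by applying exactly the same argument to the kernel of the difference $\Lambda_{\eps}(z)_{\sigma\nu}-\Lambda_{\eps}^{h}(z)_{\sigma\nu}$, using the telescoping decomposition $u_{\sigma}v_{\nu}-u_{\sigma}^{h}v_{\nu}^{h}=(u_{\sigma}-u_{\sigma}^{h})v_{\nu}+u_{\sigma}^{h}(v_{\nu}-v_{\nu}^{h})$. Each of the two resulting pieces satisfies a bound of the form \eqref{Lambdanormbound} with $V_{\sigma}$ or $V_{\nu}$ replaced by the truncated difference, and the identity $\|V_{\sigma}-V_{\sigma}^{h}\|_{L^{1}}=\|V_{\sigma}\|_{L^{1}}-\|V_{\sigma}^{h}\|_{L^{1}}$ (and analogously for $V_{\nu}$) recombines them under a single square root into the stated bound.
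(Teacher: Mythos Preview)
Your plan diverges from the paper's at the first step, and as written the Schur scheme has a gap. If you apply Schur's test to the full kernel with the natural weights $h_1\propto|u_\sigma(r)|$, $h_2\propto|v_\nu(r')|$, then after integrating out the centre-of-mass and spectator variables via your change to the $X$-components you are left (in case~(i)) with $G^2_{z+Q}$ evaluated at a point of the form $\sqrt{\mu}\,(c_0-\eps r')$. The remaining integral $\int |V_\nu(r')|\,G^2_{z+Q}(\sqrt{\mu}(c_0-\eps r'))\,dr'$ is \emph{not} uniformly bounded in $c_0$ and $\eps$: for $c_0=0$ it behaves like $-(2\pi)^{-1}\ln(\eps)\,\|V_\nu\|_{L^1}$. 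You flag this difficulty, but the proposed cure (``carry the Bessel factor through both relative-coordinate integrations before invoking Cauchy--Schwarz'') is not a Schur-test step and is not spelled out.

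The paper avoids this by \emph{not} doing Schur on the full kernel. It first applies Cauchy--Schwarz in the single variable $r'$ to $\|\Lambda_\eps(z,\underline{p})_{\sigma\nu}\psi\|^2$, which factors out $\int W(r,r')^2\,dr\,dr'=\|V_\sigma\|_{L^1}\|V_\nu\|_{L^1}$ and leaves a supremum over $r$. Then translations in $R',x_l,x_2'$ remove the $\eps r,\eps r'$ shifts from $X_{\sigma\nu,\eps}$ entirely, and monotonicity of the Green's function in $z$ and in the masses (Lemma~\ref{lmA1}\,(v)) replaces $m_1,m_2,m_l$ by $m=\min(m_1,m_2,m_l)$; this step, which you do not mention, is what produces the $\min(\ldots)^{-3}$ in $C(\sigma,\nu)$. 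Only then is Schur's test applied, to the resulting $\eps$-free operator $F$ (respectively $B$) of Lemma~\ref{lmA3}, with weight $|x-y|^{-1}$.

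A smaller point on \eqref{Lambdacutest}: your telescoping plus the triangle inequality gives $\sqrt{x}+\sqrt{y}$, not $\sqrt{x+y}$, so the two pieces do not ``recombine under a single square root'' and you lose a factor $\sqrt{2}$. The paper instead runs the same Cauchy--Schwarz argument with $W=u_\sigma v_\nu-u_\sigma^h v_\nu^h$ directly and uses $u_\sigma u_\sigma^h=|V_\sigma^h|$, $v_\nu v_\nu^h=|V_\nu^h|$ to get $\int W^2=\|V_\sigma\|_{L^1}\|V_\nu\|_{L^1}-\|V_\sigma^h\|_{L^1}\|V_\nu^h\|_{L^1}$ exactly.
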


\begin{proof}
Without loss of generality, we may assume that $\sigma=(1,2)$ and then we have to establish \eqref{Lambdanormbound} and \eqref{Lambdacutest} for all pairs $\nu \neq (1,2)$. We first consider the case $\nu=(1,l)$ with $l \geq 3$.

The proofs of \eqref{Lambdanormbound} and \eqref{Lambdacutest} are similar: It follows from \eqref{kernelLambda121j} that in both cases we have to estimate the norm of an operator that, for fixed $\underline{p}_{(1,l)}$, is given by a kernel of the form
\begin{align}
m_1 m_2 m_l \, W(r,r') \,G_{z+Q}^{6}\left( X_{\eps}\right),\nonumber
\end{align}
where  $X_{\eps}:=X_{(1,2)(1,l),\eps}$ and $Q:=Q_{(1,l)}$ for short. Explicitly, we have $W(r,r')=u_{(1,2)}(r)v_{(1,l)}(r')$ in the case of \eqref{Lambdanormbound} and $W(r,r')=u_{(1,2)}(r)v_{(1,l)}(r')-u_{(1,2)}^h(r)v_{(1,l)}^h(r')$ in the case of \eqref{Lambdacutest}. Therefore, we  only demonstrate the desired estimate in the case of \eqref{Lambdanormbound}.

For $\psi \in L^2(\R^{2} \times \R^{2} \times \R^{2})$, the Cauchy-Schwarz inequality in the $r'$-integration yields 
\begin{align}
&(m_1m_2m_l)^{-2} \|\Lambda_{\eps}(z,\underline{p}_{(1,l)})_{(1,2)(1,l)} \psi \|^2  \nonumber \\
&\quad = \int \mspace{-3mu} \d{r}\,\d{R}\,
\d{x_{l}} \left| \int \mspace{-3mu} \d{r'} \,  \d{R'} \,
 \d{x_{2}'} \; W(r,r')\, G^{6}_{z+Q}(X_{\eps})\, \psi(r',R',x_{2}') \right|^2 \nonumber \\
 &\quad\,= \int \mspace{-3mu} \d{r}\,\d{R}\, \d{x_{l}} \left| \int \mspace{-3mu} \d{r'} \,  W(r,r') \int \mspace{-3mu} \d{R'} \,  \d{x_{2}'} \; G^{6}_{z+Q}(X_{\eps})\, \psi(r',R',x_{2}') \right|^2 \nonumber \\
 & \quad \leq \int \mspace{-3mu} \d{r}\,\d{R}\, \d{x_{l}}  \left\{  \int \mspace{-3mu} \d{r'} \,  W(r,r') ^2 \right\}  \int \mspace{-3mu} \d{r'} \left| \int \mspace{-3mu} \d{R'} \, \d{x_{2}'} \; G^{6}_{z+Q}(X_{\eps})\, \psi(r',R',x_{2}') \right|^2 \nonumber \\
 & \quad \leq     \left\{  \int \mspace{-3mu} \d{r} \, \d{r'} \,  W(r,r')^2  \right\} \cdot \sup_{r \in \R^2} \left(  \int \mspace{-3mu} \d{r'} \d{R}\, \d{x_{l}} \left| \int \mspace{-3mu} \d{R'} \,
 \d{x_{2}'} \; G^{6}_{z+Q}(X_{\eps})\, \psi(r',R',x_{2}') \right|^2 \right)\mspace{-3mu}, \label{CSuse} 
\end{align}
where 
\begin{align}
\int \mspace{-3mu} \d{r} \, \d{r'} \;  W(r,r')^2   = \| V_{(1,2)} \|_{L^1}\| V_{(1,l)} \|_{L^1}. \label{L2normW}
\end{align}
For a further estimate of \eqref{CSuse}, we fix $r \in \R^2$. Then triangle inequality and the sequence of substitutions $R'+\eps (c_{21}r-c_{l1}r') \rightarrow  R'$, $x_l+\eps c_{21}r- \eps r' \rightarrow x_l$, $x_2'- \eps  c_{12}r \rightarrow x_2'$ in the first step and the monotonicity of the Green's function w.r.t. $z$ and $m_1,m_2,m_l$ (cf. Lemma \ref{lmA1} $(v)$) in the second step yield that
\begin{align}
&\int \mspace{-3mu} \d{r'} \, \d{R}\,\d{x_{l}} \left| \int \mspace{-3mu} \d{R'} \, \d{x_{2}'} \, G^{6}_{z+Q}\mspace{-1mu}\left(X_{\eps} \right)\,  \psi(r',R',x_{2}') \right|^2 \nonumber \\
& \quad = \int \mspace{-3mu} \d{r'} \, \d{R}\,\d{x_{l}} \left( \int \mspace{-3mu} \d{R'} \, \d{x_{2}'} \, G_{z+Q}^6 \mspace{-2mu} \left(\sqrt{m_1}(R-R'),\sqrt{m_2}(R-x_2'),\sqrt{m_l}(x_l-R') \right)
\widetilde{\psi}(r'\mspace{-1mu},\mspace{-1mu}R'\mspace{-1mu},\mspace{-1mu}x_2') \right)^2 \nonumber \\
& \quad \leq \int \mspace{-3mu} \d{r'} \left\|F \widetilde{\psi}(r',\,\cdot\,)\right\|^2, \label{CSuse2}
\end{align}
where $\widetilde{\psi} \in L^2(\R^2 \times \R^2 \times \R^2)$ is given by
\begin{align}
\widetilde{\psi}(r',R',x_2'):=\left|\psi\left(r',R'- \eps (c_{21}r-c_{l1}r'), x_2'+ \eps c_{12}r  \right)\right| \nonumber
\end{align}
and $F: L^2\left(\R^2 \times \R^2 , \textup{d}\left(R,x_2\right) \right) \rightarrow L^2\left(\R^2 \times \R^2, \textup{d}\left(R,x_l\right) \right)$ is defined by the kernel
\begin{align}
G_{z}^6\left(\sqrt{m}(R-R'),\sqrt{m}(R-x_2'),\sqrt{m}(x_l-R')\right), \qquad \qquad m=\min(m_1,m_2, m_l).\nonumber
\end{align}
By Lemma \ref{lmA3}, $F$ is bounded and $\|F\| \leq (4\sqrt{2}m^2)^{-1}$.
Using this together with the fact that $\|\psi\|=\|\widetilde{\psi}\|$, we obtain from \eqref{CSuse2} that
\begin{align}
&\int \mspace{-3mu} \d{r'} \, \d{R}\,\d{x_{l}} \left| \int \mspace{-3mu} \d{R'} \, \d{x_{2}'} \, G^{6}_{z+Q}\mspace{-1mu}\left(X_{\eps} \right)\,  \psi(r',R',x_{2}') \right|^2 
\leq (32 \min(m_1,m_2, m_l)^4)^{-1} \|\psi \|^2,  \label{kernelL2}
\end{align}
where the right side is independent of $r \in \R^2$ and $\underline{p}_{(1,l)}$.
Hence, \eqref{Lambdanormbound} for $\sigma=(1,2)$ and $\nu=(1,l)$ now follows by combining \eqref{CSuse}, \eqref{L2normW} and \eqref{kernelL2}.

As already mentioned, the bound \eqref{Lambdacutest} is obtained in the same manner with the only difference that $W(r,r')=u_{\sigma}(r)v_{\nu}(r')-u_{\sigma}^h(r)v_{\nu}^h(r')$ in this case. Then, \eqref{L2normW} has to be replaced by the identity
\begin{align}
\int \mspace{-3mu} \d{r} \, \d{r'} \;  W(r,r')^2   = \| V_{\sigma} \|_{L^1} \| V_{\nu} \|_{L^1}  -  \| V_{\sigma}^h \|_{L^1} \| V_{\nu}^h \|_{L^1}, \nonumber
\end{align}
which is a consequence of the identities $u_{\sigma} \cdot u_{\sigma}^h=|V_{\sigma}^h|$ and $v_{\nu} \cdot v_{\nu}^h=|V_{\nu}^h|$. This completes the proof of \eqref{Lambdanormbound} and \eqref{Lambdacutest} if $\sigma=(1,2)$ and $\nu=(1,l)$ with $l \geq 3$. 

If $\sigma=(1,2)$ and $\nu=(2,l)$ with $l \geq 3$, then the kernel of $\Lambda_{\eps}(z)_{\sigma \nu}$ only differs from the kernel of $\Lambda_{\eps}(z)_{(1,2)(1,l)}$  by the permutations $x_1' \leftrightarrow x_2'$, $m_1 \leftrightarrow m_2$, $v_{(1,l)} \leftrightarrow v_{(2,l)}$ and the (unitary) reflection $r\rightarrow -r$. Therefore, the above proof also works in this case. 

If $\sigma=(1,2)$ and $\nu=(k,l)$ with $3 \leq k < l \leq N$, then the operator $\Lambda_{\eps}(z)_{\sigma \nu}$ acts pointwise in $\underline{p}_{(k,l)}=(p_3,...\widehat{p_k}...\widehat{p_l}...,p_N)$ by the operator $\Lambda_{\eps}(z, \underline{p}_{(k,l)})_{\sigma \nu}$ that is defined by the kernel \eqref{kernelLambda12kl}. In this case, the above estimates have to be slightly adjusted. The role of the operator $F$ is now played by the operator $B \in \LL\left(L^2(\R^2 \times \R^2\times \R^2 )\right)$ from Lemma \ref{lmA3}. As the bounds for $\|F\|$ and for $\|B\|$ in Lemma \ref{lmA3} differ by a factor of $m^{-1}$, while there is one mass factor more in front of the Green's function for $\nu=(k,l)$ with $3 \leq k < l \leq N$ than for $\nu=(1,l)$ with $l \geq 3$, we again obtain \eqref{Lambdanormbound} and \eqref{Lambdacutest} with $C(\sigma, \nu)$ given in the statement of the proposition.
\end{proof}

\subsubsection{The off-diagonal limit operators} \label{sec4.2.2}
This section is a preparation for the next one, where we shall be concerned with the convergence, as $\eps \to 0$, of
\begin{align}
\Lambda_{\eps}(z)_{\sigma \nu}=-B_{\eps, \sigma}R_0(z)A_{\eps, \nu}^{*}, \qquad \qquad \sigma \neq \nu. \label{DefLambdaepssigmany2}
\end{align}

From Proposition \ref{prop3.3} we know that
\begin{align}
 R_0(z) A_{\eps,\nu}^{*}  \rightarrow S(z)_{\nu}^{*} =  G(z)_{\nu}^{*} \Bra{v_{\nu}}, \qquad \qquad (\eps \downarrow 0)  \label{nulimit}
\end{align}
in $\LL(\widetilde{\XX}_{\nu}, \HH)$, and
\begin{align}
B_{\eps, \sigma} \rightarrow \Ket{u_{\sigma}} T_{\sigma}, \qquad \qquad (\eps \downarrow 0)  \label{Bsigmalimit}
\end{align}
in $\LL(H^2(\R^{2N}), \widetilde{\XX}_ {\sigma})$. 
Here, $ \Ket{u_{\sigma}}: \XX_{\sigma} \rightarrow \widetilde{\XX}_{\sigma}$ is defined by $ \Ket{u_{\sigma}} \psi = u_{\sigma} \otimes \psi$ and $ \Bra{v_{\nu}}: \widetilde{\XX}_{\nu} \rightarrow \XX_{\nu}$ is the adjoint of $ \Ket{v_{\nu}}$. The negative of the formal composition of the limits in \eqref{nulimit} and \eqref{Bsigmalimit} is the operator  
\begin{align}
\Ket{u_{\sigma}}\Bra{v_{\nu}} \otimes  \Theta(z)_{\sigma \nu}\nonumber
\end{align}
with
\begin{align}
\Theta(z)_{\sigma \nu}:=-T_{\sigma} G(z)_{\nu}^{*}, \qquad \qquad \sigma \neq \nu. \label{DefThetaoff}
\end{align}

In the remainder of this section, we show that \eqref{DefThetaoff} defines an element $\Theta(z)_{\sigma \nu} \in  \LL(\XX_{\nu}, \XX_{\sigma})$. We begin by computing representations of $T_{\sigma}$ and $G(z)_{\nu}^{*}$ in Fourier space. Using the definitions \eqref{DefTsigma}, \eqref{Deftau} and \eqref{DefK} of $T_{\sigma}$, $\tau$ and $\KK_{\sigma}$, respectively, we find for $\sigma=(i,j)$ and $\psi \in D(T_{\sigma})$ that 
\begin{align}
&(\widehat{T_{(i,j)}\psi})\left(P,p_1,...\widehat{p_i}...\widehat{p_j}...,p_N\right)=
\frac{1}{2\pi}\mspace{-3mu} \int \mspace{-3mu} \d{p} \; \widehat{\KK_{(i,j)}\psi}\left(p,P,p_1,...\widehat{p_i}...\widehat{p_j}...,p_N\right) \nonumber \\
& \quad= \frac{1}{2\pi}\mspace{-3mu}\int\mspace{-3mu} \d{p} \; \widehat{\psi}\mspace{-2mu}\left(p_1,...,p_{i-1}, \frac{m_i P}{m_i+m_j}-p,p_{i+1},...,p_{j-1}, \frac{m_jP}{m_i+m_j}+p , p_{j+1}, ..., p_{N} \right). \label{TsigmaFourier}
\end{align}
To compute the Fourier transform of $G(z)_{\nu}^{*} w$ for $z>0$, $\nu=(k,l)$, and $w \in \XX_{\nu}$, we use
$G(z)_{\nu}=T_{\nu}R_0(z)$ in combination with \eqref{TsigmaFourier} and the substitution 
\begin{align}
P:=p_k+p_l, \quad p:= \frac{m_k p_l - m_l p_k}{m_k+m_l}. \nonumber
\end{align}
After a straightforward computation, we find that $\Braket{ w| G(z)_{\nu} \psi}=\Braket{ G(z)_{\nu}^{*}w| \psi}$ for any $\psi \in \HH$, where
\begin{align}
\widehat{ (G(z)_{\nu}^{*} w) }\left(p_1,...,p_N\right) = \frac{1}{2\pi} \left(z+ \sum_{n=1}^{N}\frac{p_n^2}{m_n} \right)^{-1} \mspace{-12mu}\cdot \widehat{w}\mspace{-3mu}\left(p_k+p_l,p_{1}, ...  \widehat{p}_k... \widehat{p}_l ..., p_N\mspace{-3mu}\right). \label{G_(k,l)(z)*Fourier}
\end{align}
With these identities at hand, we now come to the main result of this section:

\begin{prop}\label{prop4.5}
Let $\sigma = (i,j) \neq (k,l)= \nu$ and $z>0$ be given. Then  $\Theta(z)_{\sigma \nu}= -T_{\sigma}G(z)_{\nu}^{*}$ defines a bounded operator and $\|\Theta(z)_{\sigma \nu} \| \leq \max(m_i,m_j,m_k, m_l)/4$.
\end{prop}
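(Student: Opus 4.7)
My plan is to compute $T_{\sigma} G(z)_{\nu}^{*}$ explicitly in Fourier space using the representations \eqref{TsigmaFourier} for $T_\sigma$ and \eqref{G_(k,l)(z)*Fourier} for $G(z)_\nu^*$. The composition acts on $\widehat{w}$ by multiplying by $(z+\sum_n p_n^2/m_n)^{-1}$, substituting the first argument $p_k+p_l$ and then integrating out the relative momentum of pair $\sigma$. The proof splits into two sub-cases according to whether $\sigma=(i,j)$ and $\nu=(k,l)$ are disjoint or share one particle. In both cases the overall strategy is the same: reduce, by a suitable change of variables, to an integral operator on $L^2(\R^2) \to L^2(\R^2)$ for each slice of the remaining "external" momenta, and then bound that operator by a 2d Hilbert-type inequality.

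In the \emph{disjoint} case ($\{i,j\}\cap\{k,l\}=\emptyset$) I would parametrise each pair by its center-of-mass and relative momentum. Since $\sum_n p_n^2/m_n$ then decouples across pairs, the reduced sliced operator has, up to a prefactor of $(4\pi^2)^{-1}$, kernel $(\alpha + p^2/\mu_{\nu} + q^2/\mu_{\sigma})^{-1}$ from $L^2(\R^2,dp)$ to $L^2(\R^2,dq)$. The rescaling $p=\sqrt{\mu_\nu}\,\tilde p$, $q=\sqrt{\mu_\sigma}\,\tilde q$ turns this into the kernel $(\alpha+|\tilde p|^2+|\tilde q|^2)^{-1}$, whose operator norm on $L^2(\R^2)$ equals $\pi^2$: indeed, the kernel is radial in $|\tilde p|,|\tilde q|$, so it annihilates all nonzero angular momentum modes and on the $l=0$ sector one gets, via the substitution $u=|\tilde p|^2$, $v=|\tilde q|^2$, the Hilbert operator with kernel $\pi/(u+v)$ of norm $\pi^2$. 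Tracking the Jacobians $\mu_\sigma,\mu_\nu$ produces $\|\Theta(z)_{\sigma\nu}\| \le \sqrt{\mu_\sigma\mu_\nu}/4$, and since each reduced mass is bounded by the smaller mass in its pair, this is at most $\max(m_i,m_j,m_k,m_l)/4$.

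In the \emph{shared particle} case I would use Jacobi coordinates for the three-particle subsystem made of the common particle and the two distinct partners. The three-body kinetic energy diagonalises as $P_{123}^2/M + (\pi_1^\sigma)^2/\mu_\sigma + (\pi_2^\sigma)^2/\mu'$ in Jacobi-$\sigma$ (and analogously in Jacobi-$\nu$), and both frames share $P_{123}$. The trace $T_\sigma$ integrates out $\pi_1^\sigma$. The algebraic key is that the linear map $(\pi_1^\sigma,\pi_2^\sigma)\leftrightarrow(\pi_1^\nu,\pi_2^\nu)$ has $|\det|=1$ and, more specifically, $\pi_2^\nu$ equals $\pi_1^\sigma$ shifted by a known multiple of $\pi_2^\sigma$. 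After this substitution and rescalings $\pi_1^\sigma=\sqrt{\mu_\sigma}\,y$, $\pi_2^\sigma=\sqrt{\mu'}\,z$, the sliced operator becomes $L^2(\R^2,dy)\to L^2(\R^2,dz)$ with kernel $(\beta + |y+cz|^2+|z|^2)^{-1}$ for an explicit, mass-dependent $c\in\R$.

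The main obstacle is bounding this last operator by $\pi^2$ uniformly in $\beta\ge 0$ and $c\in\R$; it is neither Hilbert--Schmidt nor a tensor product, and the shear $y+cz$ couples input and output. My plan is to exploit that the kernel depends only on $|y+cz|$ and $|z|$, hence is invariant under simultaneous rotations; the operator therefore decomposes over angular-momentum sectors, and on each sector reduces to an integral operator on $L^2(\R_+)$ with a $(-1)$-homogeneous kernel, whose norm equals $\pi\,I(c)$ with $I(c)=\int_0^\infty t^{-1/2}(1+2(1-c^2)t+(1+c^2)^2 t^2)^{-1/2}\,dt$. An elementary calculation shows $\partial_{c^2} I < 0$, so $I(c)\le I(0)=\pi$, which gives the desired $\pi^2$ bound. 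Combined with the Jacobian factors this yields $\|\Theta(z)_{\sigma\nu}\| \le \sqrt{\mu_\sigma\mu'}/4 = \sqrt{m_i m_j m_l/(m_i+m_j+m_l)}/4 \le \max(m_i,m_j,m_k,m_l)/4$, as required.
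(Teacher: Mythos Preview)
Your argument is correct and yields sharper constants than the paper, but the paper takes a considerably more elementary route. Rather than passing to Jacobi coordinates and analysing the sheared kernel $(\beta+|y+cz|^2+|z|^2)^{-1}$, the paper simply discards all but two terms in the denominator: from \eqref{G_(k,l)(z)*Fourier} one bounds $(z+\sum_n p_n^2/m_n)^{-1}\le \max(m_j,m_l)\,(p_j^2+p_l^2)^{-1}$, where $j\in\sigma\setminus\nu$ and $l\in\nu\setminus\sigma$ (for the shared-particle case $\sigma=(1,2)$, $\nu=(1,3)$ one keeps $p_2,p_3$). A Cauchy--Schwarz in the remaining integration variable(s) then reduces the bilinear form to $\int\frac{|f(p_j)||g(p_l)|}{p_j^2+p_l^2}\,dp_j\,dp_l$, and Lemma~\ref{lm4.6} bounds this by $\pi^2\|f\|\|g\|$. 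The disjoint case is handled the same way. This avoids entirely the need to control the shear: the coupling between the pairs is absorbed by Cauchy--Schwarz rather than by your monotonicity argument for $I(c)$. What your approach buys is the tighter bound $\sqrt{\mu_\sigma\mu_\nu}/4$ (disjoint) or $\sqrt{m_im_jm_l/(m_i+m_j+m_l)}/4$ (shared), both strictly smaller than $\max(m_i,m_j,m_k,m_l)/4$ in general; what the paper's approach buys is brevity, since it needs only the single elementary inequality of Lemma~\ref{lm4.6}. One minor point: your radial kernel is $(-1)$-homogeneous only after reducing to $\beta=0$; you should either note that the positive kernel is pointwise decreasing in $\beta$ (hence so is the operator norm) before invoking homogeneity, or observe that the Schur test with weight $u^{-1/2}$ yields the bound $\pi I(c)$ for all $\beta\ge0$ directly.
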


In the proof of Proposition \ref{prop4.5} we will need the following result taken from \cite[Lemma 3.1]{Figari}. For the convenience of the reader, we give a short proof with a different constant here.

\begin{lemma}\label{lm4.6}
For all $f,g \in L^2(\R^2)$, 
\begin{align}
\int \frac{|f(x)| |g(x')| }{|x|^2+|x'|^2} \, \d{x} \, \d{x'} \leq \pi^2 \|f\| \|g\|. \label{intbound} 
\end{align}
\end{lemma}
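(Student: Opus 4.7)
The plan is to reduce the two-dimensional integral to a one-dimensional Hilbert-type inequality via polar coordinates and a quadratic change of variables.

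First I would pass to polar coordinates $x = r_1\omega_1$, $x'=r_2\omega_2$, with $\omega_i\in S^1$, so that $\d{x}\,\d{x'} = r_1 r_2\, \d{r_1}\, \d{r_2}\, \d{\omega_1}\,\d{\omega_2}$. Setting $F(r):=(\int_{S^1}|f(r\omega)|^2\,\d{\omega})^{1/2}$ and $G(r)$ analogously (so that $\|f\|^2=\int_0^\infty F(r)^2 r\,\d{r}$ and similarly for $g$), an application of the Cauchy--Schwarz inequality in the angular variables, using $|S^1|=2\pi$, gives
\begin{align*}
\int \frac{|f(x)||g(x')|}{|x|^2+|x'|^2}\,\d{x}\,\d{x'}
\;\leq\; 2\pi\int_0^\infty\!\!\int_0^\infty \frac{r_1 r_2\, F(r_1) G(r_2)}{r_1^2+r_2^2}\,\d{r_1}\,\d{r_2}.
\end{align*}

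Next, the substitution $u=r_1^2$, $v=r_2^2$ (so that $\d{u}=2r_1\,\d{r_1}$ and $\d{v}=2r_2\,\d{r_2}$) transforms the right-hand side into
\begin{align*}
\frac{\pi}{2}\int_0^\infty\!\!\int_0^\infty \frac{\varphi(u)\psi(v)}{u+v}\,\d{u}\,\d{v},
\qquad \varphi(u):=F(\sqrt{u}),\quad \psi(v):=G(\sqrt{v}),
\end{align*}
and the same substitution yields $\|\varphi\|_{L^2(0,\infty)}^2 = 2\|f\|^2$ and $\|\psi\|_{L^2(0,\infty)}^2 = 2\|g\|^2$.

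At this point I would invoke the classical Hilbert inequality
\begin{align*}
\int_0^\infty\!\!\int_0^\infty \frac{\varphi(u)\psi(v)}{u+v}\,\d{u}\,\d{v} \;\leq\; \pi\, \|\varphi\|_{L^2(0,\infty)}\,\|\psi\|_{L^2(0,\infty)},
\end{align*}
which is standard (proved, e.g., by a Schur test with weight $t^{-1/2}$, or by writing $1/(u+v)=\int_0^\infty e^{-(u+v)t}\,\d{t}$ and Cauchy--Schwarz). Combining the three inequalities produces the bound $\pi^2\|f\|\|g\|$. The only conceptually non-trivial step is the Hilbert inequality itself, but since it is a textbook result, the main task is just to verify that the prefactors track correctly through the polar-coordinate reduction and the $u=r^2$ substitution to yield exactly the constant $\pi^2$.
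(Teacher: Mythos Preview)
Your proof is correct, and the constants track through exactly as you claim: the Cauchy--Schwarz step over $S^1$ contributes a factor $2\pi$, the substitution $u=r^2$ turns the radial norms into $\|\varphi\|=\sqrt{2}\,\|f\|$, $\|\psi\|=\sqrt{2}\,\|g\|$, and Hilbert's inequality supplies the remaining factor $\pi$, giving $\tfrac{\pi}{2}\cdot\pi\cdot 2=\pi^2$.

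The paper takes a different, shorter route: it applies the Schur test directly to the kernel $K(x;x')=(|x|^2+|x'|^2)^{-1}$ on $L^2(\R^2)$ with the weight $h(x)=|x|^{-1}$, computing
\[
|x'|\int_{\R^2}\frac{1}{|x|^2+|x'|^2}\,\frac{\d{x}}{|x|}
= 2\pi\,|x'|\int_0^\infty\frac{\d r}{r^2+|x'|^2}=\pi^2 .
\]
The two arguments are morally the same: under your substitution $u=|x|^2$ the Schur weight $|x|^{-1}$ becomes $u^{-1/2}$, which is exactly the weight in the standard Schur-test proof of the one-dimensional Hilbert inequality. So your approach unfolds the paper's one-liner into its classical 1D ancestor; it is longer but makes the origin of the constant $\pi^2$ more transparent.
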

\begin{proof} Let $K$ denote the integral operator in $L^2(\R^2)$ that is defined by the kernel
$K(x;x')=(|x|^2+|x'|^2)^{-1}$. Using the Schur test with $h(x)=|x|^{-1}$, it is straightforward to verify that $K$ defines a bounded operator and that
\begin{align}
\| K \| &\leq \esssup_{x' \in \R^2}\left(|x'| \int\limits_{\R^2} \frac{1}{|x|^2+|x'|^2} \frac{1}{|x|} \, \d{x}\right) = \pi^2. \nonumber
\end{align}
Hence, $\Braket{|f|, K|g|} \leq \pi^2 \|f\| \|g\|$, which establishes \eqref{intbound}.
\end{proof}

\begin{proof}[Proof of Proposition \ref{prop4.5}] Without loss of generality, we may assume that $\sigma=(1,2)$. Moreover, it suffices to consider the two cases $\nu=(1,3)$ and $\nu=(3,4)$ corresponding to pairs with one common particle and  no common particle, respectively. 

To show that $ \ran\,G(z)_{\nu}^{*} \subset D(T_{\sigma})$, or equivalently, that  $ \ran\, \KK_{\sigma}G(z)_{\nu}^{*} \subset D(\tau)$, we have to verify that, for all  $w \in \XX_{\nu}$,
\begin{align}
\widehat{\phi}(P,p_3,...,p_N):= \frac{1}{2\pi}\mspace{-3mu}\int\mspace{-3mu} \d{p} \,  \left|  \widehat{G(z)_{\nu}^{*} w}\mspace{-2mu}\left(\frac{m_1 P}{m_1+m_2}-p, \frac{m_2P}{m_1+m_2}+p , p_{3}, ..., p_{N} \right) \right|\nonumber
\end{align}
defines an $L^2$-function. To this end, by the Riesz-Lemma, it suffices to show that 
\begin{align}
\int  \mspace{-3mu} \d{P} \, \d{p_3} \cdots \d{p_N} \, |(\widehat{\psi} \widehat{\phi})(P,p_3,...,p_N)| \leq \textup{const}. \|\psi\|\nonumber
\end{align}
for all $\psi \in \XX_{\sigma}$. The substitution 
\begin{align}
P:=p_1+p_2, \quad p:= \frac{m_1 p_2 - m_2 p_1}{m_1+m_2} \nonumber
\end{align} 
and the identity \eqref{G_(k,l)(z)*Fourier} for $G(z)_{\nu}^{*}w$ show that
\begin{align}
&\int \mspace{-3mu} \d{P} \, \d{p_3} \cdots \d{p_N} \, |\widehat{\psi}(P,p_3,...,p_N)| |\widehat{\phi}(P,p_3,...,p_N)| \nonumber\\
&\,=\frac{1}{2\pi}\int \mspace{-3mu}\d{p_1} \cdots \d{p_N} \,
|\widehat{\psi}(p_1+p_2, p_3,..., p_N)| \left|  \widehat{G(z)_{\nu}^{*} w}\mspace{-2mu}\left(p_1, ..., p_{N} \right) \right| \nonumber\\
&\,\leq \frac{\max(m_2,\mspace{-2mu}m_l)}{4\pi^2} \int \mspace{-5mu}\d{p_1} \cdots \d{p_N}
\frac{|\widehat{\psi}(p_1+p_2, p_3,..., p_N)| \mspace{-2mu} \left|\widehat{w}\mspace{-3mu}\left(p_k+p_l,p_{1}, ...  \widehat{p}_k... \widehat{p}_l ..., p_N\mspace{-3mu}\right) \right| }{p_2^2 + p_l^2}.  \label{vphibound}
\end{align}

If $\nu=(k,l)=(1,3)$, let $\underline{p}:=(p_4,...,p_N)$ and set
\begin{equation}
\begin{aligned}
f(p_3,\underline{p}):= \left( \int \mspace{-3mu} \d{p_1}\; |\widehat{\psi}(p_1,p_3,\underline{p})|^2 \right)^{1/2} \\
g(p_2,\underline{p}):= \left( \int \mspace{-3mu} \d{p_1} \; |\widehat{w}(p_1,p_2,\underline{p})|^2 \right)^{1/2}.\nonumber
\end{aligned} 
\end{equation}  
Then $f(\cdot, \underline{p}), g(\cdot, \underline{p}) \in L^2(\R^2)$ for almost all $\underline{p}$. Hence, after applying the Cauchy-Schwarz inequality in the $p_1$-integration, it follows from Lemma \ref{lm4.6} that  
\begin{align}
&\int \mspace{-3mu} \d{\underline{p}} \, \d{p_2}\, \d{p_3} \, \d{p_1} \;
\frac{|\widehat{\psi}(p_1+p_2, p_3,\underline{p})| \left|\widehat{w}\mspace{-3mu}\left(p_1+p_3,p_{2},\underline{p}\right) \right| }{p_2^2 + p_3^2} \nonumber \\
& \quad \leq \int \mspace{-3mu} \d{\underline{p}} \, \d{p_2}\, \d{p_3} \; \frac{|f(p_3,\underline{p})| \left|g(p_2,\underline{p}) \right| }{p_2^2 + p_3^2} \leq \pi^2 \| \psi\| \| w\|. \label{nu=(1,3)}
\end{align}

If $\nu=(k,l)=(3,4)$, we set $\underline{p}:=(p_5,...,p_N)$ and
\begin{equation}
\begin{aligned}
f(p_4,\underline{p}):= \left( \int \mspace{-3mu} \d{p_1}\, \d{p_3} \; |\widehat{\psi}(p_1,p_3,p_4,\underline{p})|^2 \right)^{1/2} \\
g(p_2,\underline{p}):= \left( \int \mspace{-3mu} \d{p_1}\, \d{p_3} \; |\widehat{w}(p_3,p_1,p_2,\underline{p})|^2 \right)^{1/2}.\nonumber
\end{aligned} 
\end{equation}  
Using the Cauchy-Schwarz inequality in the $(p_1,p_3)$-integration, we conclude from Lemma \ref{lm4.6} that  
\begin{align}
&\int \mspace{-3mu} \d{\underline{p}} \, \d{p_2}\, \d{p_4} \, \d{p_1} \, \d{p_3} \;
\frac{|\widehat{\psi}(p_1+p_2, p_3,p_4,\underline{p})| \left|\widehat{w}\mspace{-3mu}\left(p_3+p_4,p_{1},p_{2}, \underline{p}\right) \right| }{p_2^2 + p_4^2} \nonumber \\
& \quad \leq \int \mspace{-3mu}\d{\underline{p}} \, \d{p_2}\, \d{p_4} \; \frac{|f(p_4,\underline{p})| \left|g(p_2,\underline{p}) \right| }{p_2^2 + p_4^2} \leq \pi^2 \| \psi\| \| w\|.\label{nu=(3,4)}
\end{align}

From  \eqref{vphibound}, \eqref{nu=(1,3)} and \eqref{nu=(3,4)}, it follows that $\widehat{\phi} \in L^2$ and hence $\ran\,G(z)_{\nu}^{*} \subset D(T_{\sigma})$. Furthermore, the above estimates  now show that 
\begin{align}
\left| \Braket{\psi|T_{\sigma}G(z)_{\nu}^{*} w}  \right| \leq \frac{\max(m_1,m_2,m_3,m_4)}{4} \|\psi \| \|w\|.\nonumber
\end{align} 
Therefore,  $\Theta(z)_{\sigma \nu}= -T_{\sigma}G(z)_{\nu}^{*}$ is a bounded operator, whose norm satisfies the desired estimate.
\end{proof}

\subsubsection{Cutoff functions and convergence} \label{sec4.2.3}
The goal of this section is to prove Proposition \ref{prop4.8} on the convergence, as $\eps \to 0$, of  $\Lambda_{\eps}(z)_{\sigma \nu}$ with a suitable space cutoff $\chi_{\sigma \nu,c}$ defined below. In the analysis of $\limepsr \Lambda_{\eps}(z)^{-1}$ in Section \ref{sec4.3}, this cutoff will be removed again. 

We begin by motivating the cutoff: By \eqref{Bsigmalimit}, $\lim_{\eps \to 0} B_{\eps, \sigma}$ exists in $\LL(H^2(\R^{2N}), \widetilde{\XX}_{\sigma})$. Here, the Sobolev index $2$ could be reduced to $1+\delta$ for some $\delta > 0$ but not further because of the presence of the trace operator $T_{\sigma}$ in \eqref{Bsigmalimit}. Moreover, by \eqref{nulimit}, $\lim_{\eps \to 0} R_0(z)  A_{\eps, \nu}^{*}$ exists in $\LL(\widetilde{\XX}_{\nu}, \HH)$, or perhaps in $\LL(\widetilde{\XX}_{\nu},H^{1-\delta}(\R^{2N}))$, which is not enough to prove that $\Lambda_{\eps}(z)_{\sigma \nu}$ given by \eqref{DefLambdaepssigmany2} has a limit as $\eps \rightarrow 0$. The space cutoff to be introduced below removes the singularity due to $T_{\nu}^{*}$. 
More explicitly, for $\nu=(1,2)$, $v=v_{\nu}$ and $\widetilde{R}_0(z)=(\widetilde{H}_0 +z)^{-1}$ (see \eqref{H0H0Tilde} and \eqref{DefH0Tilde}), the operator $\widetilde{R}_0(z) \eps^{-1} U_{\eps}^{*} (v \otimes 1)$ has the kernel
\begin{align}
 G_{z,\underline{\widetilde{m}}}\left(r-\eps r', R-R',x_3-x_3',...,x_N-x_N'\right) v(r'), \nonumber
\end{align}
 where $\underline{\widetilde{m}}:=(\mu,m_1+m_2,m_3,...,m_N)$.
 Hence, if $v$ has compact support and $|r| \geq c >0$ is enforced by a space cutoff 
 $\chi_c$, then, for sufficiently small $\eps>0$, the singularity of $G_{z,\underline{\widetilde{m}}}$ at the origin is avoided and, by Lemma \ref{lmA2}, this means that the kernel becomes a smooth function of $(r,R,x_3,...,x_N)$.

Let $\chi \in C^{\infty}(\R,[0,1])$ be a function with 
$\chi(x)=0$ for $x\leq 1$ and $\chi(x)=1$ for $x \geq 2$ and set $\chi_c(r):=\chi(|r|/c)$ for $r \in \R^2$ and $c>0$.
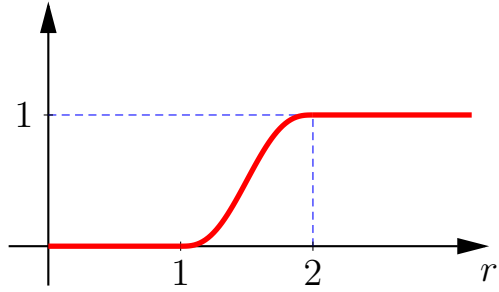
\begin{figure}[H]
 \begin{center}
 \begin{tikzpicture}[scale=0.58,>={Triangle[length=12,width=6]}]]
  \draw[->, thick] (-0.9cm,0) -- (10cm,0) node[align=center,below=0.1cm] {\Large $r$};
  \draw[->, thick] (0,-0.9cm) -- (0,5.6cm) node[yshift=-0.3cm,xshift=-0.4cm] { };
  \draw[densely dashed, blue] (6,0) -- (6,3);   \draw[densely dashed, blue] (0,3) -- (6,3);
  \foreach \x/\xtext in { 1,2}
  \draw (3 * \x cm,3pt) -- (3 * \x cm,-3pt) node[anchor=north] {\Large $\xtext$};
  \foreach \y/\ytext in {1}
  \draw (3pt,\y*3 cm) -- (-3pt,\y*3 cm) node[anchor=east] {\Large $\ytext$};
  \draw[red, domain=1:2, line width=2pt] 
  plot[smooth] (3*\x,{3*( 10*(\x-1)^3 -15*(\x-1)^4 + 6*(\x-1)^5)}) ;
 \draw[color=red, line width=2pt] (0,0) -- (3,0);
 \draw[color=red, line width=2pt] (6,3) -- (9.6,3);
 \end{tikzpicture} 
 \end{center}
\caption{The cutoff function $\chi$} 
\end{figure} 
\noindent Then we have the following result:
\begin{lemma}\label{lm4.7}
Assume that $v \in L^2(\R^2)$ has compact support $\supp(v) \subset \{x \in \R^2 \,| \, |x|\leq h\}$ for some $h>0$
and let $c,z>0$. Then, for all $n \in \N_0$, the limit 
\begin{align}
\lim\limits_{\eps \to 0} (\chi_c \otimes 1) \widetilde{R}_0(z) \eps^{-1} U_{\eps}^{*} (v \otimes 1) = (\chi_c \otimes 1) \KK S(z)^{*}\nonumber
\end{align}
exists in  $\LL(L^2(\R^{2N}),H^n(\R^{2N}))$, where $\KK=\KK_{(1,2)}$ and   $S(z)=S(z)_{(1,2)}$.
\end{lemma}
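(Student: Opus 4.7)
The plan is to exploit the fact that, for small $\eps$, the cutoff $\chi_c$ pushes the Green's function $G_{z,\underline{\widetilde{m}}}$ of $\widetilde{H}_0 + z$ away from its singularity, where it and all its derivatives are smooth and exponentially decaying. Using \eqref{H0H0Tilde} and unitarity of $\KK$, the operator in question equals $\KK R_0(z) A_{\eps,(1,2)}^*$, whose integral kernel reads
\begin{align*}
K_\eps(r,\underline{X};r',\underline{X}') := G_{z,\underline{\widetilde{m}}}(r-\eps r',\underline{X}-\underline{X}')\,v(r'),\qquad \underline{X}=(R,x_3,\ldots,x_N),
\end{align*}
while the candidate limit $\KK S(z)^*$ has kernel $K_0 := G_{z,\underline{\widetilde{m}}}(r,\underline{X}-\underline{X}')\,v(r')$. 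A first-order Taylor expansion in the first argument yields
\begin{align*}
(K_\eps - K_0)(r,\underline{X};r',\underline{X}') = -\eps \int_0^1 r'\cdot (\nabla_1 G_{z,\underline{\widetilde{m}}})(r-t\eps r',\underline{X}-\underline{X}')\,dt\cdot v(r'),
\end{align*}
which exhibits the expected $O(\eps)$ gain explicitly.

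Next, fix $\eps_0:=c/(2h)$. For $\eps\in(0,\eps_0)$, on the support of $\chi_c(r)v(r')$ one has $|r|\geq c$ and $|r'|\leq h$, hence $|r-t\eps r'|\geq c/2$ for all $t\in[0,1]$. By the $2N$-dimensional analogue of Lemma \ref{lmA2}, every derivative of $G_{z,\underline{\widetilde{m}}}$ is smooth and exponentially decaying on the set $\{|y_1|\geq c/2\}\times\R^{2(N-1)}$, and $\chi_c$ has bounded derivatives of all orders. Hence, for any multi-index $\alpha$ in the output variables $(r,R,x_3,\ldots,x_N)$ with $|\alpha|\leq n$, the Leibniz rule decomposes $\partial^\alpha[\chi_c(r)(K_\eps-K_0)]$ into finitely many terms of the form
\begin{align*}
\eps\, \chi_c^{(k)}(r)\, v(r')\,(r')^\gamma \int_0^1 (\partial^\beta G_{z,\underline{\widetilde{m}}})(r-t\eps r',\underline{X}-\underline{X}')\,dt,
\end{align*}
each pointwise majorized by $\eps\,|v(r')|\,|r'|^{|\gamma|}\,\Phi(r-t\eps r',\underline{X}-\underline{X}')$ for some smooth function $\Phi$ that is integrable and exponentially decaying in all its arguments on the region $\{|y_1|\geq c/2\}$.

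The $L^2\to L^2$ operator norm of each such kernel is then estimated by combining Young's inequality in the translation-invariant variables $\underline{X}-\underline{X}'$ (where $\Phi(y_1,\cdot)\in L^1_{\underline{Y}}$ uniformly for $|y_1|\geq c/2$) with a Schur-type estimate in $(r,r')$ exploiting the compact support of $v$ and the exponential decay of $\Phi$ in its first variable (which prevents the output support $\{|r|\geq c\}$ from destroying the $L^2$-bound). Summing over $|\alpha|\leq n$ yields
\begin{align*}
\bigl\|(\chi_c \otimes 1)\bigl(\widetilde{R}_0(z)\eps^{-1}U_\eps^*(v\otimes 1) - \KK S(z)^*\bigr)\bigr\|_{\LL(L^2(\R^{2N}),H^n(\R^{2N}))} = O(\eps),
\end{align*}
which in particular gives the asserted convergence in $\LL(L^2,H^n)$ for every $n\in\N_0$.

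The main technical obstacle is the loss of translation invariance in the $r$-variable once $\chi_c$ is applied, in contrast to the preserved convolution structure in $(R,x_3,\ldots,x_N)$. The argument must combine a convolution estimate (Young) in these latter variables with a non-convolution estimate in $(r,r')$, while carefully tracking the higher derivatives of $G_{z,\underline{\widetilde{m}}}$ and the combinatorial structure from Leibniz's rule, to ensure that all bounds are uniform in $\eps\in(0,\eps_0)$.
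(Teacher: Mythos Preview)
Your approach is correct but takes a genuinely different route from the paper. You work directly with the integral kernel: Taylor-expand $G_{z,\underline{\widetilde m}}(r-\eps r',\underline X-\underline X')$ to first order in $\eps$, use the cutoff to keep $|r-t\eps r'|\geq c/2$, invoke Lemma~\ref{lmA2} for smoothness and exponential decay of $\partial^\beta G_{z,\underline{\widetilde m}}$ away from the origin, and then estimate each term in the Leibniz expansion by a mixed Young (in $\underline X-\underline X'$)/Schur (in $(r,r')$) argument. This yields the explicit rate $O(\eps)$ and is conceptually straightforward, though the Schur step in the case $\alpha_1=0$ (where $\chi_c$ itself is not compactly supported in $r$) genuinely relies on the exponential decay of $\Phi$ in its first argument, which you correctly flag but do not carry out; to make it rigorous you would use $e^{-\lambda|(y_1,\underline Y)|}\leq e^{-\lambda|y_1|/\sqrt2}e^{-\lambda|\underline Y|/\sqrt2}$ to factor the bound.

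The paper instead argues by induction on $n$, never touching kernels. The base case $n=0$ is immediate from Proposition~\ref{prop3.3}. For the step $n\to n+1$, one writes, for $\eta=\partial^\alpha\chi_c$,
\[
\eta\,\widetilde R_0(z)=\widetilde R_0(z)\,\eta+\mu^{-1}\widetilde R_0(z)(\Delta_r\eta)\widetilde R_0(z)-2\mu^{-1}\sum_{k}D_k\widetilde R_0(z)(\partial_k\eta)\widetilde R_0(z),
\]
observes that $\widetilde R_0(z)\,\eta\,\eps^{-1}U_\eps^*(v\otimes1)=0$ once $\eps h<c$ (disjoint supports), and applies the induction hypothesis to the remaining terms since $\widetilde R_0(z),D_k\widetilde R_0(z):H^n\to H^{n+1}$ are bounded. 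This is cleaner and avoids the kernel bookkeeping entirely, at the price of not displaying a rate. Your approach has the advantage of being self-contained (it does not rely on the $n=0$ convergence from Section~\ref{sec3}) and of making the $O(\eps)$ rate explicit.
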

 
\begin{proof} Clearly, the lemma will follow if we show that
\begin{align}
\lim\limits_{\eps \to 0} \, (\partial^{\alpha}\chi_c)  \widetilde{R}_0(z) \eps^{-1} U_{\eps}^{*} (v \otimes 1) = (\partial^{\alpha} \chi_c) \KK S(z)^{*} \qquad \qquad \textup{in}\; \LL(L^2(\R^{2N}),H^{n}(\R^{2N})) \label{cutconv}
\end{align}
for all $n \in \N_0$ and all multi-indices $\alpha \in \N_0^{2}$, where $\partial^{\alpha} \chi_c=(\partial^{\alpha}\chi_c) \otimes 1$ for short. Using \eqref{H0H0Tilde} in combination with
the adjoint of \eqref{Aepsijnew} and \eqref{nulimit}, it follows that 
\begin{align}
\lim\limits_{\eps \to 0} \widetilde{R}_0(z) \eps^{-1} U_{\eps}^{*} (v \otimes 1) 
= \KK S(z)^{*}  \qquad \quad \textup{in} \; \LL(L^2(\R^{2N})),\nonumber
\end{align}
so \eqref{cutconv} holds for $n=0$. 

We now proceed by induction and assume that \eqref{cutconv} holds for some $n \in \N_0$ and all $\alpha \in \N_0^2$. To show that \eqref{cutconv} also holds for $n+1$, we are going to use that
\begin{align}
\eta \widetilde{R}_0(z) = \widetilde{R}_0(z) \left[ \widetilde{H}_0, \eta  \right] \widetilde{R}_0(z) + \widetilde{R}_0(z)\eta , \qquad \qquad \eta= \partial^{\alpha}\chi_c, \label{Comm1}
\end{align}
where $[\cdot, \cdot]$ denotes the commutator and
\begin{align}
\lim\limits_{\eps \to 0} \widetilde{R}_0(z) \eta \, \eps^{-1} U_{\eps}^{*} (v \otimes 1) 
= 0 \qquad \qquad \textup{in}\; \LL(L^2(\R^{2N}),H^{n+1}(\R^{2N})) \label{Aftercommlim}
\end{align}
because $ \eta \, U_{\eps}^{*} (v \otimes 1)=0$ for $\eps h < c$.
Let $D_k: H^1(\R^{2}) \rightarrow L^2(\R^{2})$ denote the partial derivative $D_k \varphi:= \partial_{k} \varphi$ for $k \in \{1,2\}$. From the definition $\eqref{DefH0Tilde}$ of $\widetilde{H}_0$ and from $\Delta_r = D_1 D_1 + D_2 D_2$, it follows that 
\begin{align}
\mu  \left[ \widetilde{H}_0,\eta \right] =  \left[ -\Delta_r , \eta\right] 
= -\sum_{k=1}^2 \left(D_k (\partial_k \eta) + (\partial_k  \eta) D_k \right)
=  (\Delta_r \eta) - 2 \sum_{k=1}^2 D_k (\partial_k \eta), \label{Comm2}
\end{align}
where $\mu=\mu_{(1,2)}$ is a reduced mass. Equations \eqref{Comm1} and \eqref{Comm2} imply that
\begin{align}
\eta \widetilde{R}_0(z) = \widetilde{R}_0(z) \eta + \mu^{-1} \widetilde{R}_0(z)(\Delta_r \eta) \widetilde{R}_0(z) - 2 \mu^{-1} \sum_{k=1}^2  D_k \widetilde{R}_0(z)(\partial_k \eta)  \widetilde{R}_0(z).  \label{Commid}
\end{align}
After multiplying \eqref{Commid} with $\eps^{-1} U_{\eps}^{*} (v \otimes 1)$ from the right,
we get from \eqref{Aftercommlim} combined with the induction hypothesis and the fact that 
$ \widetilde{R}_0(z) $ and $D_k \widetilde{R}_0(z)$ belong to  $\LL(H^n(\R^{2N}),H^{n+1}(\R^{2N}))$ that $\lim_{\eps \to 0} \eta \widetilde{R}_0(z) \eps^{-1} U_{\eps}^{*} (v \otimes 1)$ exists  in $\LL(L^2(\R^{2N}),H^{n+1}(\R^{2N}))$. Furthermore, the limit operator has to be the one in \eqref{cutconv} because convergence in $\LL(L^2(\R^{2N}),H^{n+1}(\R^{2N}))$ implies convergence in $\LL(L^2(\R^{2N}),H^{n}(\R^{2N}))$. Hence, \eqref{cutconv} holds with $n+1$ in place of $n$ and the proof is complete.
\end{proof}
 
With the help of Lemma \ref{lm4.7} we can now prove convergence of the regularized operators $(1 \otimes \chi_{\sigma \nu,c}) \Lambda_{\eps}(z)_{\sigma \nu}$ for pairs  $\sigma=(i,j) \neq (k,l)= \nu$, and a cutoff function $\chi_{\sigma \nu,c}$ defined by
\begin{align}
\chi_{(i,j)(k,l),c}\left(R,x_1,...\widehat{x}_i...\widehat{x}_j...,x_N\right):= \begin{cases} 
\chi_c\left(x_l -R \right) \qquad \quad &\textup{if}\; k \in \{i,j\}, l \notin \{i,j\}, \\
\chi_c\left(x_k-R  \right) \qquad \quad &\textup{if}\; k \notin \{i,j\}, l \in \{i,j\}, \\
\chi_c\left(x_l-x_k  \right) \qquad \quad &\textup{if}\; k,l \notin  \{i,j\}.
\end{cases} \label{Defchisigmanu}
\end{align}

\begin{prop}\label{prop4.8}
For all pairs $\sigma, \nu \in \II$, $\sigma \neq \nu$ and $c,z>0$ the limit
\begin{align}
\lim\limits_{\eps \to 0} (1 \otimes \chi_{\sigma \nu,c})\, \Lambda_{\eps}(z)_{\sigma \nu} = 
(1 \otimes \chi_{\sigma \nu,c})\, \Lambda(z)_{\sigma \nu} \label{Lambdacutlim}
\end{align}
exists in $\LL(\widetilde{\XX}_{\nu}, \widetilde{\XX}_{\sigma})$, where
\begin{align}
\Lambda(z)_{\sigma \nu}:= \Ket{u_\sigma} \Bra{v_{\nu}} \otimes \Theta(z)_{\sigma \nu} \label{DefLambda(z)}
\end{align}
and $\Theta(z)_{\sigma \nu}=-T_{\sigma} G(z)_{\nu}^{*} \in \LL(\XX_{\nu}, \XX_{\sigma})$.
\end{prop}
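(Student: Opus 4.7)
\noindent\emph{Proof plan.} I combine a reduction to compactly supported potentials (via Proposition~\ref{prop4.4}) with a direct kernel analysis modeled on the proof of the uniform bound \eqref{Lambdanormbound}.

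\textbf{Step 1 (Reduction).} The cutoff estimate \eqref{Lambdacutest} controls $\|\Lambda_\eps(z)_{\sigma\nu}-\Lambda_\eps^h(z)_{\sigma\nu}\|$ in terms of $(\|V_\sigma\|_{L^1}\|V_\nu\|_{L^1}-\|V_\sigma^h\|_{L^1}\|V_\nu^h\|_{L^1})^{1/2}$, uniformly in $\eps$ and $z>0$. Because $\Lambda(z)_{\sigma\nu}=\Ket{u_\sigma}\Bra{v_\nu}\otimes\Theta(z)_{\sigma\nu}$ with $\Theta(z)_{\sigma\nu}$ bounded (Proposition~\ref{prop4.5}), an analogous $h\to\infty$ convergence holds for $\|\Lambda(z)_{\sigma\nu}-\Lambda^h(z)_{\sigma\nu}\|$. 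A standard approximation argument therefore reduces the problem to the case where $V_\sigma$ and $V_\nu$ have compact supports contained in $\{|r|\leq h\}$.

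\textbf{Step 2 (Kernel convergence with uniform dominator).} It suffices to treat the representative case $\sigma=(1,2)$, $\nu=(1,l)$ with $l\geq 3$; the remaining cases of \eqref{Defchisigmanu} follow by the same scheme with obvious permutations of indices and masses. From \eqref{kernelLambda121j}, for each $\underline{p}=\underline{p}_{(1,l)}$ the operator $\Lambda_\eps(z,\underline{p})_{\sigma\nu}$ is the integral operator with kernel
\[K_\eps(r,R,x_l;r',R',x_2') = -u_\sigma(r)\,m_1m_2m_l\,G^{6}_{z+Q_\nu}(X_{\sigma\nu,\eps})\,v_\nu(r'),\]
and let $K_0$ denote its $\eps=0$ counterpart. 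On the support of the cutoff $\chi_{\sigma\nu,c}(x_l-R)$ the entry $x_l-R$ of $X_{\sigma\nu,0}$ satisfies $|x_l-R|\geq c$, so $X_{\sigma\nu,0}$ is bounded away from the origin of $\R^6$; combined with the compact $(r,r')$-support afforded by Step~1, the uniform continuity of $G^{6}_{\alpha}$ on any compact set disjoint from the origin, and its exponential decay at infinity (Lemma~\ref{lmA2}), this produces pointwise convergence $(1\otimes\chi_{\sigma\nu,c})K_\eps\to(1\otimes\chi_{\sigma\nu,c})K_0$ together with an $\eps$- and $\underline{p}$-uniform dominating function.

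\textbf{Step 3 (Norm convergence and identification of the limit).} Repeating the Cauchy--Schwarz plus Lemma~\ref{lmA3} scheme that produced \eqref{Lambdanormbound}, applied to the \emph{difference} $G^{6}_{z+Q_\nu}(X_{\sigma\nu,\eps})-G^{6}_{z+Q_\nu}(X_{\sigma\nu,0})$ under the cutoff and combined with dominated convergence via Step~2, yields
\[\sup_{\underline{p}}\bigl\|(1\otimes\chi_{\sigma\nu,c})\Lambda_\eps(z,\underline{p})_{\sigma\nu} - \widetilde{K}_0(\underline{p})\bigr\|\xrightarrow[\eps\to 0]{}0,\]
where $\widetilde{K}_0(\underline{p})$ denotes the integral operator with kernel $(1\otimes\chi_{\sigma\nu,c})K_0$; this is the desired norm convergence in $\LL(\widetilde{\XX}_\nu,\widetilde{\XX}_\sigma)$ after Fourier inversion in $\underline{p}$. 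Identifying $\widetilde{K}_0(\underline{p})$ as the $\underline{p}$-component of $(1\otimes\chi_{\sigma\nu,c})\Lambda(z)_{\sigma\nu}=\chi_c(x_l-R)\,[\Ket{u_\sigma}\Bra{v_\nu}\otimes\Theta(z)_{\sigma\nu}]$ is a routine unwinding using the Fourier representations \eqref{TsigmaFourier} and \eqref{G_(k,l)(z)*Fourier} together with the definition $\Theta(z)_{\sigma\nu}=-T_\sigma G(z)_\nu^{*}$ from \eqref{DefThetaoff}.

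\textbf{Main obstacle.} The principal difficulty is securing $\underline{p}$-uniform operator norm convergence, not merely pointwise kernel convergence: as $|\underline{p}|\to\infty$ the shifted parameter $z+Q_\nu$ grows, simultaneously sharpening $G^{6}_{z+Q_\nu}$ near the origin and improving its decay at infinity, so the rate of convergence on each compact set may potentially deteriorate. The saving grace is that the Cauchy--Schwarz and Lemma~\ref{lmA3} calculation of Proposition~\ref{prop4.4} produces an $\underline{p}$-independent operator bound; combined with the $z$-monotonicity of the Green's function (Lemma~\ref{lmA1}(v)) and the fact that $\chi_{\sigma\nu,c}$ uniformly separates $X_{\sigma\nu,0}$ from the singular set, this provides the $\underline{p}$-independent $L^2$-dominator required for the passage from kernel convergence to operator norm convergence.
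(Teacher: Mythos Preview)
Your Step~1 reduction to compactly supported potentials matches the paper's own first move. After that, however, your route diverges substantially from the paper's, and Step~3 contains a genuine gap.

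The paper does \emph{not} attempt a direct kernel analysis of the difference. Instead it factors $\Lambda_{\eps}^{h}(z)_{\sigma\nu}$ as a product of two $\eps$-dependent pieces and shows that each converges in a Sobolev norm strong enough for the composition to make sense: the left factor $(u_\sigma\otimes 1)\eps^{-1}U_\eps\KK_\sigma$ converges in $\LL(H^2,\widetilde\XX_\sigma)$ by Lemma~\ref{lm3.2}, while the right factor $\chi_{\nu,c}\,R_0(z)\KK_\nu^{*}\eps^{-1}U_\eps^{*}(v_\nu\otimes 1)$ converges in $\LL(\widetilde\XX_\nu,H^n)$ for \emph{every} $n$ by a commutator-bootstrap (Lemma~\ref{lm4.7}). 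The cutoff is placed on the $\nu$-side first and then commuted through $\KK_\sigma$ to become $\chi_{\sigma\nu,c}$. This sidesteps entirely the question of kernel-level convergence rates.

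Your Step~3 asserts that ``Cauchy--Schwarz plus Lemma~\ref{lmA3} \ldots\ combined with dominated convergence'' yields $\sup_{\underline p}\|\cdot\|\to 0$, but dominated convergence is the wrong tool here. Pointwise kernel convergence $K_\eps\to K_0$ together with domination $|K_\eps|,|K_0|\le M$ by a bounded-operator kernel $M$ gives only \emph{strong} operator convergence, not norm convergence; the sup over $\|\psi\|=1$ need not commute with the limit. Moreover, the scheme of Proposition~\ref{prop4.4} relies on a translation in $(R',x_2')$ by $\eps$-dependent shifts that renders the resulting $F$-kernel independent of $\eps,r,r'$; applied to the difference $G^6(X_\eps)-G^6(X_0)$ there is no single substitution that removes the $\eps$-dependence from both terms, so Lemma~\ref{lmA3} does not directly bound the difference by something small.

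What would actually close your argument is a \emph{quantitative} pointwise bound $\chi_c|G^6_{z+Q}(X_\eps)-G^6_{z+Q}(X_0)|\le C\eps\,\widetilde K$ with $\widetilde K$ the kernel of a bounded operator (uniformly in $r,r'\in B_h$ and $Q\ge 0$). This requires a mean-value estimate on $\nabla G^6_{z+Q}$ away from the origin---the cutoff ensures $|X_0|\ge c_0>0$, and compact $(r,r')$-support keeps $X_\eps$ nearby---followed by a new Schur-type bound on the resulting gradient kernel, e.g.\ via an inequality of the form $|\nabla G^6_\alpha(Y)|\le C_{c_0}G^6_\alpha(Y/\sqrt 2)$ on $\{|Y|\ge c_0/2\}$ together with a rescaled version of Lemma~\ref{lmA3}. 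None of this is carried out in your proposal, and it is not a routine repetition of the existing scheme. The paper's Lemma~\ref{lm4.7} approach avoids this entire layer of estimates.
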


\begin{proof} Assume, for the moment, that for all $h>0$, 
\begin{align}
\limeps(1 \otimes \chi_{\sigma \nu,c})\, \Lambda_{\eps}^{h}(z)_{\sigma \nu} = (1 \otimes \chi_{\sigma \nu,c})\, \Lambda^{h}(z)_{\sigma \nu}, \label{compactlim}
\end{align}  
where $\Lambda_{\eps}^{h}(z)_{\sigma \nu}$ was introduced in Section \ref{sec4.2.1} and $\Lambda^{h}(z)_{\sigma \nu}:= \ket{u_{\sigma}^h} \bra{v_{\nu}^h} \otimes\Theta(z)_{\sigma \nu}$.  
By Proposition \ref{prop4.5}, $\Theta(z)_{\sigma \nu} = -T_{\sigma} G(z)_{\nu}^{*}$ defines a bounded operator and $\|\Theta(z)_{\sigma \nu} \| \leq \widetilde{C}(\sigma, \nu)$, so the definition \eqref{DefLambda(z)} and the corresponding identity for $\Lambda^{h}(z)_{\sigma \nu}$ imply that
\begin{align}
\|\Lambda(z)_{\sigma \nu} -\Lambda^{h}(z)_{\sigma \nu} \| & \leq \left\|\Ket{u_\sigma} \Bra{v_{\nu}} - \ket{u_{\sigma}^h} \bra{v_{\nu}^h} \right\| \| \Theta(z)_{\sigma \nu}\| \nonumber \\
&\leq \widetilde{C}(\sigma, \nu) \left( \| V_{\sigma} \|_{L^1} \| V_{\nu} \|_{L^1}  -  \| V_{\sigma}^h \|_{L^1} \| V_{\nu}^h \|_{L^1} \right)^{1/2},\label{Lambdacutest0}    
\end{align}
which is the equivalent of \eqref{Lambdacutest} for $\eps=0$. Now, the general case of \eqref{Lambdacutlim}, where $u_{\sigma}$ and $v_{\nu}$ do not have compact support, follows from \eqref{compactlim}, \eqref{Lambdacutest} and \eqref{Lambdacutest0} by a simple $\delta/3$-argument.

It remains to prove \eqref{compactlim}. As the choice of the pair $\nu$ is immaterial for the following arguments, it suffices to consider the case   $\nu=(1,2)$ only. Moreover, we may assume that $\supp(u_{\sigma}) \cup \supp(v_{\nu}) \subset  \{x \in \R^2 \,| \, |x|\leq h\}$, i.e. $u_{\sigma}^h=u_{\sigma}$ and $v_{\nu}^h=v_{\nu}$. Then \eqref{Defphisigmany2} becomes
\begin{align}
\Lambda_{\eps}^{h}(z)_{\sigma \nu}= -\eps^{-2} \, (u_{\sigma} \otimes 1) \, U_{\eps} \KK_{\sigma}R_0(z) \KK_{\nu}^{*} U_{\eps}^{*} (v_{\nu} \otimes 1). \label{DefLambdah}
\end{align}
From Lemma \ref{lm3.2}, combined with Eqs. \eqref{DefTsigma} and \eqref{Deftau}, it follows that
\begin{align}
\limeps \eps^{-1} \, (u_{\sigma} \otimes 1) \, U_{\eps} \KK_{\sigma} =  (\Ket{u_{\sigma}}\Bra{\delta} \otimes 1  ) \KK_{\sigma} = \Ket{u_{\sigma}}T_{\sigma} \qquad \quad \textup{in} \;\LL(H^2(\R^{2N}), \widetilde{\XX}_{\sigma}). \label{Convfirst}
\end{align}
Next, setting
\begin{align}
\chi_{\nu,c}(x_1,...,x_N):= \chi_{c}(|x_2-x_1|), \qquad \qquad \nu=(1,2),\nonumber
\end{align}
the identities $R_0(z) \KK_{\nu}^{*} = \KK_{\nu}^{*} \widetilde{R}_0(z)$, $ \chi_{\nu,c} \KK_{\nu}^{*}=\KK_{\nu}^{*} (\chi_{c} \otimes 1)$ and Lemma \ref{lm4.7} imply that
\begin{align}
\lim\limits_{\eps \to 0} \chi_{\nu,c} R_0(z) \KK_{\nu}^{*} \eps^{-1} U_{\eps}^{*} (v_{\nu} \otimes 1) =  \chi_{\nu,c}  S(z)_{\nu}^{*}  \qquad \quad \textup{in} \; \LL(\widetilde{\XX}_{\nu},H^n(\R^{2N}))
\label{Convsecond}
\end{align}
for any $n \in \N_0$. Hence, we conclude from \eqref{Convfirst} and \eqref{Convsecond} that 
\begin{align}
\lim\limits_{\eps \to 0} -\eps^{-2} \, (u_{\sigma} \otimes 1) \, U_{\eps} \KK_{\sigma} \chi_{\nu,c} R_0(z) \KK_{\nu}^{*} U_{\eps}^{*} (v_{\nu} \otimes 1)=  - \Ket{u_{\sigma}} T_{\sigma} \chi_{\nu,c}  S(z)_{\nu}^{*} \label{Gesconv}
\end{align}
in $\LL(\widetilde{\XX}_{\nu}, \widetilde{\XX}_{\sigma})$. To complete the proof of \eqref{compactlim}, we have to move the cutoff to the left on both sides of \eqref{Gesconv}. We start by considering the limit operators. For a given $\psi \in \widetilde{\XX}_{\nu}$, let  $\widetilde\psi:= S(z)_{\nu}^{*} \psi$. Then  \eqref{Convsecond} shows that $ \chi_{\nu,c} \widetilde\psi \in H^n(\R^{2N})$ for any $n \in \N$, so it follows from a standard Sobolev embedding theorem (see e.g. \cite[Theorem 8.8]{AnalysisLL}) that $ \chi_{\nu,c} \widetilde\psi$ defines a smooth function. In particular, $\chi_{\nu,c} \widetilde\psi \in D(T_{\sigma})$ and $T_{\sigma}(\chi_{\nu,c} \widetilde\psi)$ is explicitly given by \eqref{trace}. Now, a direct calculation, using the defining relations for  $\chi_{\nu,c}$ and $\chi_{\sigma \nu,c}$,  yields that
\begin{align}
T_{\sigma}(\chi_{\nu,c} \widetilde\psi) = \chi_{\sigma \nu,c} T_{\sigma} \widetilde\psi,\nonumber
\end{align}
where, by \eqref{DefSsigma} and Proposition \ref{prop4.5},
$\widetilde\psi=S(z)_{\nu}^{*}\psi = G(z)_{\nu}^{*} \Bra{v_{\nu}} \psi \in D(T_{\sigma})$. Therefore,
\begin{align}
-\Ket{u_{\sigma}} T_{\sigma} \chi_{\nu,c}  S(z)_{\nu}^{*} \psi = \Big( \Ket{u_\sigma} \Bra{v_{\nu}} \otimes [\chi_{\sigma \nu,c}\Theta(z)_{\sigma \nu}] 
\Big) \psi,\nonumber
\end{align}
so the limit operator in \eqref{Gesconv} agrees with the desired limit operator in \eqref{compactlim}. \newline
It remains to show that the left side of \eqref{Gesconv} agrees with the  left side of \eqref{compactlim} with $\Lambda_{\eps}^{h}(z)_{\sigma \nu}$ given by \eqref{DefLambdah}. 
If  $\sigma=(k,l)$ with $3 \leq k<l\leq N$, then this follows immediately from $U_{\eps} \KK_{\sigma} \chi_{\nu,c} =  (1 \otimes \chi_{\sigma \nu,c}) U_{\eps} \KK_{\sigma}$. If $\sigma=(1,l)$ with $l \geq 3$, then, by some abuse of notation,
\begin{align}
U_{\eps} \KK_{\sigma} \chi_{\nu,c} = \chi_{c}\left(x_2-R+ \frac{\eps m_l r}{m_1+ m_l} \right) U_{\eps} \KK_{\sigma}. \label{Comm1l}
\end{align}
For the purpose of computing the limit in \eqref{Gesconv}, the right side of \eqref{Comm1l} may be replaced by $\chi_c(|x_2-R|) U_{\eps} \KK_{\sigma}$ because $\chi_c$ is Lipschitz, $u_{\sigma}$ has compact support
and, by Proposition \ref{prop4.4}, $\|\Lambda_{\eps}^{h}(z)_{\sigma \nu}\|$ is uniformly bounded in $\eps>0$. Again, \eqref{compactlim} follows. The remaining case $\sigma=(2,l)$ with $l \geq 3$ is treated similarly.
\end{proof}

\subsection{Convergence of $\Lambda_{\eps}(z)^{-1}$}     \label{sec4.3}
In Sections \ref{sec4.1} and \ref{sec4.2} we have seen that  $(\Lambda_{\eps}(z)_{\textup{diag}})^{-1}$ and a regularized version of $\Lambda_{\eps}(z)_{\textup{off}}$ have limits as $\eps \rightarrow 0$. This will now allow us
to prove invertibility of $\Lambda_{\eps}(z)$ for small $\eps > 0$ and large enough $z>0$, and existence of
$\limepsr \Lambda_{\eps}(z)^{-1}$. We claim that
\begin{align}
\lim_{\eps \to 0} \,(\Lambda_{\eps}(z)^{-1})_{\sigma \nu} = 
\begin{cases} \dfrac{\Ket{u_{\sigma}}\Bra{v_{\nu}}}{\Braket{u_{\sigma}| v_{\sigma}} \Braket{u_{\nu}| v_{\nu}}} \otimes (\Theta(z)^{-1})_{\sigma \nu}  \qquad \quad &\textup{if}\; \sigma, \nu \in \JJ \\
\;\; 0 \qquad &\textup{else},
\end{cases}
\label{Lambdafac2}
\end{align}
where $\Theta(z)=(\Theta(z)_{\sigma \nu})_{\sigma, \nu \in \JJ}$ is invertible in the reduced Hilbert space $\XX$ defined by \eqref{DefHred}.

To prove \eqref{Lambdafac2}, we introduce some auxiliary operators. Let $\Pi: \widetilde{\XX} \rightarrow L^2\left(\R^{2}, \d{r} \right) \otimes \XX$ denote the matrix operator defined by
\begin{align}
\Pi_{\sigma \nu}:=\begin{cases}
\delta_{\sigma \nu}  \qquad \qquad &\textup{if} \; \sigma,  \nu \in \JJ \\  0  &\textup{if}\; \sigma \in \JJ, \nu \in  \II \backslash \JJ
\end{cases}\nonumber
\end{align}
and let $U=(U_{\sigma \nu})_{\sigma, \nu \in \JJ}$, where $U_{\sigma \nu} \in \LL(L^2(\R^2))$ is given by
\begin{align}
U_{\sigma \nu}:= \dfrac{\Ket{u_{\sigma}}\Bra{v_{\nu}}}{\Braket{u_{\sigma}| v_{\sigma}}\Braket{u_{\nu}| v_{\nu}}}.\nonumber
\end{align}
Let  $\Theta(z)_{\textup{diag}}$ and $\Theta(z)_{\textup{off}}$ denote the operators in $\XX$ defined in terms of the
components $\Theta(z)_{\sigma \sigma}$ and $\Theta(z)_{\sigma \nu}$, $\sigma \neq \nu$, that we have introduced in Sections \ref{sec4.1} and \ref{sec4.2.2}, respectively, and let 
\begin{align}
\Theta(z) := \Theta(z)_{\textup{diag}} + \Theta(z)_{\textup{off}}. \nonumber
\end{align}
Proposition \ref{prop4.5} implies that $\Theta(z)_{\textup{off}} \in \LL(\XX)$, while, by \eqref{DefTheta(z,P)diag},
 $\Theta(z)_{\textup{diag}}$ is unbounded in $\XX$. With the help of $\Pi$ and $U$, Equation \eqref{Lambdadiaginvlim} can now be written as 
\begin{align}
\lim_{\eps \to 0}(\Lambda_{\eps}(z)_{\textup{diag}})^{-1}= \Pi^{*} \left( U \circ(\Theta(z)_{\textup{diag}})^{-1} \right) \Pi. \label{Lambdadiaglim}
\end{align}
Here, the operator product $A \circ B$ is defined by $(A\circ B)_{ij}= A_{ij} \otimes B_{ij}$, a notation inspired by the Hadamard-Schur product of matrices.

The following proposition proves \eqref{Lambdafac2} in these new notations:
\begin{prop}\label{prop4.9}
There exist $\eps_0, z_0>0$ such that the operator matrix $\Lambda_{\eps}(z)$ is invertible in $\widetilde{\XX}$ for all $\eps \in (0, \eps_0)$ and $z > z_0$, and then
\begin{align}
\lim_{\eps \to 0} \,\Lambda_{\eps}(z)^{-1} = \Pi^{*}(U \circ \Theta(z)^{-1})\Pi, \label{Lambdafac2new}
\end{align}
where  $\Theta(z)$ is a closed and invertible operator in $\XX$.
\end{prop}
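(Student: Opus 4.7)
The plan is to invert $\Lambda_\eps(z)$ by a Neumann series built on the already-analyzed diagonal part. Factor
\[
\Lambda_\eps(z) = \Lambda_\eps(z)_{\textup{diag}}\bigl(1 + M_\eps(z)\bigr), \qquad M_\eps(z) := (\Lambda_\eps(z)_{\textup{diag}})^{-1}\Lambda_\eps(z)_{\textup{off}}.
\]
By \eqref{Lambdadiaginvlim} and \eqref{DefTheta(z,P)diag}, $\|(\Lambda_\eps(z)_{\sigma\sigma})^{-1}\| = O(|\ln z|^{-1})$ for $\sigma \in \JJ$ and vanishes as $\eps\to 0$ for $\sigma \notin \JJ$, while $\Lambda_\eps(z)_{\textup{off}}$ is uniformly bounded in $\eps$ and $z$ by Proposition \ref{prop4.4}. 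Thus, for $\eps$ small and $z$ large, $\|M_\eps(z)\| \leq 1/2$ uniformly, so $1 + M_\eps(z)$ is invertible by Neumann series and
\[
\Lambda_\eps(z)^{-1} = \bigl(1 + M_\eps(z)\bigr)^{-1}(\Lambda_\eps(z)_{\textup{diag}})^{-1}.
\]
It remains to show that $M_\eps(z)$ converges in norm as $\eps \to 0$ and to identify the limit.

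For $\sigma \notin \JJ$ the $\sigma$-rows of $M_\eps(z)$ vanish in the limit. Fix $\sigma \in \JJ$ and $\nu \neq \sigma$. The rank-one-in-$r$ structure $\Ket{u_\sigma}\Bra{v_\sigma}/\Braket{u_\sigma|v_\sigma}^2\otimes \Theta(z)_{\sigma\sigma}^{-1}$ of the limit of $(\Lambda_\eps(z)_{\sigma\sigma})^{-1}$ reduces the convergence of $M_\eps(z)_{\sigma\nu}$ to convergence of $\Bra{v_\sigma}\Lambda_\eps(z)_{\sigma\nu}\psi$ in $\LL(\widetilde\XX_\nu,\XX_\sigma)$. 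Using $B_{\eps,\sigma}\ph(r,\underline X) = u_\sigma(r)(\KK_\sigma \ph)(\eps r,\underline X)$ and $u_\sigma v_\sigma = V_\sigma$, this quantity equals
\[
\Bra{v_\sigma}\Lambda_\eps(z)_{\sigma\nu}\psi = -\int V_\sigma(r)\,(\KK_\sigma R_0(z) A_{\eps,\nu}^{*}\psi)(\eps r,\,\cdot\,)\,\rd r.
\]
Proposition \ref{prop3.3} together with $\int V_\sigma\,\rd r = \Braket{u_\sigma|v_\sigma}$ suggests the formal limit $\Braket{u_\sigma|v_\sigma}\,\Theta(z)_{\sigma\nu}\Bra{v_\nu}\psi$, with $\Theta(z)_{\sigma\nu}$ from \eqref{DefThetaoff}. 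The hard part is to upgrade this formal calculation to norm convergence, since $\KK_\sigma G(\overline z)_\nu^{*}\Bra{v_\nu}\psi$ is in general not continuous at $r = 0$.

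To make this rigorous I split $\Lambda_\eps(z)_{\sigma\nu} = (1\otimes \chi_{\sigma\nu,c})\Lambda_\eps(z)_{\sigma\nu} + (1\otimes(1-\chi_{\sigma\nu,c}))\Lambda_\eps(z)_{\sigma\nu}$ via \eqref{Defchisigmanu}. Proposition \ref{prop4.8} gives norm convergence of the first summand to $\Ket{u_\sigma}\Bra{v_\nu}\otimes (\chi_{\sigma\nu,c}\Theta(z)_{\sigma\nu})$, and contraction with $\Bra{v_\sigma}$ followed by $\Theta(z)_{\sigma\sigma}^{-1}$ yields the desired limit times $\chi_{\sigma\nu,c}$. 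For the second summand, first reduce to compactly supported potentials $V_\sigma^h, V_\nu^h$ using \eqref{Lambdacutest}; then, for such truncated potentials, the cutoff $1-\chi_{\sigma\nu,c}$ localizes the relevant collision configuration within a region of size $O(c+\eps h)$ in $\underline X$, and a Cauchy-Schwarz estimate analogous to that in the proof of Proposition \ref{prop4.4} forces its operator norm to zero as $c\downarrow 0$, uniformly in small $\eps$. Sending $\eps\to 0$ first and then $c\downarrow 0$ identifies
\[
M(z)_{\sigma\nu} = \frac{\Ket{u_\sigma}\Bra{v_\nu}}{\Braket{u_\sigma|v_\sigma}}\otimes \Theta(z)_{\sigma\sigma}^{-1}\Theta(z)_{\sigma\nu}, \qquad \sigma \in \JJ,\ \nu \neq \sigma,
\]
and zero otherwise. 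Finally, set $\Theta(z) := \Theta(z)_{\textup{diag}} + \Theta(z)_{\textup{off}}$; it is closed and invertible in $\XX$ for large $z$ by the parallel Neumann argument ($\|\Theta(z)_{\textup{diag}}^{-1}\| = O(1/\ln z)$ by \eqref{DefTheta(z,P)diag}, and $\Theta(z)_{\textup{off}}$ is bounded by Proposition \ref{prop4.5}). A straightforward entry-wise computation, using $U_{\sigma\alpha}U_{\alpha\nu} = U_{\sigma\nu}/\Braket{u_\alpha|v_\alpha}$ together with the resolvent identity $\Theta(z)_{\textup{diag}}^{-1} - \Theta(z)^{-1} = \Theta(z)_{\textup{diag}}^{-1}\Theta(z)_{\textup{off}}\Theta(z)^{-1}$, verifies the algebraic identity $(1+M(z))\,\Pi^{*}(U\circ \Theta(z)^{-1})\Pi = \Pi^{*}(U\circ \Theta(z)_{\textup{diag}}^{-1})\Pi$, which combined with \eqref{Lambdadiaglim} and the Neumann representation of $(1+M_\eps(z))^{-1}$ yields \eqref{Lambdafac2new}.
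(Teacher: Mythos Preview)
Your overall architecture matches the paper's: factor out the diagonal, invert by Neumann series, and prove convergence of $M_\eps(z)=(\Lambda_\eps(z)_{\textup{diag}})^{-1}\Lambda_\eps(z)_{\textup{off}}$. The algebraic endgame (second resolvent identity, the $U$-product identities) is also essentially what the paper does. The gap is in your treatment of the piece $(1\otimes(1-\chi_{\sigma\nu,c}))\Lambda_\eps(z)_{\sigma\nu}$.

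You assert that, for compactly supported potentials, a Cauchy--Schwarz estimate ``analogous to that in the proof of Proposition~\ref{prop4.4} forces its operator norm to zero as $c\downarrow 0$, uniformly in small $\eps$.'' This is not justified and is almost certainly false. The limiting operator $\Theta(z)_{\sigma\nu}=-T_\sigma G(z)_\nu^{*}$ is a \emph{critical} singular integral operator (its boundedness in Proposition~\ref{prop4.5} comes from the borderline kernel $(|p_2|^2+|p_l|^2)^{-1}$ via Lemma~\ref{lm4.6}), and its singularity in the position variables of $\XX_\sigma$ sits exactly on the set $\{\chi_{\sigma\nu,c}=0\}$, i.e.\ where $1-\chi_{\sigma\nu,c}$ is supported. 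Localizing a singular integral operator to a shrinking neighborhood of its singular support does not make its norm small; the Schur test in Lemma~\ref{lmA3} with $h(x,y)=|x-y|^{-1}$ gives no gain from the cutoff because the test function blows up precisely there. The same obstruction persists for $\Lambda_\eps(z)_{\sigma\nu}$ uniformly in small $\eps$: the argument $X_{\sigma\nu,\eps}$ of the Green's function can vanish within the region $|R-x_l|\le 2c$ (take $R=R'=x_2'=x_l$ and $\eps$ small), so the kernel's singularity is not avoided.

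The paper handles this differently and this is the missing idea: it does \emph{not} try to make $(1\otimes(1-\chi_{\sigma\nu,c}))\Lambda_\eps(z)_{\sigma\nu}$ small by itself. Instead it exploits the extra decay of the diagonal inverse at large momentum. By Lemma~\ref{lm4.2}, $\|(g_{\eps}^{-1}-\phi_\eps(z,\underline P))^{-1}\|$ is $O(\ln(\mu(z+Q))^{-1})$, so one can first insert a momentum cutoff $\eta_K$ (projection onto $|\underline P|\le K$) at the cost of an error that is small for large $K$ and small $\eps$. Then, for fixed $K$, one uses
\[
\|\eta_K(1-\chi_{\sigma\nu,c})\|\le (2c)^{1/2}\,\|\gamma_{\sigma\nu}^{-1}(-\Delta'+1)^{-1}\|\,(K^2+1)\to 0\quad (c\downarrow 0),
\]
which is exactly \eqref{etachiest}: the finite-momentum projection $\eta_K$ lands in a Sobolev space on which multiplication by the small-support function $1-\chi_{\sigma\nu,c}$ is small. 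Only after this two-step localization (first in momentum, then in position) can one invoke Proposition~\ref{prop4.8}. Your proof needs this momentum-cutoff step; without it the $(1-\chi_{\sigma\nu,c})$-piece is not controlled.
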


\begin{proof}
Recall from \eqref{Lambdadecomp}-\eqref{DefLambdaoff} that
$\Lambda_{\eps}(z) = \Lambda_{\eps}(z)_{\textup{diag}}  + \Lambda_{\eps}(z)_{\textup{off}}$, where $ \Lambda_{\eps}(z)_{\textup{diag}}$ and $\Lambda_{\eps}(z)_{\textup{off}}$ is the diagonal and off-diagonal part of the operator-valued matrix $\Lambda_{\eps}(z) = (g_{\eps, \sigma}^{-1} \delta_{\sigma \nu}- \phi_{\eps}(z)_{\sigma \nu})_{\sigma, \nu \in \II}$, respectively. By Proposition \ref{prop4.4}, $\| \Lambda_{\eps}(z)_{\textup{off}}\| \leq C_{\textup{off}}<\infty$ holds uniformly in $\eps,z > 0$. Furthermore, \eqref{Lambdadiaglim} and the definition \eqref{DefTheta(z,P)diag} of $\Theta(z, \underline{P}_{\sigma})_{\sigma \sigma}$ imply that $\Lambda_{\eps}(z)_{\textup{diag}}$ is invertible and $\|(\Lambda_{\eps}(z)_{\textup{diag}})^{-1} \| \leq (2C_{\textup{off}})^{-1}$, provided that $z > z_0$ and $\eps \in (0, \eps_0)$ for sufficiently large $z_0>0$ and sufficiently small $\eps_0>0$. We conclude that  $\Lambda_{\eps}(z)$ is invertible and
\begin{align}
\Lambda_{\eps}(z)^{-1} = \left(1+ (\Lambda_{\eps}(z)_{\textup{diag}})^{-1}  \Lambda_{\eps}(z)_{\textup{off}} \right)^{-1}(\Lambda_{\eps}(z)_{\textup{diag}})^{-1}, \qquad \qquad z> z_0,\; \eps \in (0, \eps_0). \label{Lambdainvid}
\end{align}

Next, we claim that $\limepsr (\Lambda_{\eps}(z)_{\textup{diag}})^{-1}  \Lambda_{\eps}(z)_{\textup{off}}$ exists for all large enough $z>0$, which, by \eqref{DefLambdadiag}, is equivalent to the assertion that $\limepsr (g_{\eps, \sigma}^{-1}- \phi_{\eps}(z)_{\sigma \sigma})^{-1}\Lambda_{\eps}(z)_{\sigma \nu}$ exists for all pairs $\sigma, \nu \in \II, \sigma \neq \nu$. Without loss of generality, we may assume that $\sigma=(1,2) \neq (k,l) = \nu$ and in the following we drop the index $\sigma$ in the diagonal contributions, i.e. $\phi_{\eps}(z):=\phi_{\eps}(z)_{\sigma \sigma}$, $V:=V_{\sigma}$, $g_{\eps}:=g_{\eps, \sigma}$, $a:=a_{\sigma}$ etc. 
If $\int  V(r)\, \d{r}/(2\pi) <  a$, then it follows from Lemma \ref{lm4.2} and from the fact that, by Proposition \ref{prop4.4}, $\|\Lambda_{\eps}(z)_{\sigma \nu}\| \leq C(\sigma, \nu) $ is uniformly bounded in $\eps,z > 0$ that, for large enough $z>0$,
\begin{align}
\limepsr \left(g_{\eps}^{-1}- \phi_{\eps}(z)\right)^{-1}\Lambda_{\eps}(z)_{\sigma \nu}=0, \qquad \qquad \nu \neq (1,2).\nonumber
\end{align}
If $\int V(r)\,\d{r} =  2 \pi a$, then it is more subtle to prove existence of the limit $\eps \to 0$. We claim that, for all large enough $z>0$,
\begin{align}
\limeps \left(g_{\eps}^{-1}- \phi_{\eps}(z)\right)^{-1}\Lambda_{\eps}(z)_{\sigma \nu} =  \left( \frac{\Ket{u}\Bra{v}}{\Braket{u|v}^2} \otimes D(z)^{-1} \right) \Lambda(z)_{\sigma \nu}, \qquad \quad \nu \neq (1,2),  \label{LimCritical}
\end{align}
where $D(z,\underline{P})$ is defined by \eqref{DefD(z,P)} with $\alpha=\alpha_{\sigma}$ and $\Lambda(z)_{\sigma \nu} \in \LL(\widetilde{\XX}_{\nu}, \widetilde{\XX}_{\sigma})$ is defined by \eqref{DefLambda(z)}. To prove this, we start by showing that large momenta may be discarded: As in Section \ref{sec4.1}, let $\underline{P}=(P,p_3,...,p_N)$ be conjugate to $(R,x_3,...,x_N)$ and let $\eta_K \,(K>0)$ denote multiplication with the characteristic function of the ball $\{\underline{P} \, | \, |\underline{P}| \leq K \}$ in Fourier space. Now, observe that \eqref{DefQ} implies $Q \geq \lambda|\underline{P}|^2$ with $\lambda:=1/\max(m_1+m_2,m_3,...,m_N)>0$, so for fixed $\delta >0$ and $z> z_0$, Lemma \ref{lm4.2} gives
\begin{align}
\left\| \left(g_{\eps}^{-1}\mspace{-1mu}- \phi_{\eps}(z)\right)^{-1}\mspace{-2mu} (1 \otimes (1-\eta_K))\, \Lambda_{\eps}(z)_{\sigma \nu} \right\|
&\leq \widetilde{C} \max \left\{ |\ln(\eps)|^{-1},\ln(\mu(z+\lambda K^2))^{-1} \right\} \,C(\sigma, \nu) 
\nonumber\\ &< \delta/3, \label{LargeKest}
\end{align}
provided that $K>0$ is large enough and $\eps >0$ is small enough. Furthermore, using that, by Proposition \ref{prop4.3},
\begin{align}
\limeps \left(g_{\eps}^{-1}- \phi_{\eps}(z)\right)^{-1} = \frac{\Ket{u}\Bra{v}}{\Braket{u|v}^2} \otimes D(z)^{-1}, \label{Diaginvconv}
\end{align}
similar estimates also show that
\begin{align}
\left\|  \left(\frac{\Ket{u}\Bra{v}}{\Braket{u|v}^2} \otimes D(z)^{-1}\right)  (1 \otimes (1-\eta_K)) \, \Lambda(z)_{\sigma \nu} \right\| < \delta/3 \label{Diaginvconv0}
\end{align}
for large enough $K>0$. In view of \eqref{LargeKest} and \eqref{Diaginvconv0}, it is clear that \eqref{LimCritical} will follow if we can show that
\begin{align}
&\limeps \left(g_{\eps}^{-1}- \phi_{\eps}(z)\right)^{-1} \mspace{-2mu} (1 \otimes \eta_K) \,\Lambda_{\eps}(z)_{\sigma \nu}= \left(\frac{\Ket{u}\Bra{v}}{\Braket{u|v}^2} \otimes D(z)^{-1}\right)\mspace{-2mu} (1 \otimes \eta_K) \,\Lambda(z)_{\sigma \nu} \label{LimCriticalK}
\end{align}
for any fixed $K>0$, $\nu \neq (1,2)$ and large enough $z>0$. To further simplify \eqref{LimCriticalK}, we decompose
\begin{align}
\eta_K = \eta_K\,\chi_{\sigma \nu,c} +
\eta_K\,(1-\chi_{\sigma \nu,c}),  \qquad \qquad c>0,\label{etadecomp}
\end{align}
where $\chi_{\sigma \nu,c}=\chi_{(1,2) \nu,c}$ is defined by \eqref{Defchisigmanu}. Furthermore, for $\nu=(k,l) \neq (1,2)=\sigma$, we set
\begin{align}
\gamma_{\sigma \nu}(R,x_3,...,x_N):= \begin{cases}
|R-x_l|^{1/2} \qquad \quad &\textup{if} \; k \in \{1,2\}, l \geq 3 \\
|x_k-x_l|^{1/2}  \qquad \quad &\textup{if} \; 3 \leq k < l \leq N
\end{cases}\nonumber
\end{align}
and we note that the definition of $\chi_{\sigma \nu,c}$ implies that
\begin{align}
\left\|\eta_K(1-\chi_{\sigma \nu,c}) \right\| = \left\|(1-\chi_{\sigma \nu,c}) \eta_K\right\|
&\leq \left\|(1-\chi_{\sigma \nu,c}) \gamma_{\sigma \nu} \right\| \left\|\gamma_{\sigma \nu}^{-1}(-\Delta'+1)^{-1}(-\Delta'+1)\eta_K\right\| \nonumber \\ &\leq  (2c)^{1/2}\left\|\gamma_{\sigma \nu}^{-1}(-\Delta'+1)^{-1}\right\|(K^2+1), \label{etachiest}
\end{align}
where 
\begin{align}
-\Delta':=-\Delta_R+ \sum_{i=3}^N (-\Delta_{x_i}). \nonumber
\end{align}
As  $|\cdot|^{-1/2} \in L^2(\R^2) + L^{\infty}(\R^2)$, a standard result (see e.g. \cite[Theorem 11.1]{Teschl}) yields that $\left\| \gamma_{\sigma \nu}^{-1}(-\Delta'+1)^{-1}\right\|< \infty$, so the right side of \eqref{etachiest} vanishes as $c \to 0$. Hence, with the help of \eqref{etadecomp} and another $\delta/3$-argument, we see that the proof of \eqref{LimCriticalK} can be reduced to the claim that, for any fixed $c,K>0$, $\nu \neq (1,2)$ and large enough $z>0$,
\begin{align}
&\limeps \left(g_{\eps}^{-1}\mspace{-2mu}-\mspace{-2mu} \phi_{\eps}(z)\right)^{-1}\mspace{-3mu}(1 \otimes(\eta_K\chi_{\sigma \nu,c}))\Lambda_{\eps}(z)_{\sigma \nu} = \left( \frac{\Ket{u}\mspace{-2mu} \Bra{v}}{\Braket{u|v}^2} \otimes D(z)^{-1}\mspace{-2mu}\right)\mspace{-3mu} (1 \otimes(\eta_K \chi_{\sigma \nu,c})) \Lambda(z)_{\sigma \nu}. \nonumber
\end{align}
However, this is immediate from \eqref{Diaginvconv} and Proposition \ref{prop4.8}, so \eqref{LimCriticalK} and thus \eqref{LimCritical} are established. 

Since the choice $\sigma=(1,2)$ in the analysis above was immaterial, we conclude that, for all pairs $\sigma, \nu \in \II, \sigma \neq \nu$,
\begin{align}
\limeps \left(g_{\eps, \sigma}^{-1}- \phi_{\eps}(z)_{\sigma \sigma}\right)^{-1}\Lambda_{\eps}(z)_{\sigma \nu}  =  \begin{cases} \left( \dfrac{\Ket{u_{\sigma}}\Bra{v_{\sigma}}}{\Braket{u_{\sigma}|v_{\sigma}}^2} \otimes (\Theta(z)_{\sigma \sigma})^{-1}\right) \Lambda(z)_{\sigma \nu}  &\textup{if} \; \sigma \in \JJ \\
\qquad 0 \;  \hspace*{4.5cm} &\textup{else}.\label{allpairslim}
\end{cases} 
\end{align}
Let the operator  $\Lambda(z)_{\textup{off}}$ in $\widetilde{\XX}$ be defined by $\left(\Lambda(z)_{\textup{off}}\right)_{\sigma \nu}=  \Lambda(z)_{\sigma \nu} (1-\delta_{\sigma \nu})$. Then, in the notation of \eqref{Lambdadiaglim}, \eqref{allpairslim} takes the form
\begin{align}
\limepsr (\Lambda_{\eps}(z)_{\textup{diag}})^{-1}  \Lambda_{\eps}(z)_{\textup{off}}=\Pi^{*} \left( U \circ(\Theta(z)_{\textup{diag}})^{-1} \right) \Pi\, \Lambda(z)_{\textup{off}}=:L(z). \label{Diagofflimit}
\end{align}
From \eqref{Lambdainvid}, \eqref{Lambdadiaglim} and \eqref{Diagofflimit}, it follows that
\begin{align}
\limepsr \Lambda_{\eps}(z)^{-1} =  \left(1+ L(z) \right)^{-1} \Pi^{*} \left( U \circ(\Theta(z)_{\textup{diag}})^{-1} \right) \Pi,
\label{Lambdainvlim}
\end{align}
provided that $z> z_0$ with $z_0>0$ large enough.

To simplify the right side of \eqref{Lambdainvlim}, we first note that the inverse $(1+ L(z))^{-1}$ is only needed on $\ran \, \Pi^{*}$, which, in view of \eqref{Diagofflimit}, is left invariant by $L(z)$.  Explicitly, we have that
\begin{align}
\left(1+ L(z) \right)^{-1} \Pi^{*} =  \Pi^{*}\left(1+ \left[ U \circ (\Theta(z)_{\textup{diag}})^{-1} \right] \Pi\, \Lambda(z)_{\textup{off}} \Pi^{*}\right)^{-1}.\label{inv(1+L(z))}
\end{align}
Secondly, by Proposition \ref{prop4.8}, we have the factorization property
\begin{align}
\left(\Lambda(z)_{\textup{off}}\right)_{\sigma \nu}=\Lambda(z)_{\sigma \nu}=\Ket{u_{\sigma}}\Bra{v_{\nu}} \otimes \Theta(z)_{\sigma \nu}, \qquad \qquad \sigma \neq \nu, \label{Lambdaofffac}
\end{align}
where $\Theta(z)_{\sigma \nu} \in \LL(\XX_{\nu}, \XX_{\sigma})$ for $\sigma \neq \nu$.
From \eqref{Lambdaofffac}, it follows that 
\begin{align}
\left[ U \circ (\Theta(z)_{\textup{diag}})^{-1} \right] \Pi\, \Lambda(z)_{\textup{off}} \Pi^{*}= \widetilde{U} \circ \left[ (\Theta(z)_{\textup{diag}})^{-1}\Theta(z)_{\textup{off}} \right], \label{Thetainvid1}
\end{align}
where the components of $ \widetilde{U}=(\widetilde{U}_{\sigma \nu})_{\sigma, \nu \in \JJ}$
are defined by
\begin{align}
\widetilde{U}_{\sigma \nu} := \frac{\Ket{u_{\sigma}} \mspace{-2mu}\Bra{v_{\nu}}}{\Braket{u_{\sigma}| v_{\sigma}}} \nonumber
\end{align}
and the identity $\widetilde{U}_{\sigma \nu}= U_{\sigma \lambda} \Ket{u_{\lambda}} \mspace{-2mu}\Bra{v_{\nu}}$ was used.
Furthermore, a direct computation, using the identity $(\widetilde{U} \circ A)(\widetilde{U} \circ B)=(\widetilde{U} \circ (AB)) $,  shows that, on $\ran\,\Pi$,
\begin{align}
\left(1 + \widetilde{U} \circ \left[ (\Theta(z)_{\textup{diag}})^{-1}\Theta(z)_{\textup{off}} \right]\right)^{-1}
=1-\widetilde{U} \circ \left[ \Theta(z)^{-1}\Theta(z)_{\textup{off}}\right]. \label{Thetainvid2}
\end{align} 
Indeed, $\Theta(z)^{-1}=(1+(\Theta(z)_{\textup{diag}})^{-1}\Theta(z)_{\textup{off}})^{-1}(\Theta(z)_{\textup{diag}})^{-1}$ exists for large enough $z>0$, since our analysis in Section \ref{sec4.1} shows that $\lim_{z \rightarrow \infty}  (\Theta(z)_{\textup{diag}})^{-1}=0$ and, by  Proposition  \ref{prop4.5}, $ \|\Theta(z)_{\sigma \nu}\|$ with $\sigma \neq \nu$ and hence $\|\Theta(z)_{\textup{off}}\|$ are uniformly bounded in $z>0$. Now, \eqref{Lambdainvlim} combined with \eqref{inv(1+L(z))}, \eqref{Thetainvid1} and \eqref{Thetainvid2} yields
\begin{align}
\limepsr \Lambda_{\eps}(z)^{-1} &= \Pi^{*} \left(1 + \widetilde{U} \circ \left[ (\Theta(z)_{\textup{diag}})^{-1}\Theta(z)_{\textup{off}} \right]\right)^{-1}\left( U \circ(\Theta(z)_{\textup{diag}})^{-1} \right) \Pi
\nonumber \\
&= \Pi^{*}\left(1-\widetilde{U} \circ \left[ \Theta(z)^{-1}\Theta(z)_{\textup{off}}\right]\right)\left( U \circ(\Theta(z)_{\textup{diag}})^{-1} \right) \Pi \nonumber \\
&=  \Pi^{*}\left(U \circ \left[ (\Theta(z)_{\textup{diag}})^{-1} -\Theta(z)^{-1}\Theta(z)_{\textup{off}} (\Theta(z)_{\textup{diag}})^{-1}\right] \right)\Pi   \nonumber \\
&=  \Pi^{*}\left(U \circ \Theta(z)^{-1}\right) \Pi, \nonumber
\end{align} 
where the last equation follows from the  second resolvent identity and the second to last equation used $\widetilde{U}_{\sigma \lambda}U_{\lambda \nu}= U_{\sigma \nu}$.
\end{proof}

\section{Proof of Theorem \ref{theo1}. and lower bound on $\sigma(H)$} \label{sec5}
Recall from Corollary \ref{kor:KK} that for any $z \in \rho(H_0)$ 
we have that $z \in \rho(H_{\eps})$ if and only if $\Lambda_{\eps}(z)$ is invertible in $\widetilde{\XX}$ and in this case the resolvent $(H_{\eps}+z)^{-1}$ can be expressed by the Konno-Kuroda formula
\begin{align}
(H_{\eps}+z)^{-1} = R_0(z) + \sum_{\sigma, \nu \in \II} 
\left(A_{\eps, \sigma}R_0(\overline{z})\right)^{*} (\Lambda_{\eps}(z)^{-1})_{\sigma \nu}\, J_{\nu} A_{\eps, \nu} R_0(z). \label{KK2}
\end{align}
In virtue of Proposition \ref{prop4.9}, there exist $ \eps_0, z_0 >0$ such that $\Lambda_{\eps}(z)$ is invertible for all $z > z_0$ and $\eps \in (0, \eps_0)$. This implies that $ (z_0,\infty) \subset \rho(H_{\eps}) \cap \rho(H_0)$ for any $\eps \in (0, \eps_0)$ and $(H_{\eps}+z)^{-1}$ is then given by \eqref{KK2}. Moreover, Proposition \ref{prop4.9} also
shows that $\lim_{\eps \to 0} \,\Lambda_{\eps}(z)^{-1}$ exists for any $z > z_0$ and, by the componentwise version \eqref{Lambdafac2} of \eqref{Lambdafac2new},   
\begin{align}
\lim_{\eps \to 0} \,(\Lambda_{\eps}(z)^{-1})_{\sigma \nu} = 
\begin{cases} \dfrac{\Ket{u_{\sigma}}\Bra{v_{\nu}}}{\Braket{u_{\sigma}| v_{\sigma}} \Braket{u_{\nu}| v_{\nu}}} \otimes (\Theta(z)^{-1})_{\sigma \nu}  \qquad \quad &\textup{if}\; \sigma, \nu \in \JJ \\
\;\; 0 \qquad &\textup{else}.\nonumber
\end{cases}
\end{align}
Here, $\Theta(z)$ is a closed and invertible operator in the reduced Hilbert space 
$\XX$ defined by \eqref{DefHred}. 
Since $J_{\nu}=\sgn(V_{\nu})$ is bounded and since, by Proposition \ref{prop3.3}, $\limepsr A_{\eps, \nu}R_0(z)=S(z)_{\nu}$ exists for all $\nu$, we conclude that we can take the limit $\eps \rightarrow 0$ on the right side of \eqref{KK2}. We find that
$\limepsr (H_{\eps}+z)^{-1}=R(z)$ for all $z > z_0$, where 
\begin{align}
R(z):= R_0(z) + \sum_{\sigma, \nu \in \JJ} 
S(\overline{z})_{\sigma}^{*} \left(  \frac{\Ket{u_{\sigma}}\Bra{v_{\nu}}}{\Braket{u_{\sigma}| v_{\sigma}} \Braket{u_{\nu}| v_{\nu}}} \otimes (\Theta(z)^{-1})_{\sigma \nu} \right) J_{\nu} S(z)_{\nu}. \label{DefR(z)}
\end{align}

Expression \eqref{DefR(z)} can be simplified as follows: By Proposition \ref{prop3.3}, we have that  $S(z)_{\sigma}\psi = v_{\sigma} \otimes (G(z)_{\sigma}\psi)$ with $G(z)_{\sigma}\in \LL(\HH, \XX_{\sigma})$ and, similarly, $J_{\nu} S(z)_{\nu}\psi=u_{\nu} \otimes (G(z)_{\nu}\psi)$. Hence, \eqref{DefR(z)} takes the form
\begin{align}
R(z)&= R_0(z) + \sum_{\sigma, \nu \in \JJ} 
G(\overline{z})_{\sigma}^{*} \,(\Theta(z)^{-1} )_{\sigma \nu} \,G(z)_{\nu} = R_0(z) + G(\overline{z})^{*} \,\Theta(z)^{-1} \,G(z), \label{DefR(z)2}
\end{align}
where $G(z) \in  \LL(\HH,  \XX)$ is given by $G(z)\psi=( G(z)_{\sigma} \psi )_{\sigma \in \JJ}$.
This expression for $R(z)$ agrees with the right side of \eqref{Hres}. Moreover, it also shows that $R(z)$ only depends on $V_{\sigma}\;(\sigma \in \JJ)$ via the parameters $\beta_{\sigma}$ that are defined by \eqref{Defbetasigma}.

To prove Theorem \ref{theo1} for $z>z_0$, it remains to show that $R(z)$ indeed defines the resolvent $(H+z)^{-1}$ of a self-adjoint operator $H$. This follows from the Trotter-Kato Theorem (see e.g. \cite[Theorem 5]{DR} and \cite[Theorem VIII.22]{RS1}\footnote{ In \cite{RS1} existence of $\limepsr (H_{\eps}+z)^{-1}$ in two points $z$ with $\pm \Ima(z)>0$ is assumed, but the proof can be adapted to the case where the limit exists for all $z$ from a non-empty open interval.}), provided that we can show that $\ran\,R(z)$ is dense in $\HH$. Let $T: D(H_0) \rightarrow \XX$ be given by $T \psi=(T_{\sigma} \psi)_{\sigma \in \JJ}$, where the trace $T_{\sigma}: D(T_{\sigma}) \subset \HH \rightarrow \XX_{\sigma}$ is defined by \eqref{DefTsigma} on the domain $D(T_{\sigma}) \supset D(H_0)$. For compactly supported, smooth functions $\psi$, $T_{\sigma}\psi$ agrees with \eqref{trace}, so $\ker\,T$ contains the dense set $C^{\infty}_0(\R^{2N} \setminus \Gamma) \subset \HH$,
where 
\begin{align}
\Gamma:= \{(x_1,...,x_N) \in \R^{2N} \; | \; \exists i\neq j \;\textup{with}\; x_i=x_j \}\nonumber
\end{align}
denotes the union of the contact planes of the $N$ particles. 
Let $\phi \in (\ran\,R(z))^{\perp}$. 
Then, for any $\psi \in \ker\,T$, it follows from \eqref{DefR(z)2} and $G(z)=TR_0(z)$ that
\begin{align}
0 = \Braket{\phi| R(z)(H_0+z) \psi } &= \Braket{\phi| \psi } + 
\Braket{\phi| G(\overline{z})^{*} \,\Theta(z)^{-1} \,T \psi } =  \Braket{\phi| \psi }. \nonumber
\end{align}
As $\ker\,T \subset  \HH$ is dense, this implies that $\phi=0$, i.e.  $\ran\,R(z) \subset \HH $ is dense. Now, by the Trotter-Kato Theorem, there exists a self-adjoint operator $H$ such that $R(z)=(H+z)^{-1}$ for $z > z_0$ and, moreover, $H_{\eps} \rightarrow H$ in the norm resolvent sense. This completes the proof of Theorem \ref{theo1} for $z>z_0$.

To prove Theorem \ref{theo1} for all $z \in \rho(H) \cap \rho(H_0)$, it suffices to verify the hypotheses of 
\cite[Theorem 2.19]{CFP2018}. To this end, we first need to define $\Theta(z)$ for all $z \in \rho(H_0) = \C \setminus (-\infty, 0]$. For the diagonal parts $\Theta(z)_{\sigma \sigma}$, this is achieved by \eqref{DefTheta(z,P)diag} with $\ln$ denoting the principal branch of the logarithm. To define the off-diagonal parts  for $z \in \rho(H_0)$, we use  $\Theta(z)_{\sigma \nu}:=-T_{\sigma} G(\overline{z})_{\nu}^{*}$ for $\sigma \neq \nu$, which agrees with \eqref{DefThetaoff} for $z>0$, but, a priori, may be an unbounded operator for other values of $z \in \rho(H_0)$. From $G(z)_{\nu}=T_{\nu}R_0(z)$ and from the first resolvent identity, we get
\begin{align}
G(\overline{z})_{\nu}^{*}= G(\overline{w})_{\nu}^{*} + (w-z) R_0(z)G(\overline{w})_{\nu}^{*}, \qquad \qquad z, w \in \rho(H_0). \label{Gnudiff}
\end{align}
Choosing $w \in (0, \infty)$, we know from Proposition  \ref{prop4.5} that $\ran\,G(\overline{w})_{\nu}^{*} \subset D(T_{\sigma})$ and that $T_{\sigma}G(\overline{w})_{\nu}^{*}$ is a bounded operator. Likewise, $T_{\sigma}R_0(z)$ is a bounded operator and hence, by \eqref{Gnudiff},  $\Theta(z)_{\sigma \nu}=-T_{\sigma} G(\overline{z})_{\nu}^{*}$ is a bounded operator  defined on $\XX_{\nu}$ for all $z \in \rho(H_0)$. Moreover,
\begin{align}
\Theta(z)_{\sigma \nu} = \Theta(w)_{\sigma \nu} + (z-w) G(z)_{\sigma}G(\overline{w})_{\nu}^{*}, \qquad \qquad \sigma \neq \nu, \quad z,w \in \rho(H_0). \label{Thetaoffid}
\end{align}
By Proposition \ref{prop5.1}, below, the extended operator $\Theta(z)$ satisfies
the hypotheses of  \cite[Theorem 2.19]{CFP2018}, so the proof of Theorem \ref{theo1} is complete.$ \hfill \square$

\begin{prop} \label{prop5.1}
Let $w,z \in \rho(H_0)$. Then the operator $\Theta(z)$ has the following properties:
\begin{enumerate}[label=(\roman*)]
\item  $\Theta(z)^{*}= \Theta(\overline{z})$
\item  $\Theta(z) = \Theta(w) + (z-w) G(z)G(\overline{w})^{*}$
\item  $ 0 \in \rho(\Theta(z))$ for some $z \in \rho(H_0)$
\end{enumerate} 
\end{prop}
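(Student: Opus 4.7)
The plan is to verify the three claims by decomposing $\Theta(z) = \Theta(z)_{\textup{diag}} + \Theta(z)_{\textup{off}}$ and handling the two pieces with the explicit descriptions from Sections~\ref{sec4.1} and~\ref{sec4.2.2}.

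For~(ii), the off-diagonal part is already done: identity \eqref{Thetaoffid} is exactly the required formula for entries $\sigma \neq \nu$, and it was proved there by applying the first resolvent identity to $G(\overline{z})_\nu^{*}$. For the diagonal part I would use $G(z)_\sigma = T_\sigma R_0(z)$ and $G(\overline{w})_\sigma^{*} = R_0(w) T_\sigma^{*}$, together with the first resolvent identity, to write
\begin{equation*}
(z-w)\, G(z)_\sigma G(\overline{w})_\sigma^{*} \;=\; T_\sigma \bigl(R_0(w) - R_0(z)\bigr) T_\sigma^{*},
\end{equation*}
and then pass to Fourier space via $\KK_\sigma$ as in Section~\ref{sec4.1}. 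At fixed $\underline{P}_\sigma$ the right-hand side becomes multiplication by
\begin{equation*}
\frac{1}{(2\pi)^2}\int_{\R^2}\!\left[\frac{1}{p^2/\mu_\sigma + Q_\sigma(\underline{P}_\sigma) + w} - \frac{1}{p^2/\mu_\sigma + Q_\sigma(\underline{P}_\sigma) + z}\right] \d{p},
\end{equation*}
and although each resolvent kernel is only logarithmically integrable, the \emph{difference} is absolutely integrable; a direct radial computation evaluates it to $\frac{\mu_\sigma}{4\pi}\bigl[\ln(z+Q_\sigma) - \ln(w+Q_\sigma)\bigr]$, which matches $\Theta(z)_{\sigma\sigma} - \Theta(w)_{\sigma\sigma}$ as read off from \eqref{DefTheta(z,P)diag}. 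Since $G(z) G(\overline{w})^{*}$ is bounded, the identity immediately yields $D(\Theta(z)) = D(\Theta(w))$, so the common domain $D$ is $z$-independent.

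Claim~(i) I would verify component-wise. For the off-diagonal entries, the chain
\begin{equation*}
\Theta(z)_{\sigma\nu}^{*} = -\bigl(T_\sigma G(\overline{z})_\nu^{*}\bigr)^{*} = -G(\overline{z})_\nu T_\sigma^{*} = -T_\nu R_0(\overline{z}) T_\sigma^{*} = -T_\nu G(z)_\sigma^{*} = \Theta(\overline{z})_{\nu\sigma}
\end{equation*}
settles the matter, using $G(z)_\sigma^{*} = R_0(\overline{z}) T_\sigma^{*}$. For the diagonal entries, the Fourier multiplier in \eqref{DefTheta(z,P)diag} is $\frac{\mu_\sigma}{4\pi}\bigl(\ln(z+Q_\sigma) + \beta_\sigma/\pi\bigr)$ with $Q_\sigma \geq 0$ and $\beta_\sigma \in \R$, so pointwise complex conjugation just replaces $z$ by $\overline{z}$, yielding $\Theta(z)_{\sigma\sigma}^{*} = \Theta(\overline{z})_{\sigma\sigma}$.

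Claim~(iii) has essentially been proved already inside Proposition~\ref{prop4.9}: it was shown there that for all sufficiently large real $z>0$ one has
\begin{equation*}
\Theta(z)^{-1} = \bigl(1 + (\Theta(z)_{\textup{diag}})^{-1}\Theta(z)_{\textup{off}}\bigr)^{-1} (\Theta(z)_{\textup{diag}})^{-1},
\end{equation*}
using $\|(\Theta(z)_{\textup{diag}})^{-1}\| \to 0$ as $z \to \infty$ (visible from \eqref{DefTheta(z,P)diag}) together with the uniform bound $\|\Theta(z)_{\textup{off}}\| \leq C$ from Proposition~\ref{prop4.5}. The only genuinely delicate point in the whole proposition is the diagonal case of~(ii): one must make rigorous the formal expression $T_\sigma(R_0(w) - R_0(z)) T_\sigma^{*}$, in which $T_\sigma^{*}$ is merely a densely defined unbounded operator, and verify that the cancellation of the logarithmic short-distance singularity of the $2$D free Green's function on the collision plane reproduces exactly the Fourier multiplier of $\Theta(z)_{\sigma\sigma} - \Theta(w)_{\sigma\sigma}$.
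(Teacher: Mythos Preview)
Your treatment of (ii) and (iii) matches the paper's proof essentially line for line: the off-diagonal identity is \eqref{Thetaoffid}, the diagonal case is the Fourier-space computation \eqref{DefMultOp}, and (iii) is the reference back to Proposition~\ref{prop4.9}.

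For (i), however, your route differs from the paper's, and your chain has a gap. The equality $G(z)_\sigma^{*} = R_0(\overline{z}) T_\sigma^{*}$ is not literally true: the right-hand side is defined only on $D(T_\sigma^{*})$, whereas $G(z)_\sigma^{*}$ is everywhere-defined and bounded. One has only $G(z)_\sigma^{*} \supset R_0(\overline{z}) T_\sigma^{*}$, and similarly $\bigl(T_\sigma G(\overline{z})_\nu^{*}\bigr)^{*} \supset G(\overline{z})_\nu T_\sigma^{*}$. Your chain thus compares two bounded operators only on $D(T_\sigma^{*})$, and you would still need to argue that this domain is dense (equivalently, that $T_\sigma$ is closable) in order to conclude equality. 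This is fixable---the cleanest repair is to verify $\langle \psi, \Theta(z)_{\sigma\nu}\phi\rangle = \langle \Theta(\overline{z})_{\nu\sigma}\psi, \phi\rangle$ directly from the Fourier representations \eqref{TsigmaFourier} and \eqref{G_(k,l)(z)*Fourier}---but as written the manipulation is formal.

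The paper sidesteps this entirely with a different argument: since $\Theta(z)$ depends on the data only through the parameters $\beta_\sigma$ and the set $\JJ$, one may \emph{choose} potentials $V_\sigma>0$ (so $u_\sigma=v_\sigma$) realizing the same $\beta_\sigma$. With this choice $\Lambda_\eps(z)$ is self-adjoint for real $z$, and passing to the limit via \eqref{Lambdafac2} forces $\Theta(z)^{-1}$, hence $\Theta(z)$, to be self-adjoint for $z\in(z_0,\infty)$. Property (i) for general $z\in\rho(H_0)$ then follows from (ii). This trick avoids any adjoint computation on the unbounded trace operators and handles the full operator $\Theta(z)$ at once rather than entry by entry; your direct approach, once patched, has the merit of being self-contained and not relying on the approximation scheme.
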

\noindent \textit{Remark.} $(ii)$ implies that $D=D(\Theta(z))$ is independent of $z \in \rho(H_0)$.
\begin{proof}
Property $(iii)$ has already been verified in Proposition \ref{prop4.9}. To prove $(ii)$, we first recall that  $\Theta(z)$  and $G(z)$ have 
the components $ \Theta(z)_{\sigma \nu } \; (\sigma, \nu \in \JJ) $ and  $G(z)_{\sigma} \; (\sigma \in \JJ)$, respectively. Hence, we have to verify that
\begin{align}
\Theta(z)_{\sigma \nu} = \Theta(w)_{\sigma \nu} + (z-w) G(z)_{\sigma}G(\overline{w})_{\nu}^{*}, \qquad \qquad \sigma, \nu \in \JJ. \label{Thetaid}
\end{align}
If $\sigma \neq \nu$, then this has been shown in \eqref{Thetaoffid}. In the case of $\sigma= \nu$, we assume that $\sigma=(1,2)$ for notational simplicity. Equations \eqref{DefTsigma} and \eqref{H0H0Tilde} imply that
\begin{align}
G(z)_{\sigma}G(\overline{w})_{\sigma}^{*} = \tau \KK_{\sigma} R_0(z) \left( \tau \KK_{\sigma} R_0(\overline{w}) \right)^{*} = \tau (\widetilde{H}_0+z)^{-1} \left( \tau (\widetilde{H}_0+\overline{w})^{-1} \right)^{*},\nonumber
\end{align}
where $\widetilde{H}_0$ is defined by \eqref{DefH0Tilde}. Using now the definition \eqref{Deftau} of $\tau$ together with a Fourier transform in $(R,x_3,...,x_N)$, we find that $(z-w) G(z)_{\sigma}G(\overline{w})_{\sigma}^{*}$ acts pointwise in $\underline{P}=(P,p_3,...,p_N)$ by multiplication with
\begin{align}
\dfrac{z-w}{4\pi^2} \; \int\limits_{\R^2} \left( \frac{p^2}{\mu_{\sigma}} + Q +z \right)^{-1} \mspace{-3mu}\left(\frac{p^2}{\mu_{\sigma}} + Q +w\right)^{-1} \mspace{-3mu} \d{p} =  \frac{\mu_{\sigma}}{4 \pi} \left[ \ln(z+Q) - \ln(w+Q)  \right], \label{DefMultOp}
\end{align}
where $Q \geq 0$ is defined by \eqref{DefQ}. Now \eqref{Thetaid}, for $\sigma=\nu=(1,2)$, follows from \eqref{DefMultOp} and from the definition \eqref{DefTheta(z,P)diag} of $\Theta(z, \underline{P}_{\sigma} )_{\sigma \sigma}$.

To prove $(i)$, we use the fact that $\Theta(z)$ does not depend on the particular choices of $V_{\sigma}$, $a_{\sigma}$ and $b_{\sigma}$ as long as the parameters $\beta_{\sigma}$ and the set $\JJ$ remain unchanged.
For $\sigma \in \JJ$ we choose $V_{\sigma}>0$,  $a_\sigma = \int V_{\sigma}(r)\,\d{r} / (2 \pi)$ and $b_{\sigma}$ to solve \eqref{Defbetasigma} and \eqref{Defalphasigma} for the given values of $\beta_{\sigma}$. For $\sigma \in \II \setminus \JJ$ we set $V_{\sigma}=0$. Moreover, we choose  $z \in (z_0, \infty)$ so that \eqref{Lambdafac2new} holds. Then $u_{\sigma}= v_{\sigma}$ for all pairs $\sigma \in \II$ and hence
$\Lambda_{\eps}(z)$ is self-adjoint for all $\eps>0$. Now, it follows from \eqref{Lambdafac2new}, or more directly from \eqref{Lambdafac2}, that $\Theta(z)^{-1}$ is self-adjoint, too. Therefore, $\Theta(z)$ is self-adjoint for $z \in (z_0, \infty)$ and $(i)$ for general $z \in \rho(H_0)$ now easily follows from $(ii)$.
\end{proof}

Our results on $\Theta(z)_{\sigma \nu}$ imply the following lower bound on $\sigma(H)$ (see also \cite{Figari, DR}).

\begin{prop} \label{prop5.2}
Let $H$ denote the Hamiltonian from Theorem \ref{theo1}. Then, with $N_{\JJ}:=|\JJ|$, $ \mu^{-}:= \min_{\sigma \in \JJ}\mu_{\sigma}$, $ \beta^{-}:= \min_{\sigma \in \JJ}\beta_{\sigma}$ and $m:= \max_{i=1,...,N} m_i$, it holds that
\begin{align}
\inf \sigma(H) \geq  -  \exp\left( \frac{\pi m}{\mu^{-}} (N_{\JJ} -1) - \frac{\beta^{-}}{\pi} \right). \label{SigmaHLower}
\end{align}
\end{prop}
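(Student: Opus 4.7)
The approach is to use the resolvent formula established in the proof of Theorem~\ref{theo1}: for $z \in \rho(H_0)$, one has $z \in \rho(H)$ iff $\Theta(z)$ has a bounded inverse. Hence if I can show that $\Theta(z)$ is invertible for every real $z > z^\star := \exp\!\left(\tfrac{\pi m}{\mu^-}(N_\JJ - 1) - \tfrac{\beta^-}{\pi}\right)$, then $(z^\star, \infty) \subset \rho(H)$, which by the paper's sign convention for $\rho$ yields $\sigma(H) \subset [-z^\star, \infty)$, i.e.\ the claimed lower bound.

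For $z > 0$ real, $\Theta(z)$ is self-adjoint by Proposition~\ref{prop5.1}\,(i), so I split it into diagonal and off-diagonal parts and estimate both. The explicit multiplication-operator form \eqref{DefTheta(z,P)diag}, together with monotonicity of $\ln$ and non-negativity of the momentum-dependent part in its argument, gives pointwise in the conjugate momentum
$$
 \Theta(z,\underline P_\sigma)_{\sigma\sigma} \;\geq\; \frac{\mu_\sigma}{4\pi}\!\left[\ln(z) + \frac{\beta_\sigma}{\pi}\right] \;\geq\; \frac{\mu^-}{4\pi}\!\left[\ln(z) + \frac{\beta^-}{\pi}\right],
$$
valid once the right-hand side is nonnegative. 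Consequently $\Theta(z)_{\textup{diag}} \geq \tfrac{\mu^-}{4\pi}[\ln z + \beta^-/\pi]$ as a form inequality on its natural domain in $\XX$.

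For the off-diagonal part, Proposition~\ref{prop4.5} yields $\|\Theta(z)_{\sigma\nu}\| \leq m/4$ for all $\sigma, \nu \in \JJ$ with $\sigma \neq \nu$. Since $\Theta(z)_{\textup{off}}$ is a self-adjoint operator matrix on $\XX$ with at most $N_\JJ - 1$ nonzero entries per row, the Schur test with unit weights gives
$$
 \|\Theta(z)_{\textup{off}}\| \;\leq\; \max_{\sigma \in \JJ} \sum_{\substack{\nu \in \JJ \\ \nu \neq \sigma}} \|\Theta(z)_{\sigma\nu}\| \;\leq\; (N_\JJ - 1)\,\frac{m}{4}.
$$
Combining the two bounds,
$$
 \Theta(z) \;\geq\; \frac{\mu^-}{4\pi}\!\left[\ln(z) + \frac{\beta^-}{\pi}\right] - (N_\JJ - 1)\frac{m}{4},
$$
which is strictly positive precisely when $z > z^\star$; for such $z$ the operator $\Theta(z)$ is bounded below by a positive constant and hence has bounded inverse, completing the argument.

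The only non-routine step is the Schur-type bound on $\Theta(z)_{\textup{off}}$, which relies on self-adjointness (Proposition~\ref{prop5.1}\,(i) at real $z$) and the uniform entrywise bound from Proposition~\ref{prop4.5}; the diagonal estimate is immediate from \eqref{DefTheta(z,P)diag}, and the passage from strict positivity of $\Theta(z)$ to a spectral bound on $H$ is a direct consequence of the Krein setup developed in Section~\ref{sec5}.
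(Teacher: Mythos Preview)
Your proof is correct and follows essentially the same route as the paper: reduce the spectral bound to strict positivity of $\Theta(z)$ via the Krein-type criterion from Section~\ref{sec5}, bound the diagonal part below using \eqref{DefTheta(z,P)diag}, and bound the off-diagonal part above using Proposition~\ref{prop4.5}. The only cosmetic difference is that you invoke the Schur test to get $\|\Theta(z)_{\textup{off}}\|\le (N_\JJ-1)\,m/4$, whereas the paper arrives at the same constant by a direct quadratic-form estimate using $\|w_\sigma\|\|w_\nu\|\le\tfrac12(\|w_\sigma\|^2+\|w_\nu\|^2)$; the two are equivalent here.
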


\begin{proof} Without restriction, we can assume that $N_{\JJ}\geq 1$ because for $H=H_0$ the bound \eqref{SigmaHLower} is obvious.
From  \cite[Theorem 2.19]{CFP2018} and from Proposition \ref{prop5.1}, we know that a point $z \in \rho(H_0)$ belongs to $\rho(H)$ if and only if $\Theta(z)$ is invertible in $\XX$. 
Since $\Theta(z)$ is self-adjoint for $z>0$, it suffices to show that $\Theta(z)$ is bounded from below by a positive constant for $z >z_0:=\exp( \pi m ( N_{\JJ} -1)/ \mu^{-} - \beta^{-}/\pi)$. To this end,
we use the definition \eqref{DefTheta(z,P)diag} of $\Theta(z)_{\sigma \sigma}$ and,  for $\sigma \neq \nu$, we use the bound $\| \Theta(z)_{\sigma \nu} \| \leq m/4$ from Proposition \ref{prop4.5}. As $z>z_0$ implies that $\ln(z) + \beta^{-}/\pi > 0$, we find that, for all $w = (w_{\sigma})_{\sigma \in \JJ} \in D(\Theta(z))$, 
\begin{align}
\Braket{w | \Theta(z) w}&=\sum_{\sigma \in \JJ} \Braket{w_{\sigma} | \Theta(z)_{\sigma \sigma} w_{\sigma}} + \sum_{\substack{ \sigma, \nu \in \JJ \\ \sigma \neq \nu}} \Braket{w_{\sigma} | \Theta(z)_{\sigma \nu} w_{\nu}} \nonumber \\
& \geq  \frac{\mu^{-}}{4\pi}  \left(\ln(z) + \frac{\beta^{-}}{\pi}\right)  \sum_{\sigma \in \JJ} \|w_{\sigma} \|^2 - \frac{m}{4} \sum_{\substack{ \sigma, \nu \in \JJ \\ \sigma \neq \nu}} \|w_{\sigma} \| \|w_{\nu}\| \nonumber \\
& \geq  \frac{\mu^{-}}{4\pi}  \left(\ln(z) + \frac{\beta^{-}}{\pi}\right)\| w \|^2 - \frac{m}{8} \sum_{\substack{ \sigma, \nu \in \JJ \\ \sigma \neq \nu}} ( \|w_{\sigma} \|^2 +  \|w_{\nu}\|^2) \nonumber \\ 
&= \left[ \frac{\mu^{-}}{4\pi}  \left(\ln(z) + \frac{\beta^{-}}{\pi} 
\right)- 
 \frac{m}{4} \left( N_{\JJ} -1 \right) \right] \|w\|^2. \nonumber
\end{align}
The expression in brackets is positive for $z > z_0=\exp( \pi m ( N_{\JJ} -1)/ \mu^{-} - \beta^{-}/\pi)$, which proves \eqref{SigmaHLower}.
\end{proof}

\section{The quadratic form of the Hamiltonian} \label{sec6}
In this section we determine the quadratic form of $H$  and we show, in the case of $N$ particles 
of mass one, that it agrees with a quadratic form introduced in \cite{Figari}. This proves that $H$ is the TMS Hamiltonian of Dell'Antonio et al. (see \cite[Eqs. (5.3), (5.4)]{Figari}). 

 We start by deriving an explicit formula for the quadratic form of $H$ restricted to $D(H)$:
\begin{lemma} \label{lm6.1}
Let $H$ denote the self-adjoint operator from Theorem \ref{theo1} and let $z \in \rho(H_0) \cap \rho(H)$. Then, for any $\psi \in D(H)$, it holds that 
\begin{align}
\Braket{\psi | H \psi}= \Braket{\psi -  G(z)^{*}w | (H_0+z)(\psi - G(\overline{z})^{*} w) }+ \Braket{w | \Theta(z) w} -z \|\psi\|^2, \label{HForm}
\end{align} 
where  $w=(w_{\sigma})_{\sigma \in \JJ} \in D(\Theta(z))$ is uniquely determined by \eqref{TMS1} and \eqref{TMS2}.
\end{lemma}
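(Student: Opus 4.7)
The plan is to invoke Corollary~\ref{cor2} to decompose $\psi = \psi_0 + G(\overline z)^{*}w$ with $\psi_0 \in D(H_0)$ and $w \in D$ satisfying the TMS conditions, and then to convert the quadratic form of $H$ into one involving only $H_0$ and $\Theta(z)$. The starting identity I would exploit is \eqref{Hact}: $(H+z)\psi = (H_0+z)\psi_0$. Writing $\langle \psi \mid H\psi\rangle = \langle \psi \mid (H+z)\psi\rangle - z\|\psi\|^2$, this gives
\begin{align}
\langle \psi \mid H\psi\rangle = \langle \psi \mid (H_0+z)\psi_0 \rangle - z\|\psi\|^2, \nonumber
\end{align}
which is a well-defined pairing because $\psi_0 \in D(H_0)$. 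Note that the factor $(H_0+z)\psi_0$ appearing here coincides with $(H_0+z)(\psi - G(\overline z)^{*}w)$, which already matches the right-hand side of \eqref{HForm}.

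Next I would insert a controlled $G(z)^{*}w$ correction on the left slot. Adding and subtracting,
\begin{align}
\langle \psi \mid (H_0+z)\psi_0\rangle = \langle \psi - G(z)^{*}w \mid (H_0+z)\psi_0\rangle + \langle G(z)^{*}w \mid (H_0+z)\psi_0\rangle, \nonumber
\end{align}
so that \eqref{HForm} will follow once I identify the last term with $\langle w \mid \Theta(z)w\rangle$. This is the only real computation in the proof, and I would handle it via the boundedness of $G(z) = T R_0(z) : \HH \to \XX$ (so that $G(z)^{*}$ is bounded and the pairing with $w \in \XX$ is legitimate) together with the elementary identity
\begin{align}
G(z)(H_0+z) = T R_0(z)(H_0+z) = T \qquad \text{on } D(H_0). \nonumber
\end{align}
Applying this to $\psi_0$ and using the TMS boundary condition \eqref{TMS2}, $T\psi_0 = \Theta(z)w$, yields
\begin{align}
\langle G(z)^{*}w \mid (H_0+z)\psi_0\rangle = \langle w \mid G(z)(H_0+z)\psi_0\rangle_{\XX} = \langle w \mid T\psi_0\rangle_{\XX} = \langle w \mid \Theta(z)w\rangle_{\XX}. \nonumber
\end{align}
Here $T\psi_0 \in \XX$ is guaranteed by Lemma~\ref{lm3.1} since $\psi_0 \in H^{2}(\R^{2N})$, and $w \in D(\Theta(z))$ by Corollary~\ref{cor2}, so all pairings are meaningful.

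Putting the three displays together gives exactly \eqref{HForm}, after replacing $\psi_0$ in the first term on the right by $\psi - G(\overline z)^{*}w$. There is no serious obstacle; the only point that requires a brief justification is the legitimacy of moving $G(z)$ to the right-hand side of the inner product (bounded operator from $\HH$ to $\XX$) and the applicability of $T$ to $\psi_0 \in D(H_0)$, both of which are already established earlier in the paper. The asymmetric appearance of $G(z)^{*}$ and $G(\overline z)^{*}$ on the two sides of the first term of \eqref{HForm} is precisely what makes this manipulation work for complex $z$: the right slot stays inside $D(H_0)$, while the left slot is adjusted so that the boundary contribution collapses to $\langle w \mid \Theta(z)w\rangle$ via the identity $G(z)(H_0+z) = T$.
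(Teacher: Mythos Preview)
Your proof is correct and follows essentially the same approach as the paper's: both start from \eqref{Hact} to write $\langle\psi\mid(H+z)\psi\rangle=\langle\psi\mid(H_0+z)\psi_0\rangle$, split off $G(z)^{*}w$ on the left, and identify the resulting boundary term $\langle G(z)^{*}w\mid(H_0+z)\psi_0\rangle$ with $\langle w\mid\Theta(z)w\rangle$ via $G(z)(H_0+z)=T$ and \eqref{TMS2}. Your version is slightly more explicit about the justifications (boundedness of $G(z)$, $\psi_0\in D(T)$), but the argument is the same.
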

\begin{proof}
By \eqref{Hact}, we have that $(H+z) \psi = (H_0+z)\psi_0$, which yields that
\begin{align}
\Braket{\psi | (H+z) \psi} &= \Braket{\psi | (H_0+z) \psi_0} \nonumber \\ &= \Braket{\psi- G(z)^{*} w  | (H_0+z)  \psi_0} + \Braket{w|G(z) (H_0+z)  \psi_0 }. \label{HFormid}
\end{align}
By the definition of $G(z)$ and by \eqref{TMS2}, $G(z) (H_0+z)  \psi_0= T\psi_0 = \Theta(z)w$, so it follows from \eqref{TMS1} and \eqref{HFormid} that
\begin{align}
\Braket{\psi | (H+z) \psi} &= \Braket{\psi- G(z)^{*} w  | (H_0+z)  \left(\psi - G(\overline{z})^{*} w \right)} + \Braket{w|  \Theta(z)w}, \nonumber
\end{align}
which proves \eqref{HForm}.
\end{proof}

Next, we are going to determine an explicit description of the closure of the quadratic form from Lemma \ref{lm6.1}. From \eqref{DefTheta(z,P)diag}, it follows that $\Theta(z)_{\sigma \sigma} \geq  c_1\ln(z) + c_2$ in operator sense, where $c_1>0$ and $c_2 \in \R$ depend on the pair $\sigma \in \JJ$ but not on $z \in (0,\infty)$. Combining this with the fact that $\|\Theta(z)_{\textup{off}}\|$ is uniformly bounded in $z \in (0,\infty)$, we see that $\Theta(z) \geq c >0$ for sufficiently large $z \in \rho(H) \cap (0,\infty)$. For such $z$, we introduce a quadratic form $q$ with domain
\begin{align}
D(q):=\left\{\psi \in \HH \,\Big\vert\, \exists w \in D(\Theta(z)^{1/2}): \psi -  G(z)^{*}w \in H^1(\R^{2N})  \right\}\nonumber
\end{align}
by 
\begin{align}
q(\psi):= \|(H_0+z)^{1/2}(\psi -  G(z)^{*}w)\|^2 + \| \Theta(z)^{1/2} w\|^2 -z \|\psi\|^2, \label{Defq}
\end{align}  
which agrees with the  right side of \eqref{HForm} if $\psi \in D(H) \subset D(q)$. Similarly to \cite[Lemma 3.2]{Figari}, one verifies that $G(z)^{*}w \notin H^1(\R^{2N})$ for any $w \in D(\Theta(z)^{1/2}) \setminus \{0\}$, so $w=w^{\psi}$ is uniquely determined by $\psi \in D(q)$ and, in particular, $q$ is well-defined. Furthermore, it is not hard to see that $q$ is bounded from below and closed. Hence, there exists a unique self-adjoint operator $H_{q}$ associated with $q$. We are going to show that $H_{q}=H$. Let $\psi \in D(H_{q})$ be fixed. Then, by the definitions of $q$ and $H_q$, for all $\phi \in D(q)$,
\begin{align}
\Braket{(H_{q}+z) \psi| \phi} =& \Braket{(H_0+z)^{1/2 }(\psi -  G(z)^{*}w^{\psi})|(H_0+z)^{1/2}(\phi - G(z)^{*} w^{\phi})} \nonumber \\
&+ \Braket{\Theta(z)^{1/2} w^{\psi} | \Theta(z)^{1/2}w^{\phi}}. \label{Hqform}
\end{align}
Choosing $w^{\phi}=0$, wee see that any $\phi \in H^1(\R^{2N})$ belongs to $D(q)$, and 
hence, for $\phi \in H^1(\R^{2N})$, Eq. \eqref{Hqform} becomes
\begin{align}
\Braket{(H_{q}+z) \psi| \phi} = \Braket{(H_0+z)^{1/2}(\psi -  G(z)^{*}w^{\psi}) | (H_0+z)^{1/2}\phi}.
\end{align}
This equation shows that $\psi_0:= \psi - G(z)^{*} w^{\psi} \in D(H_0)$, and that
\begin{align}
(H_{q}+z)\psi=(H_0+z)\psi_0. \label{Hqact}
\end{align}
We have thus verified condition \eqref{TMS1} of Corollary \ref{cor2}. It remains to check condition \eqref{TMS2}. 

Given $w \in D(\Theta(z)^{1/2})$, we choose $\phi:= G(z)^{*}w$, so that $\phi - G(z)^{*}w = 0 \in H^1(\R^{2N})$. It follows that $\phi \in D(q)$, and from Eqs. \eqref{Hqform} and \eqref{Hqact}, we see that, for all $w \in D(\Theta(z)^{1/2})$,
\begin{align}
\Braket{ \Theta(z)^{1/2} w^{\psi}  | \Theta(z)^{1/2}  w}  &=  \Braket{(H_{q}+z) \psi|  \phi} \nonumber \\&=  \Braket{(H_{0}+z) \psi_0|  G(z)^{*}w} \nonumber \\ &= \Braket{G(z)(H_{0}+z) \psi_0| w} = \Braket{T\psi_0| w}. \nonumber
\end{align}
This equation implies that $w^{\psi} \in D(\Theta(z))$ and that $\Theta(z)w^{\psi}=T \psi_0$, 
which is condition \eqref{TMS2}. Corollary \ref{cor2} now shows that $\psi \in D(H)$ and, in view of Eq. \eqref{Hqact}, that $H_q \subset H$. From the self-adjointness of $H_q$ and $H$, we conclude that $H_q=H$.

In the case of $m_i=1$ for $i=1,...,N$, we are going to show that $q$ agrees with a quadratic form introduced in \cite[Eqs. (2.13)-(2.16)]{Figari}. 
For given $\beta=(\beta_{\sigma})_{\sigma \in \II}$, the quadratic form in \cite{Figari} is denoted by $\textup{F}_{\beta}$ and, in our notation, it is defined by
\begin{align}
\textup{F}_{\beta}(\psi):=  \| \nabla(\psi -  G_z^{2N} \ast \xi^{\psi})\|^2 + z \| \psi -  G_z^{2N} \ast \xi^{\psi} \|^2 + \Phi^{z,1}_{\beta}(\xi^{\psi}) + \Phi^{z,2}(\xi^{\psi}) - z \| \psi \|^2 \label{DefFbeta}
\end{align}
on the domain
\begin{align}
D(\textup{F}_{\beta})=\left\{\psi \in \HH \,\Big\vert\, \exists \xi^{\psi} \in D(\Phi^{z,1}_{\beta}):\psi -  G_z^{2N} \ast \xi^{\psi} \in H^1(\R^{2N})  \right\}.\nonumber
\end{align}
The right side of \eqref{DefFbeta} is independent of the particular choice of $z \in (0,\infty)$ and $\xi^{\psi}=(\xi_{\sigma})_{\sigma \in \JJ}$ is a collection of ,,charges`` $\xi_{\sigma} \in L^2(\R^{2(N-1)})$, which are uniquely determined by $\psi \in D(\textup{F}_{\beta})$.
For $\sigma=(i,j)$,  $\xi_{\sigma}$ may be interpreted as a function on the hyperplane $x_i=x_j$ and the convolution $G_z^{2N} \ast \xi^{\psi}$ is to be understood in the sense that
\begin{align}
\widehat{G_z^{2N} \ast \xi^{\psi}}(p_1,...,p_N) = \mspace{-5mu} \sum_{\sigma=(i,j) \in \JJ} \left(z+ \sum_{n=1}^{N} p_n^2 \right)^{\mspace{-5mu}-1} \mspace{-12mu} \cdot \widehat{\xi_{\sigma}}\left(p_1,...,p_{i-1},\frac{p_i+p_j}{\sqrt{2}},...\widehat{p}_j ..., p_N \right) \label{DefGConv}
\end{align}
(cf. the proof of \cite[Lemma 3.2 $(a)$]{Figari}). Moreover, by \cite[Theorem 3.3]{Figari}, $\textup{F}_{\beta}$ is bounded from below and closed on $D(\textup{F}_{\beta})$. 

Assuming that $m_i=1$ for $i=1,...,N$ and that $z \in \rho(H) \cap (0,\infty)$ is so large that $\Theta(z) \geq c >0$, we now write the various contributions to the right side of \eqref{Defq} in a form that will allow us to compare them to their counterparts in \eqref{DefFbeta}. Recall from Section \ref{sec5} that $G(z) \in  \LL(\HH,  \XX)$ is given by $G(z) \psi=(G(z)_{\sigma} \psi)_{\sigma \in \JJ}$ with $G(z)_{\sigma}=T_{\sigma}R_0(z)$. For a given $w=(w_{\sigma})_{\sigma \in \JJ} \in \XX$, this implies that
\begin{align}
G(z)^{*}w = \sum_{\sigma \in \JJ} G(z)_{\sigma}^{*} w_{\sigma}, \label{G(z)Adjoint}
\end{align}
where, by \eqref{G_(k,l)(z)*Fourier} for $\sigma=(i,j)$,
\begin{align}
(\widehat{G(z)_{\sigma}^{*} w_{\sigma}})\left(p_1,...,p_N\right) = \frac{1}{2\pi }\left(z+ \sum_{n=1}^{N}p_n^2\right)^{\mspace{-5mu}-1} \mspace{-12mu}\cdot \widehat{w_{\sigma}}\left(p_i+p_j,p_{1}, ...  \widehat{p}_i... \widehat{p}_j ..., p_N\right). \label{G_(i,j)(z)*Fourier}
\end{align}
From \eqref{DefGConv}, \eqref{G(z)Adjoint} and \eqref{G_(i,j)(z)*Fourier}, we infer that 
\begin{align}
G(z)^{*}w = G_z^{2N} \ast \xi, \qquad \qquad \quad w=(w_{\sigma})_{\sigma \in \JJ}, \xi=(\xi_{\sigma})_{\sigma \in \JJ}, \label{G(z)*Identify} 
\end{align}
where $\xi_{\sigma}$ agrees with $w_{\sigma}$ up to a permutation of the arguments and rescaling:
\begin{align}
\xi_{\sigma}(x_1,...\widehat{x}_j ..., x_N) = \frac{1}{4 \pi} \, w_{\sigma}\left(\frac{x_i}{\sqrt{2}},x_1,...\widehat{x}_i... \widehat{x}_j ..., x_{N}\right), \qquad  \qquad \sigma=(i,j), \label{Defxisigma}
\end{align}
which, after Fourier transform, is equivalent to
\begin{align}
\widehat{\xi_{\sigma}}(p_1,...\widehat{p}_j ..., p_N) =  \frac{1}{2 \pi} \, \widehat{w_{\sigma}}\left(\sqrt{2}p_i,p_1,...\widehat{p}_i... \widehat{p}_j ..., p_{N}\right), \qquad \qquad  \sigma=(i,j). \label{DefxisigmaP}
\end{align}
Conversely, given $\xi_{\sigma} \in L^2(\R^{2(N-1)})$ for all $\sigma \in \JJ$, the identity \eqref{Defxisigma} uniquely determines $w=(w_{\sigma})_{\sigma \in \JJ} \in \XX$ and again \eqref{G(z)*Identify} holds true. In particular, it follows that
\begin{align}
\| (H_0+z)^{1/2}(\psi- G(z)^{*} w) \|^2 =\| \nabla(\psi -  G_z^{2N} \ast \xi)\|^2 + z \| \psi -  G_z^{2N} \ast \xi \|^2,\nonumber
\end{align}
provided that $\psi- G(z)^{*} w \in H^1(\R^{2N})$.
The third and the fourth terms in \eqref{DefFbeta} correspond to the diagonal and off-diagonal contributions, respectively, in 
\begin{align}
\|\Theta(z)^{1/2} w \|^2=\sum_{\sigma \in \JJ}\| (\Theta(z)_{\sigma \sigma})^{1/2} w_{\sigma} \|^2 + \sum_{\substack{ \sigma, \nu \in \JJ \\ \sigma \neq \nu}} \Braket{w_{\sigma} | \Theta(z)_{\sigma \nu} w_{\nu}}.\nonumber
\end{align}
We begin with the diagonal ones. Recall that for  $\sigma=(i,j)$ the operator $\Theta(z)_{\sigma \sigma}$ acts pointwise in $\underline{P}_{\sigma}=(P,p_1,...\widehat{p}_i...\widehat{p}_j ...,p_N)$ by multiplication with $\Theta(z, \underline{P}_{\sigma})_{\sigma \sigma}$ defined by \eqref{DefTheta(z,P)diag}. For given $w=(w_{\sigma})_{\sigma \in \JJ} \in D(\Theta(z)^{1/2})$, it follows that
\begin{align}
\sum_{\sigma \in \JJ} \| (\Theta(z)_{\sigma \sigma})^{1/2} w_{\sigma} \|^2
= \sum_{\sigma \in \JJ} \int \mspace{-3mu} \d{\underline{P}_{\sigma}} \, \Theta(z, \underline{P}_{\sigma})_{\sigma \sigma} \left|\widehat{w_{\sigma}}\left(\underline{P}_{\sigma}\right)\right|^2 , \nonumber
\end{align}
which agrees with  $\Phi^{z,1}_{\beta}(\xi)$ defined by \cite[Eq. (2.15)]{Figari}, where $\beta=(\beta_{\sigma})_{\sigma \in \JJ}$ and $\xi \in D(\Phi^{z,1}_{\beta})$ are given by \eqref{Defbetasigma} and \eqref{DefxisigmaP}, respectively. Conversely, given $\xi \in D(\Phi^{z,1}_{\beta})$, a straightforward computation shows that \eqref{DefxisigmaP} defines a vector $w \in D(\Theta(z)^{1/2})$ and that $\Phi^{z,1}_{\beta}(\xi)=\sum_{\sigma \in \JJ} \| (\Theta(z)_{\sigma \sigma})^{1/2} w_{\sigma} \|^2$. In particular, we see that $w \in D(\Theta(z)^{1/2})$ if and only if 
\eqref{DefxisigmaP} defines a vector $\xi \in D(\Phi^{z,1}_{\beta})$ and hence, using \eqref{G(z)*Identify}, it follows that $D(q)=D(\textup{F}_{\beta})$. 
It remains to examine the contributions of the off-diagonal operators  $\Theta(z)_{\sigma \nu} \in \LL(\XX_{\nu}, \XX_{\sigma})$. Here, the key is the identity (cf. Eq. \eqref{DefThetaoff})
\begin{align}
 \Theta(z)_{\sigma \nu}= -T_{\sigma}G(z)_{\nu}^{*}, \qquad \qquad \sigma \neq \nu. \label{DefThetaoff2}
\end{align}
Recall from \eqref{TsigmaFourier} that, for $\sigma=(i,j)$ and $\psi \in D(T_{\sigma})$,  
\begin{align}
&(\widehat{T_{\sigma}\psi})\left(P,p_1,...\widehat{p_i}...\widehat{p_j}...,p_N\right)\nonumber \\
& \quad= \frac{1}{2\pi}\mspace{-3mu}\int\mspace{-3mu} \d{p} \; \widehat{\psi}\mspace{-2mu}\left(p_1,...,p_{i-1}, \frac{P}{2}-p,p_{i+1},...,p_{j-1}, \frac{P}{2}+p , p_{j+1}, ..., p_{N} \right). \label{TsigmaFourier2}
\end{align}
Now, using first \eqref{DefThetaoff2}, \eqref{TsigmaFourier2} and the substitution
\begin{align}
P:=p_i+p_j, \quad p:= \frac{p_j -p_i}{2}, \nonumber
\end{align}
and inserting the identity \eqref{G_(i,j)(z)*Fourier} for $G(z)_{\nu}^{*} w_{\nu}$ thereafter, we obtain that
\begin{align}
 \sum_{\substack{ \sigma, \nu \in \JJ \\ \sigma \neq \nu}}  \Braket{w_{\sigma} | \Theta(z)_{\sigma \nu} w_{\nu}  } &= -(2 \pi)^{-2} \sum_{\substack{ \sigma, \nu \in \JJ \\ \sigma \neq \nu}} \int \mspace{-3mu} \d{p_1} \cdots  \d{p_N} \;\overline{\widehat{w_{\sigma}}}\left(p_i+p_j,p_{1}, ...  \widehat{p}_i... \widehat{p}_j ..., p_N\right) \nonumber \\ & \;\;\; \cdot \left(z+ \sum_{n=1}^{N}p_n^2 \right)^{ \mspace{-3mu}-1} \mspace{-6mu} \widehat{w_{\nu}}\left(p_k+p_l,p_{1}, ...  \widehat{p}_k... \widehat{p}_l ..., p_N\right), \label{ThetaoffForm}
\end{align}
where the sum runs over all $\sigma=(i,j)\neq(k,l)= \nu$. In view of \eqref{DefxisigmaP}, \eqref{ThetaoffForm} coincides with the expression for $\Phi^{z,2}(\xi)$ from \cite[Eq. (2.16)]{Figari}. This completes the proof that $q=\textup{F}_{\beta}$.

\appendix
\setcounter{theorem}{0}
\setcounter{equation}{0}

\vspace*{1cm}
\LARGE \noindent
\textbf{Appendix}\normalsize
\section{Properties of the Green's function}  \label{secA}
This appendix collects facts and estimates on the Green's function of $-\Delta+z: H^2(\R^d) \rightarrow L^2(\R^d)$. 

For $d \in \N$ and $z \in \C$ with $\Rea(z)>0$, let the function $G^d_{z}: \R^d \rightarrow \C$ be defined by
\begin{align}
G^d_{z}(x) := \int\limits_{0}^{\infty} (4\pi t)^{-d/2}  \exp\left( -\Scalefrac{x^2}{4t} - zt\right) \textup{dt} . \label{DefGreen}
\end{align}
Note that $G^d_{z}$ has a singularity at $x=0$ unless $d=1$.  The proof of the following lemma is left to the reader.
\begin{lemma}\label{lmA1}
Let $d \in \N$ and $z \in \C$ with $\Rea(z)>0$ be given. Then $G^d_{z}$ defined by \eqref{DefGreen} satisfies the following properties:
\begin{enumerate}[label=(\roman*)]
\item $G^d_{z} \in L^1(\R^d)$ and $\|G^d_{z} \|_{L^1}\leq \Rea(z)^{-1}$ 
\item The Fourier transform of $G^d_{z}$ is given by $\widehat{G^d_{z}}(p)=(2\pi)^{-d/2}(p^2+z)^{-1}$
\item $G^d_{z}$ is the Green's function of $-\Delta + z$, i.e. $(-\Delta + z)^{-1}f = G^d_{z} \ast f$ holds for all $f \in L^2(\R^d)$
\item  $G^d_{z} \in L^2(\R^d)$ if and only if $d \in \{1,2,3\}$ 
\item $G^d_{z}$ is spherically symmetric, i.e. $G^d_{z}(x)$ only depends on $\left|x\right|$, and  for $z \in (0,\infty)$ it is positive and 
strictly monotonically decreasing both as a function of $\left|x\right|$ and $z$
\item  Let $d_1,d_2\in \N$ with $d_1+d_2=d$ and decompose $x=(x_1,x_2) \in \R^{d_1} \times \R^{d_2}$. If $x_1 \neq 0$ or $d_1=1$, then $G^d_z(x_1,\cdot) \in L^1(\R^{d_2})$ and the Fourier transform with respect to $x_2$ is
\begin{align}
\widehat{G^d_{z}}(x_1,p_2)=(2\pi)^{-d_2/2} \; G^{d_1}_{z+p_2^2}(x_1). \nonumber
\end{align}
In particular, we have that
\begin{align}
 \int\limits_{\R^{d_2}}^{} \mspace{-3mu}  G^d_{z}(x_1,x_2) \; \textup{d}x_2 = G_{z}^{d_1}(x_1). \nonumber
\end{align}
\end{enumerate}
\end{lemma}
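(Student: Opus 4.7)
All six statements will unfold from the integral representation \eqref{DefGreen} combined with two standard Gaussian identities: $\int_{\R^d}(4\pi t)^{-d/2}e^{-|x|^2/(4t)}\,\d{x}=1$ and its Fourier transform $\int_{\R^d}(4\pi t)^{-d/2}e^{-|x|^2/(4t)}e^{-ip\cdot x}\,\d{x}=e^{-|p|^2 t}$. The recurring task is to justify an interchange of integrals via Fubini--Tonelli; apart from that the arguments are essentially mechanical, and the only mildly non-trivial step will be the $L^1$-integrability needed in (vi).

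For (i) the plan is to apply Tonelli to the nonnegative majorant $(4\pi t)^{-d/2}e^{-|x|^2/(4t)-\Rea(z)\,t}$ of $|G^d_z(x)|$: the $x$-integral produces $1$ and the remaining $t$-integral equals $\Rea(z)^{-1}$. Statement (ii) then follows by swapping the $x$- and $t$-integrals in $(2\pi)^{-d/2}\int e^{-ip\cdot x} G^d_z(x)\,\d{x}$, legitimized by (i); the inner Gaussian transform reduces the double integral to $(2\pi)^{-d/2}\int_0^\infty e^{-(|p|^2+z)t}\,\d{t}=(2\pi)^{-d/2}(|p|^2+z)^{-1}$. For (iii), Young's inequality gives $G^d_z\ast f\in L^2(\R^d)$ for $f\in L^2(\R^d)$; taking Fourier transforms and invoking (ii) identifies $\widehat{G^d_z\ast f}=(|p|^2+z)^{-1}\widehat f=\widehat{(-\Delta+z)^{-1}f}$.

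For (iv) Plancherel reduces the question to whether $(|p|^2+z)^{-1}\in L^2(\R^d)$. The modulus $|(|p|^2+z)^{-1}|^2$ is pointwise comparable to $(|p|^2+\Rea z)^{-2}$, which is integrable near the origin in any dimension and decays like $|p|^{-4}$ at infinity; the integral over $\R^d$ is therefore finite precisely when $d<4$, i.e.\ $d\in\{1,2,3\}$. For (v) spherical symmetry and positivity (for $z>0$) are immediate since the integrand depends on $x$ only through $|x|^2$ and is strictly positive. Strict monotonicity in $|x|$ and in $z$ will follow by differentiating \eqref{DefGreen} under the integral sign, justified by dominated convergence via the $e^{-zt}$ factor; in both cases the resulting integrand is strictly negative.

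For (vi) one writes $G^d_z(x_1,x_2)=\int_0^\infty (4\pi t)^{-d/2}e^{-|x_1|^2/(4t)}e^{-|x_2|^2/(4t)}e^{-zt}\,\d{t}$. The integrability $G^d_z(x_1,\cdot)\in L^1(\R^{d_2})$ needed to apply Fubini is the one step requiring care: for $x_1\neq 0$ it follows from the factor $e^{-|x_1|^2/(4t)}$ which damps the small-$t$ singularity, while for $d_1=1$ one uses the known near-origin behavior of $G^d_z$ (of order $|x|^{2-d}$ for $d\geq 3$, logarithmic for $d=2$, bounded for $d=1$), which remains locally $L^1$ on an affine subspace of codimension $d_1=1$. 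Once Fubini is available, integrating the Gaussian in $x_2$ yields $(4\pi t)^{d_2/2}$ and reduces the iterated integral to $G^{d_1}_z(x_1)$, while inserting $e^{-ip_2\cdot x_2}$ produces $(4\pi t)^{d_2/2}e^{-|p_2|^2 t}$ and reduces it to $(2\pi)^{-d_2/2}G^{d_1}_{z+|p_2|^2}(x_1)$, which is exactly the identity claimed.
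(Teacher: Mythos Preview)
Your sketch is correct; the paper itself does not prove Lemma~\ref{lmA1} at all (it explicitly says ``The proof of the following lemma is left to the reader''), so there is nothing to compare against. One small simplification in your treatment of (vi): for the case $d_1=1$ you do not need to invoke the near-origin asymptotics of $G^d_z$ on affine slices. The same Tonelli computation you use for $x_1\neq 0$ works verbatim: integrating out $x_2$ in the majorant leaves $\int_0^\infty (4\pi t)^{-d_1/2}e^{-|x_1|^2/(4t)}e^{-\Rea(z)t}\,\d t = G^{d_1}_{\Rea(z)}(x_1)$, and for $d_1=1$ this is finite even at $x_1=0$ because $G^1_{\alpha}(x)=(2\sqrt{\alpha})^{-1}e^{-\sqrt{\alpha}|x|}$ is bounded.
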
 

\hspace*{5mm} It follows from \eqref{DefGreen} that $G^d_z(x) \rightarrow \infty$ as $|x| \rightarrow 0$ in dimensions $d \geq2$. However, the next lemma shows that, apart from $x=0$, $G^d_z$ is smooth and exponentially decaying as $|x| \rightarrow \infty$:

\begin{lemma}\label{lmA2} 
Let $d \in \mathbb{N}$ and  $z \in \mathbb{C}$ with $\Rea(z)>0$. 
Then $G^d_z$ is smooth in $\R^d \setminus \{0\}$ and, for any multi-index $\alpha \in \N_{0}^d$, $\partial^{\alpha} G^d_z$ is exponentially decaying as $|x| \rightarrow \infty$. 
\end{lemma}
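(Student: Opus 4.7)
\textbf{Proof plan for Lemma \ref{lmA2}.} The starting point will be the heat-kernel representation \eqref{DefGreen} together with the elementary estimate
\begin{align*}
\left|\partial^{\alpha}_{x} e^{-x^{2}/(4t)}\right| \leq C_{\alpha}\, t^{-|\alpha|/2}\, e^{-x^{2}/(8t)}, \qquad x \in \R^{d},\; t>0,
\end{align*}
which follows from the substitution $y=x/\sqrt{2t}$: each $\partial^{\alpha}_{x}$ produces a factor $t^{-|\alpha|/2}$ times a polynomial $P_{\alpha}(y)$ acting on $e^{-y^{2}/2}$, and $P_{\alpha}(y)\,e^{-y^{2}/4}$ is uniformly bounded in $y$, leaving the remaining Gaussian factor $e^{-y^{2}/4}=e^{-x^{2}/(8t)}$.

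For the smoothness assertion I would fix an arbitrary compact set $K \subset \R^{d}\setminus\{0\}$ and put $c := \inf_{x \in K}|x| > 0$. The estimate above uniformly dominates $\partial^{\alpha}_{x}$ of the integrand in \eqref{DefGreen} on $K$ by the function
\begin{align*}
C_{\alpha}(4\pi)^{-d/2}\, t^{-(d+|\alpha|)/2}\, e^{-c^{2}/(8t)-\Rea(z)\, t},
\end{align*}
which is integrable in $t\in(0,\infty)$: at $t\to 0^{+}$ the super-polynomial decay of $e^{-c^{2}/(8t)}$ absorbs $t^{-(d+|\alpha|)/2}$ since $c>0$, and at $t\to\infty$ the factor $e^{-\Rea(z)\, t}$ takes care of integrability. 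Dominated convergence, applied inductively in $|\alpha|$, then yields smoothness of $G^{d}_{z}$ on $\R^{d}\setminus\{0\}$ together with the pointwise identity
\begin{align*}
\partial^{\alpha} G^{d}_{z}(x) = \int_{0}^{\infty} (4\pi t)^{-d/2}\, \partial^{\alpha}_{x}\!\left[e^{-x^{2}/(4t)}\right] e^{-zt}\, dt.
\end{align*}

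For the exponential decay I will show that there exist $\widetilde{C}_{\alpha},\eta>0$, independent of $x$, such that $|\partial^{\alpha} G^{d}_{z}(x)|\leq \widetilde{C}_{\alpha} e^{-\eta|x|}$ for all $|x|\geq 1$. Inserting the derivative bound into the formula above gives
\begin{align*}
|\partial^{\alpha} G^{d}_{z}(x)| \leq C'_{\alpha} \int_{0}^{\infty} t^{-(d+|\alpha|)/2}\, e^{-x^{2}/(8t)-\Rea(z)\, t}\, dt,
\end{align*}
and I would split the integration at $t=1$. On $(0,1]$ the factorization $e^{-x^{2}/(8t)} \leq e^{-x^{2}/16}\cdot e^{-x^{2}/(16t)}$, valid because $1/t\geq 1$, pulls out $e^{-x^{2}/16}\leq e^{-|x|/16}$ and leaves a $t$-integral that is uniformly bounded in $|x|\geq 1$. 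On $[1,\infty)$ the AM-GM inequality applied to $\frac{x^{2}}{8t}$ and $\frac{\Rea(z)}{2}\, t$ gives $\frac{x^{2}}{8t}+\Rea(z)\, t \geq \frac{|x|\sqrt{\Rea(z)}}{2}+\frac{\Rea(z)}{2}\, t$, and hence
\begin{align*}
\int_{1}^{\infty} t^{-(d+|\alpha|)/2}\, e^{-x^{2}/(8t)-\Rea(z)\, t}\, dt \leq e^{-|x|\sqrt{\Rea(z)}/2} \int_{1}^{\infty} t^{-(d+|\alpha|)/2}\, e^{-\Rea(z)\, t/2}\, dt,
\end{align*}
whose remaining integral is finite. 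Combining both regimes with $\eta := \min\bigl(1/16,\, \sqrt{\Rea(z)}/2\bigr)$ establishes the claim.

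No substantial obstacle is expected: the only minor subtlety is the bookkeeping in the decay budget, namely splitting $e^{-\Rea(z)\, t}$ into a part that combines with $e^{-x^{2}/(8t)}$ via AM-GM to yield a true exponential rate in $|x|$ and a part that is retained to secure $t$-integrability at infinity, while exploiting the Gaussian at small $t$ to absorb the factor $t^{-(d+|\alpha|)/2}$ produced by differentiation.
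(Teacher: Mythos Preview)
Your proof is correct and follows essentially the same approach as the paper: differentiate under the heat-kernel integral \eqref{DefGreen} using dominated convergence (with the singularity at $t=0$ handled by $e^{-c^2/(8t)}$ on compacta away from the origin), then extract an exponential factor in $|x|$ from the combination $x^2/(4t)+\Rea(z)t$. The only minor difference is in how this extraction is carried out: the paper completes the square to obtain any rate $z'<\sqrt{\Rea(z)}$ in one stroke, whereas you split at $t=1$ and use AM--GM, arriving at the smaller rate $\eta=\min(1/16,\sqrt{\Rea(z)}/2)$; since the lemma only asks for \emph{some} exponential decay, this is immaterial.
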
\medskip

\begin{proof}
First, we claim that, for all $c>0$, $k \in \R$, $l \geq 0$ and $z' \in (0, \sqrt{\Rea(z)})$,
\begin{align}
\sup_{|x| \geq c}\left(\int\limits_{0}^{\infty} 
t^k  |x|^{l} \exp\left( -\Scalefrac{x^2}{4t} - \Rea(z)t + z'|x|\right)  \textup{dt} \right)  < \infty. \label{supbound}
\end{align}
To prove this, we choose $\delta>0$ so small that $z' \leq (1-\delta)\sqrt{\Rea(z)}$. Then the estimate
\begin{align}
\exp\left( -\Scalefrac{x^2}{4t} - \Rea(z)t + z'|x|\right) &\leq 
\exp\left( -\Scalefrac{x^2}{4t} - \Rea(z)t + (1-\delta)\sqrt{\Rea(z)} |x| \right) \nonumber \\ & = \exp\left( -(1-\delta) \left( \Scalefrac{|x|}{2\sqrt{t}} - \sqrt{\Rea(z)t} \right)^2- \delta \left( \Scalefrac{x^2}{4t} +  \Rea(z)t \right)  \right) \nonumber \\
& \leq  \exp\left( -\frac{\delta x^2}{8t} - \frac{ \delta c^2}{8t} -\delta \Rea(z) t \right)\nonumber
\end{align}
holds for all $t>0$ and $|x| \geq c$. With the help of 
$ C_{l,\delta}:= \sup_{s \geq 0}\left(s^{l/2} \exp\left( -\delta s\right)    \right) < \infty$,  \eqref{supbound} now follows from
\begin{align}
\sup_{|x| \geq c}&\left(\int\limits_{0}^{\infty}  
t^k  |x|^{l} \exp\left( -\Scalefrac{x^2}{4t} - \Rea(z)t  + z'|x| \right) \d{t} \right)
\nonumber \\
& \leq 8^{l/2} \sup_{|x| \geq c}\left(\int\limits_{0}^{\infty} 
t^{k+l/2}  \left(\frac{x^{2}}{8t}\right)^{l/2} \exp\left( -\frac{\delta x^2}{8t} - \frac{ \delta c^2}{8t}-\delta \Rea(z)t \right) \d{t} \right) \nonumber \\
& \leq 8^{l/2} C_{l,\delta} \; \int\limits_{0}^{\infty} 
t^{k+l/2}\exp\left( -\frac{ \delta c^2}{8t} - \delta \Rea(z)t\right) \d{t} < \infty,\nonumber
\end{align}
where the last integral is finite because the term $\exp\left( - \delta c^2/(8t)\right)$ cancels a possible divergence at $t=0$. \eqref{supbound} allows us to change the order of differentiation and integration in the representation \eqref{DefGreen} of $G^d_z$. We find that, for $x\neq 0$, $\partial^{\alpha}G^d_z(x)$ is a sum of terms of the form
\begin{align}
\int\limits_{0}^{\infty}
(4\pi t)^{-d/2} \, \frac{x^{\alpha'}}{(2t)^n} \exp\left( -\Scalefrac{x^2}{4t} - zt\right)  \d{t}, \nonumber
\end{align}
where $ 0 \leq n \leq |\alpha|$ and $\alpha' \in \N_0^d$ is a multi-index with $| \alpha'| \leq | \alpha|$.
In view of \eqref{supbound}, all of these terms are exponentially decaying as $|x| \rightarrow \infty$.
\end{proof}

In the proof of Proposition \ref{prop4.4} two integral operators $F$ and $B$ depending on $G_{z}^{d}$  
were introduced. We close this section with the bypassed proof of their boundedness:  

\begin{lemma}\label{lmA3}
Let $z,m>0$. Then the operator $F: L^2(\R^2 \times \R^2) \rightarrow L^2(\R^2 \times \R^2)$
that is defined in terms of the kernel 
\begin{align}
K_{F}(x,y;x',y'):= G_{z}^6\left(\sqrt{m}(x-x'),\sqrt{m}(x-y'),\sqrt{m}(y-x')\right)
\nonumber
\end{align}
is bounded with $\| F \| \leq  (4 \sqrt{2}m^2)^{-1}$. Similarly, the kernel
\begin{align}
K_B(x,y,w;x',y',w'):= G_{z}^8\left(\sqrt{m}(x-y'),\sqrt{m}(x-w'),\sqrt{m}(y-x'),\sqrt{m}(w-x')\right) \nonumber
\end{align}
defines an operator $B \in \LL(L^2(\R^2 \times \R^2\times \R^2 ))$ with $\| B \| \leq  (4 \sqrt{2} m^3)^{-1}$.
\end{lemma}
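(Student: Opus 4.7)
The plan is to reduce to unit masses by a unitary rescaling, pass to Fourier space, and bound the resulting bilinear forms by a Cauchy--Schwarz/Schur estimate driven by Lemma \ref{lm4.6}. First I would introduce the unitary $(U_m f)(\tilde r):=m^{-1/2}f(\tilde r/\sqrt m)$ on $L^2(\R^2)$; its tensor product $U_m^{\otimes k}$ ($k=2$ for $F$ and $k=3$ for $B$) conjugates $F$ to $m^{-2}\tilde F$ and $B$ to $m^{-3}\tilde B$, where $\tilde F,\tilde B$ denote the same integral operators but with $m=1$. The problem therefore reduces to proving $\|\tilde F\|,\|\tilde B\|\le (4\sqrt 2)^{-1}$.

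For $\tilde F$, substituting $\widehat{G_z^{6}}(k)=(2\pi)^{-3}(k^2+z)^{-1}$ into the kernel and carrying out the $(x,y,x',y')$-integrations yields, in the symmetric Fourier convention,
\begin{equation*}
\langle\tilde F\psi,\phi\rangle=(2\pi)^{-2}\int\frac{\overline{\hat\phi(k_1+k_2,k_3)}\,\hat\psi(k_1+k_3,k_2)}{k_1^2+k_2^2+k_3^2+z}\,\d{k_1}\,\d{k_2}\,\d{k_3}.
\end{equation*}
The structural identity $(k_1+k_2)+k_3=(k_1+k_3)+k_2$ is the ``total-momentum conservation'' that makes the bilinear form tractable. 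Dropping $z$ and bounding $k_1^2+k_2^2+k_3^2\ge k_2^2+k_3^2$ in the denominator, Cauchy--Schwarz in $k_1$ (carried out exactly as in the proof of Proposition \ref{prop4.5}) pulls the $k_1$-integral out as $f(k_3)g(k_2)$ with $f(k_3)=\|\hat\phi(\cdot,k_3)\|$ and $g(k_2)=\|\hat\psi(\cdot,k_2)\|$, and these satisfy $\|f\|=\|\phi\|$, $\|g\|=\|\psi\|$. Applying Lemma \ref{lm4.6} to the remaining $(k_2,k_3)$-integral then produces a bound of the form $C\|\phi\|\|\psi\|$, establishing boundedness of $\tilde F$.

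The operator $\tilde B$ is handled in exactly the same way. Its bilinear form has denominator $k_1^2+\ldots+k_4^2+z$, with one additional momentum variable, and total-momentum conservation again relates the arguments of $\hat\phi$ and $\hat\psi$. Cauchy--Schwarz in the single ``free'' momentum leaves a three-momentum integral that falls under the estimate used in the case $\nu=(3,4)$ of Proposition \ref{prop4.5} (i.e.\ a three-variable analogue of Lemma \ref{lm4.6} obtained by iterating Cauchy--Schwarz and the two-variable bound). The extra mass factor $m^{-1}$ in the scaling step matches the extra spatial dimension and produces the asserted $m^{-3}$-behaviour.

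The main obstacle I anticipate is numerical rather than structural: the sketch above only gives $\|\tilde F\|\le\pi^2/(2\pi)^2=1/4$, which is off from the lemma's claim by a factor $\sqrt 2$. Recovering the sharp constant $(4\sqrt 2)^{-1}$ requires retaining part of the $k_1^2$-contribution in the denominator during the Cauchy--Schwarz step, effectively interpolating between the Schur bound and a kinetic-energy estimate, and this is the step that I expect to require the most care. Finite-ness with some explicit constant is, by contrast, relatively routine, and any such constant is enough for all downstream applications of Lemma \ref{lmA3}.
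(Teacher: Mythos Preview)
Your approach is correct for establishing boundedness and is genuinely different from the paper's. The paper works entirely in position space: it applies the Schur test with the weight $h(x,y)=|x-y|^{-1}$, inserts the heat-kernel representation \eqref{DefGreen} of $G_z^d$, uses a rearrangement inequality to dominate a convolution with $|\,\cdot\,|^{-1}$, and then computes the resulting Gaussian integrals explicitly. This machinery is what produces the constant $(4\sqrt2)^{-1}$ exactly. Your route---reduce to $m=1$ by scaling, pass to Fourier space, and feed the resulting bilinear form into Lemma~\ref{lm4.6} via Cauchy--Schwarz---is precisely the mechanism behind Proposition~\ref{prop4.5}, so it has the virtue of unifying the bound on $F,B$ with the bound on the limiting operators $\Theta(z)_{\sigma\nu}$. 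The cost is the weaker constant $1/4$ that you correctly compute; your diagnosis that the loss comes from discarding $k_1^2$ before Cauchy--Schwarz is accurate, and your observation that the sharp constant is irrelevant downstream (only uniform boundedness of $\Lambda_\eps(z)_{\textup{off}}$ is used in Section~\ref{sec4.3}) is also correct. If you want to recover $(4\sqrt2)^{-1}$ without switching to the paper's method, the natural attempt is a weighted Schur test in momentum space rather than Cauchy--Schwarz, but there is no need to pursue this for the purposes of Theorem~\ref{theo1}.
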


\begin{proof}
The bounds for $\| F \|$ and $\| B \|$ are both based on the Schur test. In the case of $F$, 
the Schur test, in a general version, yields that 
\begin{align}
\| F \| &\leq \esssup_{x,y}\left(h(x,y)^{-1} \int \mspace{-3mu} \d{x'} \, \d{y'} \; h(x',y') K_F(x,y;x',y')  \right) \label{Schurtestgen}
\end{align}
for any measurable test function $h: \R^2 \times \R^2 \rightarrow (0,\infty)$, provided that the right side is finite.
We choose the test function $h(x,y):=|x-y|^{-1}$, so after scaling we may assume that $m=1$. To evaluate the right side of \eqref{Schurtestgen}, we insert the integral representation \eqref{DefGreen} of the Green's function  $G^{6}_{z}$ and we substitute $x-y' \rightarrow y'$. This results in the estimate
\begin{align}
 &\esssup_{x,y}\left(|x-y| \int \mspace{-3mu} \d{x'} \, \d{y'} \; |x'-y'|^{-1}\, G^{6}_{z}\left(x-x',x-y',y-x'\right) \right)\nonumber \\ &= \esssup_{x,y}\left(|x-y| \int \mspace{-3mu} \d{x'} \, \int\limits_0^{\infty} \mspace{-3mu} \d{t} \; (4\pi t)^{-3}
\exp\left( -\Scalefrac{1}{4t} \left((x-x')^2+(y-x')^2\right) -zt \right)  \right. \nonumber \\
&\quad \cdot \int \mspace{-3mu} \textup{d}y'  \; |y'+x'-x|^{-1} \exp\left(- \frac{|y'|^2}{4t} \right) \Biggs). \label{Schurtest1}
\end{align}
Using a rearrangement inequality (see \cite[Theorem 3.4]{AnalysisLL}), 
the last integral is bounded by
\begin{align}
 \int \mspace{-3mu} \textup{d}y'  \; |y'+x'-x|^{-1} \exp\left(- \frac{|y'|^2}{4t} \right) 
 & \leq  \int \mspace{-3mu} \textup{d}y'  \; |y'|^{-1} \exp\left(- \frac{|y'|^2}{4t} \right) \nonumber \\
 &= 2 \pi \int\limits_0^{\infty} \mspace{-3mu} \d{s} \; \exp\left(- \frac{s^2}{4t} \right) = 
 \sqrt{4\pi^3t}. \label{lastintbound}
\end{align}
For the remaining integral, we use
$(x-x')^2+(y-x')^2 = \Scalefrac{1}{2} \left( (2x'-x-y)^2+(x-y)^2 \right)$
together with the substitution  $2x'-x-y \rightarrow 2x'$. This yields that
\begin{align}
&\pi  \esssup_{x,y}\left(|x-y| \int \mspace{-3mu} \d{x'} \, \int\limits_0^{\infty} \mspace{-3mu} \d{t} \; (4\pi t)^{-5/2}
\exp\left( -\Scalefrac{1}{4t} \left((x-x')^2+(y-x')^2\right) -zt \right)\right)  \nonumber \\ &\;\;\;
\leq \pi \esssup_{x,y}\left(|x-y| \int \mspace{-3mu} \d{x'} \, \int\limits_0^{\infty} \mspace{-3mu} \d{t} \; (4\pi t)^{-5/2}
\exp\left( -\frac{|x'|^2}{2t} - \frac{(x-y)^2}{8t}  \right)\right) \nonumber \\
&\;\;\;= (16 \sqrt{\pi})^{-1}  \esssup_{x,y}\left(|x-y| \int\limits_0^{\infty} \d{t} \; t^{-3/2} \exp\left(- \frac{(x-y)^2}{8t}  \right) \right)\nonumber \\
&\;\;\;=   (4 \sqrt{2\pi})^{-1} \int \limits_{0}^{\infty} \d{t} \; t^{-3/2} \exp\left( - t^{-1}\right). \label{intid1}
\end{align}
In the last equality, the $\esssup$ cancels after the scaling $t\rightarrow (x-y)^2t/8$. 
The remaining integral can be explicitly computed with the help of the substitution $\widetilde{t}:=t^{-1}$. We find that
\begin{align}
 \int \limits_{0}^{\infty} \d{t} \; t^{-3/2} \exp\left( - t^{-1} \right)  = 
 \int \limits_{0}^{\infty} \d{\tilde{t}} \; \tilde{t}^{-1/2} \exp\left(-\tilde{t}\,\right) = \Gamma(1/2)= \sqrt{\pi}, \label{Gammaint}
\end{align}
where $\Gamma$ denotes the Gamma function. Using \eqref{lastintbound}, \eqref{intid1} and \eqref{Gammaint} to bound the right side of \eqref{Schurtest1}, the Schur test yields that $F$ is bounded and its norm satisfies the desired estimate.

The proof in case of the operator $B$ is similar, but for the convenience of the reader we go into details here. First, applying the Schur test with the test function $h(y,w)=|y-w|^{-1}$ gives
\begin{align}
\| B \| &\leq \esssup_{x,y,w}\left(|y-w| \int \mspace{-3mu} \d{x'} \, \d{y'}\, \d{w'} \; |y'-w'|^{-1} K_B(x,y,w;x',y',w')  \right), \nonumber
\end{align}
provided that the right side is finite. After scaling, we may again assume that $m=1$.
Next, inserting the integral formula \eqref{DefGreen} for $G^{8}_{z}$ and substituting $x-y' \rightarrow y'$, it follows that
\begin{align}
\| B \| &\leq  \esssup_{x,y,w} \mspace{-2mu}\left(|y-w| \int \mspace{-3mu} \d{x'} \,\d{w'} \int\limits_0^{\infty} \mspace{-3mu} \d{t} \; (4\pi t)^{-4}
\exp\left( - \Scalefrac{1}{4t} \left[(x-w')^2+(y-x')^2+ (w-x')^2\right]\right)  \right. \nonumber \\
&\quad \cdot \int \mspace{-3mu} \textup{d}y'  \; |x-y'-w'|^{-1} \exp\left(- \frac{|y'|^2}{4t} \right) \Biggs). \label{Schurtest2}
\end{align}
Similarly to \eqref{lastintbound}, a rearrangement inequality now yields that
\begin{align}
 \int \mspace{-3mu} \textup{d}y'  \; |x-y'-w'|^{-1} \exp\left(- \frac{|y'|^2}{4t} \right) 
 &\leq \sqrt{4\pi^3t}. \label{lastintbound2}
\end{align}
In the remaining integral, we use $(y-x')^2+(w-x')^2 = \Scalefrac{1}{2} \left( (2x'- w- y)^2+(y-w)^2 \right) $
in combination with the substitutions  $2x'- w-  y \rightarrow 2x'$ and $x-w' \rightarrow w'$. This leads to
\begin{align}
&\pi \esssup_{x,y,w}\mspace{-2mu}\left(|y-w| \int \mspace{-3mu}  \d{x'} \,\d{w'} \int\limits_0^{\infty} \mspace{-3mu} \d{t} \; (4\pi t)^{-7/2}
\exp\mspace{-2mu}\left( - \Scalefrac{1}{4t} \left[(x-w')^2+(y-x')^2+  (w-x')^2\right]\right)  \mspace{-2mu}  \right)  \nonumber \\ &\;= \pi \esssup_{x,y,w}\mspace{-2mu}\left(|y-w| \int \mspace{-3mu}  \d{x'} \,\d{w'} \int\limits_0^{\infty} \mspace{-3mu} \d{t} \; (4\pi t)^{-7/2}
\exp\mspace{-2mu}\left(- \Scalefrac{1}{4t} \left(|w'|^2+2|x'|^2\right) - \frac{(y-w)^2}{8t}  \right) \mspace{-2mu} \right) \nonumber \\
&\;= (16\sqrt{\pi} )^{-1}  \esssup_{y,w}\left(|y-w| \int\limits_0^{\infty} \d{t} \; t^{-3/2} \exp\left(- \frac{(y-w)^2}{8t}  \right) \right)\nonumber \\
&\; =  (4 \sqrt{2\pi})^{-1} \int \limits_{0}^{\infty} \d{t} \; t^{-3/2} \exp\left( - t^{-1} \right), \label{intid2}
\end{align}
where the $\esssup$ vanishes after the scaling $t\rightarrow (y-w)^2t/8$. 
By \eqref{Gammaint}, the last integral in \eqref{intid2} is equal to $\sqrt{\pi}$, so the desired estimate for $\| B\|$ follows from \eqref{Schurtest2}, \eqref{lastintbound2} and \eqref{intid2}.
\end{proof}

\section{Konno-Kuroda formula} \label{secB}
In this section we sketch the proof of the Konno-Kuroda resolvent identity, see \cite[Eq. (2.3)]{KoKu1966}, for operators of the type \eqref{AbstractHam}. The main difference between Theorem \ref{thm:KK} below and \cite{KoKu1966} is that we do not assume that $\phi(z)$ defined by \eqref{Defphi(z)} extends to a compact operator for some (and hence all) $z \in \rho(H_0)$.

Let $\HH$ and $\widetilde{\XX}$ be arbitrary complex Hilbert spaces, let $H_0\geq 0$ be a self-adjoint operator in $\HH$ and let $A:D(A)\subset\HH\to \widetilde{\XX}$ be densely defined and closed with $D(A)\supset D(H_0)$. Let $J\in \LL(\widetilde{\XX})$ be a self-adjoint isometry and let $B=JA$. Suppose that $BD(H_0)\subset D(A^{*})$ and that $A^{*}A$ and $A^{*}B$ are $H_0$-bounded with relative bound less than one. Then  
\begin{equation}\label{AbstractHam}
      H = H_0 - A^{*}B
\end{equation}
is self-adjoint on $D(H_0)$ by the Kato-Rellich Theorem. For $z\in \rho(H_0)$, let $R_0(z):=(H_0+z)^{-1}$ and let $\phi(z):D(A^{*})\subset \widetilde{\XX} \to \widetilde{\XX}$ be defined by
\begin{align}
\phi(z) := BR_0(z)A^{*}. \label{Defphi(z)}
\end{align}
Note that $D(A^{*})\subset \widetilde\XX$ is dense because $A$ is closed.  The resolvent $(H+z)^{-1}$ and the operator $\phi(z)$ are related by the following theorem:

\begin{theorem}\label{thm:KK}
Let the above hypotheses be satisfied and let $z\in \rho(H_0)$. Then $\phi(z)$ is a bounded operator. The operator $1-\phi(z)$ is invertible if and only if $z\in \rho(H_0)\cap\rho(H)$, and then
\begin{align}
    (H + z)^{-1} &= R_0(z) + R_0(z)A^{*} (1-\phi(z))^{-1} BR_0(z)\label{HResolvent}\\
   (1-\phi(z))^{-1} &= 1+ B(H+z)^{-1}A^{*}.\label{phi(z)Resolvent}
\end{align}
\end{theorem}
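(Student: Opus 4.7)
My plan is to adapt the classical Konno--Kuroda argument to the present setting, where $\phi(z)$ is not assumed to be compact. The proof naturally splits into three stages: (i) boundedness of $\phi(z)$, together with the stronger range inclusion $\phi(z)\widetilde{\XX}\subset D(A^*)$; (ii) the forward implication that $z\in\rho(H_0)\cap\rho(H)$ yields invertibility of $1-\phi(z)$ together with \eqref{HResolvent} and \eqref{phi(z)Resolvent}; and (iii) the converse direction.

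For stage (i), the $H_0$-boundedness with relative bound less than one of $A^*A$ and $A^*B$, together with the Kato--Rellich Theorem, implies that $D(H)=D(H_0)$ with equivalent graph norms, and that the operators $AR_0(z),BR_0(z)\colon\HH\to\widetilde{\XX}$ and $(A^*B)R_0(z)\colon\HH\to\HH$ are all bounded. By duality, $R_0(\bar z)A^*$ extends to a bounded operator $\widetilde{\XX}\to\HH$. The boundedness of $\phi(z)=BR_0(z)A^*$ on $D(A^*)$ together with the key fact that its range lies in $D(A^*)$ would follow from a closure argument: for $\xi\in\widetilde{\XX}$ approximated by $\xi_n\in D(A^*)$, the identity $A^*BR_0(z)A^*\xi_n=(A^*B)R_0(z)A^*\xi_n$ allows control of $A^*\phi(z)\xi_n$ via the bounded extensions, and the closedness of $A^*$ then yields the limit in $D(A^*)$. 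An analogous argument shows that $T(z):=B(H+z)^{-1}A^*$, initially defined on $D(A^*)$, extends to a bounded operator on $\widetilde{\XX}$.

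For stage (ii), I would assume $z\in\rho(H_0)\cap\rho(H)$ and set $R:=(H+z)^{-1}$. Since $H-H_0=-A^*B$ is Kato-small, the second resolvent identity
\[
R = R_0 + R_0 A^*BR = R_0 + RA^*BR_0
\]
holds on $\HH$, with $BR\colon\HH\to D(A^*)$ well-defined because $R\HH=D(H_0)$ and $BD(H_0)\subset D(A^*)$. Applying $B$ on the left yields $(1-\phi(z))BR=BR_0$, and applying $A^*$ on the right, together with the second form of the resolvent identity, yields $(1-\phi(z))T(z)=\phi(z)=T(z)(1-\phi(z))$ on $D(A^*)$. A direct substitution then gives $(1-\phi(z))(1+T(z))=(1+T(z))(1-\phi(z))=1$ on $D(A^*)$, and by boundedness and density this extends to $\widetilde{\XX}$. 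Hence $1-\phi(z)$ is invertible with $(1-\phi(z))^{-1}=1+T(z)=1+B(H+z)^{-1}A^*$, which is \eqref{phi(z)Resolvent}; substituting $BR=(1-\phi(z))^{-1}BR_0$ back into the resolvent identity yields \eqref{HResolvent}.

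For stage (iii), I would assume $(1-\phi(z))^{-1}\in\LL(\widetilde{\XX})$ and define $R$ by the right-hand side of \eqref{HResolvent}, interpreted through the bounded extensions from stage (i). For $\psi\in D(H_0)$, the identity $BR_0(z)A^*B\psi=\phi(z)B\psi$ (well-defined because $B\psi\in D(A^*)$) together with direct manipulation yields
\[
R(H+z)\psi = \psi + R_0(z)A^*\bigl[(1-\phi(z))^{-1}(1-\phi(z))-1\bigr]B\psi = \psi,
\]
so $H+z$ is injective with left inverse $R$. For surjectivity, one must check $R\HH\subset D(H_0)$ and $(H+z)R\eta=\eta$; here the range inclusion $\phi(z)\widetilde{\XX}\subset D(A^*)$ from stage (i) is essential, since via $(1-\phi(z))^{-1}=1+\phi(z)(1-\phi(z))^{-1}$ it implies that $(1-\phi(z))^{-1}$ preserves $D(A^*)$. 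Because $BR_0(z)\eta\in BD(H_0)\subset D(A^*)$, we conclude $(1-\phi(z))^{-1}BR_0(z)\eta\in D(A^*)$, whence $R\eta\in D(H_0)$, and a dual computation gives $(H+z)R\eta=\eta$. The main obstacle will be the range inclusion $\phi(z)\widetilde{\XX}\subset D(A^*)$ in stage (i): this strengthening of mere boundedness is what allows us to dispense with the compactness hypothesis of the original Konno--Kuroda proof, and it is the crucial ingredient that makes the surjectivity check in stage (iii) go through.
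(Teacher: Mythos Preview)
Your stages (i) (boundedness part) and (ii) are essentially the paper's Steps~1--3 and are fine. The problem is the strengthened range inclusion $\overline{\phi(z)}\,\widetilde{\XX}\subset D(A^*)$ that you announce in stage~(i) and then use critically in stage~(iii). Your closure argument does not establish it: you write $A^*\phi(z)\xi_n=(A^*B)R_0(z)A^*\xi_n$ and appeal to ``bounded extensions'', but the only available factorizations are $[(A^*B)R_0(z)]\cdot A^*\xi_n$ and $(A^*B)\cdot[R_0(z)A^*\xi_n]$. In the first, $A^*\xi_n$ need not converge as $\xi_n\to\xi$ in $\widetilde{\XX}$ (only $\xi_n\to\xi$, not in graph norm). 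In the second, $R_0(z)A^*\xi_n\to(AR_0(\bar z))^*\xi$ converges in $\HH$, but $A^*B$ is only $H_0$-bounded, not bounded on $\HH$, and the limit $(AR_0(\bar z))^*\xi$ lies in $D(H_0^{1/2})$, not $D(H_0)$. The hypothesis $BD(H_0)\subset D(A^*)$ gives you no control over $BD(H_0^{1/2})$. So the closedness-of-$A^*$ step has no convergent sequence to feed into it, and stage~(iii) collapses with it.

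The paper avoids this by never asserting that $\overline{\phi(z)}$ maps all of $\widetilde{\XX}$ into $D(A^*)$. Instead, since $H$ is self-adjoint, $i\in\rho(H_0)\cap\rho(H)$, so stage~(ii)/Step~3 already gives $(1-\phi(i))^{-1}=1+M(i)$, which manifestly preserves $D(A^*)$ because $M(i)=B(H+i)^{-1}A^*$ and $BD(H_0)\subset D(A^*)$. For general $z\in\rho(H_0)$ with $1-\phi(z)$ invertible, one writes the algebraic identity
\[
(1-\phi(z))^{-1}=(1-\phi(i))^{-1}+(i-z)\,(1-\phi(i))^{-1}\,BR_0(i)\,(AR_0(\bar z))^*\,(1-\phi(z))^{-1},
\]
and reads off that $(1-\phi(z))^{-1}$ preserves $D(A^*)$: the second term lands in $D(A^*)$ because $BR_0(i)\colon\HH\to D(A^*)$ and $(1-\phi(i))^{-1}$ preserves $D(A^*)$, while the first term does so on inputs from $D(A^*)$ (which is all you need, since $BR_0(z)\psi\in D(A^*)$). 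This bootstrap from a single good point $z=i$ replaces your unproven global range inclusion.
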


\noindent \emph{Remark.} Note that $(1-\phi(z))^{-1}$ leaves $D(A^{*})$ invariant. This follows from \eqref{phi(z)Resolvent} and from the assumption $BD(H_0)\subset D(A^{*})$.

\begin{proof} $ $ \medskip 

\noindent \emph{Step 1.}  $AR_0(c)^{1/2}$ is bounded for $c>0$, and  $A(H+c)^{-1/2}$ is bounded for $c>0$ large enough. 

As $A^{*}A$ is $H_0$-bounded with relative bound less than one, the operator $H_0-A^*A$ is bounded from below. This implies that, for all $\psi\in D(H_0)$ and all $c>0$,
$$
     \|A\psi\|^2 \leq \|(H_0+c)^{1/2}\psi\|^2 + C \|\psi\|^2 \nonumber
$$
with some constant $C>0$. Due to the fact that $D(H_0)$ is dense in $D(H_0^{1/2})$ and $A$ is closed, this bound extends to all $\psi\in D(H_0^{1/2})$ by an approximation argument. In particular, $D(A)\supset D(H_0^{1/2})$ and the first statement of Step 1 follows. 
The second statement is a consequence of the first and the fact that $H$ and $H_0$ have equivalent form norms, which implies that $D(H^{1/2}) = D(H_0^{1/2})$.\medskip

\noindent \emph{Step 2.} If $z\in \rho(H_0)$, then $\phi(z)$ is a bounded operator, and if $z\in \rho(H)$, then 
$$
     M(z)  := B(H+z)^{-1}A^{*} \nonumber
$$
is a bounded operator too.

This easily follows from Step 1 and from the first resolvent identity.\medskip

\noindent \emph{Step 3.} If $z\in \rho(H_0)\cap \rho(H)$, then $(1-\phi(z))$ is invertible and $1+M(z)=(1-\phi(z))^{-1}$.

Both $\phi(z)$ and $M(z)$ leave $D(A^{*})$ invariant and on this subspace, by straightforward computations using the second resolvent identity, $(1-\phi(z)) (1+M(z)) = 1 = (1+M(z)) (1-\phi(z))$.\medskip

\noindent \emph{Step 4.} If $z\in \rho(H_0)$ and $1-\phi(z)$ is invertible, then $z\in \rho(H)$,  $ (1-\phi(z))^{-1}$ leaves $D(A^{*})$ invariant and \eqref{HResolvent} holds.

By Step 3, $ (1-\phi(i))^{-1} = 1+M(i)$, which leaves $D(A^{*})$ invariant. Now suppose that $z\in \rho(H_0)$ and
that $1-\phi(z)$ has a bounded inverse. Then
\begin{align*}
(1-\phi(z))^{-1}  &=   (1-\phi(i))^{-1} +   (1-\phi(i))^{-1} (\phi(z)-\phi(i))  (1-\phi(z))^{-1}\\
&=(1-\phi(i))^{-1} + (i-z)  (1-\phi(i))^{-1} BR_0(i)(AR_0(\bar z))^{*} (1-\phi(z))^{-1}.\nonumber
\end{align*}
Since $BR_0(i): \HH \rightarrow D(A^{*})$, it follows that $(1-\phi(z))^{-1}$ leaves $D(A^{*})$ invariant as well, and
$$
    R(z) := R_0(z) + R_0(z)A^{*} (1-\phi(z))^{-1} BR_0(z) \nonumber
$$
is well-defined. Now it is a matter of straightforward computations to show that $(H+z)R(z)=1$ on $\HH$ and that $R(z)(H+z)=1$ on $D(H)$. 
\end{proof}

Assume that $\widetilde{\XX}=\bigoplus_{i=1}^{I} \widetilde{\XX}_i$, where $I \in \N$ and $\widetilde{\XX}_i$ are complex Hilbert spaces for $i=1,...,I$. Then we consider operators of the more general form
\begin{align}
H = H_0 - \sum_{i=1}^{I}  g_{i}\, A_{i}^{*}J_{i}A_{i}, 
\label{DefHgen}
\end{align}
where $0 \neq g_i \in \R$, $A_i:D(A_i)\subset\HH\to \widetilde{\XX}_i$ are densely  defined and closed with $D(A_i)\supset D(H_0)$ and $J_i$ are self-adjoint isometries in $\widetilde{\XX}_i$ for $i=1,...,I$.
Suppose that $J_iA_i D(H_0)\subset D(A_i^{*})$ and that $A_i^{*}A_i$ and $A_i^{*}J_iA_i$ are infinitesimally $H_0$-bounded for $i=1,...,I$, which ensures that $H$ is self-adjoint on $D(H_0)$. Now, observe that operators of the form \eqref{DefHgen} can be reduced to the form \eqref{AbstractHam}, where $A:D(A)\subset\HH\to \widetilde{\XX}$ is the closure of the operator $A_0: D(H_0) \rightarrow \widetilde{\XX}$ defined by $A_0\psi=(|g_i|^{1/2}A_i\psi)_{i=1}^{I}$  and $J \in \LL(\widetilde{\XX})$ is defined by $J(\psi_i)_{i=1}^I=(\sgn(g_i)J_i\psi_i)_{i=1}^{I}$. Hence, Theorem \ref{thm:KK} implies the following corollary:

\begin{kor}\label{kor:KK}
Let the above hypotheses be satisfied and let $z\in \rho(H_0)$. Then
\begin{align}
\Lambda(z)_{ij}:=g_i^{-1}\delta_{ij}-J_i A_i R_0(z) A_j^{*}, \qquad \quad i,j=1,...,N \nonumber
\end{align}
defines a bounded operator $\Lambda(z)=(\Lambda(z)_{ij})_{i,j=1}^{I}$ in $\widetilde{\XX}$. Moreover, $\Lambda(z)$ is invertible if and only if $z\in \rho(H_0)\cap\rho(H)$, and then \begin{align}
(H + z)^{-1} &= R_0(z) + \sum_{i,j=1}^{I} \left(A_{i}R_0(\overline{z})\right)^{*} (\Lambda(z)^{-1})_{ij}\, J_{j} A_{j} R_0(z),\nonumber\\
\Lambda(z)^{-1} &= \left( g_i \delta_{ij}+ g_ig_jJ_iA_i(H+z)^{-1}A_j^{*}\right)_{i,j=1}^I.\nonumber
\end{align}
\end{kor}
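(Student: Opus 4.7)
The strategy is to reduce this corollary to Theorem \ref{thm:KK} by bundling the individual data $(A_i,J_i,g_i)_{i=1}^I$ into a single triple $(A,J,\cdot)$ on the orthogonal sum $\widetilde{\XX}=\bigoplus_i \widetilde{\XX}_i$. First, I would define $A$ as the closure of $A_0:D(H_0)\to \widetilde\XX$ given by $A_0\psi=(|g_i|^{1/2}A_i\psi)_{i=1}^I$, which is closable since each $A_i$ is closed, and I would define $J\in\LL(\widetilde{\XX})$ by $J(\eta_i)_i=(\sgn(g_i)J_i\eta_i)_i$, which is a self-adjoint isometry because each $J_i$ is and $\sgn(g_i)^2=1$. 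A short computation then gives $A^{*}B=A^{*}JA=\sum_i g_i A_i^{*}J_iA_i$, so $H=H_0-A^{*}B$ has exactly the form \eqref{AbstractHam}.

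Next I would verify the hypotheses of Theorem \ref{thm:KK}: (i) $D(A)\supset D(H_0)$ is immediate from the definition of $A_0$; (ii) infinitesimal $H_0$-boundedness of $A^{*}A=\sum_i|g_i|A_i^{*}A_i$ and of $A^{*}B=\sum_i g_iA_i^{*}J_iA_i$ follows from the corresponding hypotheses on the summands, since finite sums of infinitesimally $H_0$-bounded operators remain infinitesimally $H_0$-bounded; (iii) $BD(H_0)\subset D(A^{*})$ follows because for $\psi\in D(H_0)$ and $\psi'\in D(A)$ the pairing $\langle B\psi,A\psi'\rangle=\sum_i g_i\langle A_i^{*}J_iA_i\psi,\psi'\rangle$ is bounded in $\psi'$ by hypothesis. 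Theorem \ref{thm:KK} then yields that $\phi(z):=BR_0(z)A^{*}$ is bounded for $z\in\rho(H_0)$, and that $z\in\rho(H_0)\cap\rho(H)$ if and only if $1-\phi(z)$ is boundedly invertible, in which case
\begin{align*}
(H+z)^{-1} &= R_0(z) + R_0(z)A^{*}(1-\phi(z))^{-1}BR_0(z), \\
(1-\phi(z))^{-1} &= 1 + B(H+z)^{-1}A^{*}.
\end{align*}

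The remaining step is purely algebraic: translate these formulas back into the componentwise statements of the corollary. Writing out matrix elements, one finds
\begin{equation*}
\phi(z)_{ij} = \sgn(g_i)|g_i|^{1/2}|g_j|^{1/2}\,J_iA_iR_0(z)A_j^{*},
\end{equation*}
so, with $D$ the block-diagonal operator multiplying the $i$th component by $|g_i|^{1/2}$ and $S$ the block-diagonal self-adjoint unitary $S(\eta_i)=(\sgn(g_i)\eta_i)$, a direct inspection yields the conjugation identity $1-\phi(z)=SD\,\Lambda(z)\,D$. Since $D$ and $S$ are bounded with bounded inverses, the invertibility of $1-\phi(z)$ is equivalent to that of $\Lambda(z)$, and $(1-\phi(z))^{-1}=D^{-1}\Lambda(z)^{-1}D^{-1}S$. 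Substituting this into the two resolvent identities above and using $R_0(z)A_i^{*}=(A_iR_0(\overline z))^{*}$ gives precisely the two formulas claimed in Corollary \ref{kor:KK}.

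I do not expect a genuine obstacle here: the content of the corollary is essentially bookkeeping once Theorem \ref{thm:KK} is in hand. The most error-prone step, and the one I would check most carefully, is the conjugation identity $1-\phi(z)=SD\Lambda(z)D$ and its counterpart $(\Lambda(z)^{-1})_{ij}=g_i\delta_{ij}+g_ig_jJ_iA_i(H+z)^{-1}A_j^{*}$, because the signs $\sgn(g_i)$ must be tracked consistently through the similarity transformation.
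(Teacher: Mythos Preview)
Your proposal is correct and follows exactly the approach the paper takes: the paper's proof consists solely of the observation, stated just before the corollary, that the multi-component operator \eqref{DefHgen} reduces to the single-operator form \eqref{AbstractHam} via the bundling $A_0\psi=(|g_i|^{1/2}A_i\psi)_i$ and $J(\psi_i)_i=(\sgn(g_i)J_i\psi_i)_i$, after which Theorem~\ref{thm:KK} applies. You have simply spelled out the bookkeeping (in particular the conjugation $1-\phi(z)=SD\,\Lambda(z)\,D$) that the paper leaves implicit.
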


\noindent
\textbf{Acknowledgement.} Initially, Ulrich Linden was involved in the project as well. He was the first of us to sit down and do some preliminary computations and estimates in the special case of a 2d Bose gas.  This helped tremendously to get the project off the ground. - M.~H.~was supported partially by the \emph{Deutsche Forschungsgemeinschaft (DFG)} through the Research Training Group 1838: \emph{Spectral Theory and Dynamics of Quantum Systems}.


\end{document}